    \newtheorem{aaaaa}{Do not use!}
    \newtheorem{corollary}[aaaaa]{Corollary}
    \newtheorem{definition}[aaaaa]{Definition}
    \newtheorem{example}[aaaaa]{Example}
    \newtheorem{lemma}[aaaaa]{Lemma}
    \newtheorem{remark}[aaaaa]{Remark}
    \newtheorem{proposition}[aaaaa]{Proposition}
    \newtheorem{theorem}[aaaaa]{Theorem}
\newcommand{\floor}[1]{\lfloor{#1}\rfloor}
\renewcommand{\tilde}[1]{\widetilde{#1}}
\newtheorem{Algorithm}{Algorithm}
\begin{document}

\title{Weighted Reed-Muller codes revisited}

\author{Olav Geil and Casper Thomsen\\
    \normalfont Department of Mathematical Sciences, 
    Aalborg University, Denmark\\%
    \url{olav@math.aau.dk}, 
    \url{caspert@math.aau.dk}}

\maketitle  

\begin{abstract}
We consider weighted Reed-Muller codes over point ensemble $S_1 \times
\cdots \times S_m$ where $S_i$ needs not be of the same size as
$S_j$. For $m=2$ we determine optimal weights and analyze in detail
what is the impact of the ratio $|S_1|/|S_2|$ on the minimum
distance. In conclusion the weighted Reed-Muller code construction is
much better than its reputation. For a class of affine variety codes that contains the
weighted Reed-Muller codes we then present two list decoding
algorithms. With a small modification one of these algorithms is able
to correct up to $31$ errors of the $[49,11,28]$ Joyner code.\\

\noindent\textbf{Keywords.} Affine variety codes, list decoding, weighted Reed-Muller codes 
\end{abstract}
\section{Introduction}\label{secintro}
Weighted Reed-Muller codes were introduced by S{\o}rensen
in~\cite{abs}. In his paper he demonstrates that they are subcodes of
$q$-ary Reed-Muller codes of the same minimum distance and it is
therefore not surprising that not much attention has been given to
them since. In the present paper we consider the above two code
constructions in a slightly more general setting as we allow any
point ensemble ${\mathcal{S}}=S_1\times \cdots \times S_m$, $S_1, \ldots , S_m
\subseteq {\mathbf{F}}_q$. Other authors have considered $q$-ary
Reed-Muller codes in this setting, but nobody seems to have recognized
that for such point ensembles weighted Reed-Muller codes are often
superior. We shall derive a number of results regarding their
efficiency and define what we call optimal weighted Reed-Muller
codes in two variables.\\
We argue that the dual codes are exactly as efficient and that they can be decoded up to half the
designed minimum distance by known decoding algorithms.\\
We then turn
to the decoding of weighted Reed-Muller codes. The first decoding
algorithm that we present utilizes the fact that the codes under
consideration can be viewed as subfield subcodes of certain
Reed-Solomon codes. This algorithm is a straightforward
generalization of Pellikaan and Wu's list decoding
algorithm~\cite{pw_ieee}. The second decoding algorithm that we present is
a more direct interpretation of the Guruswami-Sudan list decoding
algorithm. We are by no means the first authors to consider such an approach
for multivariate codes (see~\cite{pw_ieee}, \cite{augot},
\cite{augotstepanov}). Our contribution is that we develop a method
for deriving improved information on how many zeros of prescribed
multiplicity a multivariate polynomial can have given information about
its leading monomial with respect to the lexicographic ordering. Using
such information and allowing the decoding algorithm to perform a
preparation step we develop an improved algorithm. For some optimal
weighted Reed-Muller codes the first decoding algorithm of the paper
is quite good, for others the latter is the best.\\
Weighted Reed-Muller
codes are examples of a particular class of affine variety
codes. Whenever possible we state our findings for this more general
class of codes. As a bonus we find that when equipped with a small
trick the subfield subcode decoding algorithm can decode the Joyner
codes~\cite[Ex.\ 3.9]{joyner} beyond its minimum distance even though
till now this
code has resisted even minimum distance decoding.
{\section{A class of affine variety codes}\label{secto}}
Given
$${\mathcal{S}}=S_1 \times \cdots \times S_m=\{P_1, \ldots ,P_{|{\mathcal{S}}|}\}$$ 
write $n=|{\mathcal{S}}|$ and consider the evaluation map 
$${\mbox{ev}}_{\mathcal{S}}:{\mathbf{F}}_q[X_1,
\ldots ,X_m] \rightarrow {\mathbf{F}}_q^{n}, {\mbox{ \ \ \ \ }}{\mbox{ev}}_{\mathcal{S}}(F)=(F(P_1),
\ldots , F(P_n)).$$ 
Let 
$${\mathbb{M}}\subseteq \{X_1^{i_1} \cdots X_m^{i_m} \mid 0 \leq i_j <
|S_j|, j=1, \ldots , m\}$$
and define the affine variety code
$$E({\mathbb{M}},{\mathcal{S}})={\mbox{Span}}_{{\mathbf{F}}_q}\{{\mbox{ev}}_{\mathcal{S}}(M)
\mid M \in {\mathbb{M}} \}.$$
Throughout the paper we use the notation
$s_i=|S_i|$ for $i=1, \ldots , m$. If not explicitly stated we shall always 
assume that the enumeration is made such that $s_1 \geq \cdots \geq s_m$ holds. In the special case
that $S_1=\cdots=S_m$ we write ${\mathcal{S}}=S \times
\cdots \times S$ and $s=|S|$. 
We first show
how to find the dimension of the code. 
\begin{proposition}
The dimension of $E({\mathbb{M}},{\mathcal{S}})$ equals $|{\mathbb{M}}|$.
\end{proposition}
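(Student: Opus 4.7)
The plan is to show that the spanning set $\{\mathrm{ev}_{\mathcal{S}}(M)\mid M\in\mathbb{M}\}$ is actually linearly independent, so that its cardinality $|\mathbb{M}|$ is the dimension. The upper bound $\dim E(\mathbb{M},\mathcal{S})\le|\mathbb{M}|$ is immediate from the definition as a span, so the content is the lower bound.

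First I would reduce to the maximal case. Let $\mathbb{M}_{\max}=\{X_1^{i_1}\cdots X_m^{i_m}\mid 0\le i_j<s_j,\ j=1,\ldots,m\}$, which has cardinality $s_1\cdots s_m=n$. Since $\mathbb{M}\subseteq\mathbb{M}_{\max}$, it suffices to prove that $\{\mathrm{ev}_{\mathcal{S}}(M)\mid M\in\mathbb{M}_{\max}\}$ is a basis of $\mathbf{F}_q^n$; linear independence then passes to any subset.

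To prove this, I would use Lagrange interpolation on the product set. For each point $P=(a_1,\ldots,a_m)\in\mathcal{S}$ define
$$L_P(X_1,\ldots,X_m)=\prod_{j=1}^{m}\prod_{\substack{b\in S_j\\ b\ne a_j}}\frac{X_j-b}{a_j-b}.$$
A direct check shows $L_P(Q)=\delta_{P,Q}$ for all $Q\in\mathcal{S}$, so the $n$ vectors $\mathrm{ev}_{\mathcal{S}}(L_P)$ form the standard basis of $\mathbf{F}_q^n$. Each $L_P$ has multidegree bounded by $(s_1-1,\ldots,s_m-1)$, hence lies in the $\mathbf{F}_q$-span of $\mathbb{M}_{\max}$. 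Therefore $\{\mathrm{ev}_{\mathcal{S}}(M)\mid M\in\mathbb{M}_{\max}\}$ spans $\mathbf{F}_q^n$; since it has exactly $n$ elements, it is a basis. In particular its restriction to $\mathbb{M}$ is linearly independent, which yields $\dim E(\mathbb{M},\mathcal{S})\ge|\mathbb{M}|$.

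There is no real obstacle here; the only thing one must be careful about is not to confuse the statement with a claim about general sets of monomials (the bound $i_j<s_j$ is essential, otherwise $X_j^{s_j}-X_j\prod_{a\in S_j\setminus\{0\}}(1-\cdots)$-type relations would reduce the dimension). The multidegree restriction in the definition of $\mathbb{M}$ is precisely what guarantees the monomials represent distinct elements in the quotient by the vanishing ideal of $\mathcal{S}$, and hence have independent evaluations.
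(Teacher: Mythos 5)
Your proof is correct and follows essentially the same route as the paper: both reduce to showing that the evaluations of all monomials $X_1^{i_1}\cdots X_m^{i_m}$ with $i_j<s_j$ form a basis of $\mathbf{F}_q^n$, and both do so by exhibiting Lagrange interpolation polynomials. The only difference is that you interpolate directly over the sets $S_j$, so each $L_P$ already satisfies $\deg_{X_j}L_P<s_j$, whereas the paper interpolates over all of $\mathbf{F}_q$ and then reduces modulo $\prod_{a\in S_j}(X_j-a)$; yours is a slight streamlining of the same argument.
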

\begin{proof}
We only need to show that 
$$\{ {\mbox{ev}}_{\mathcal{S}}(X_1^{i_1}, \ldots , X_m^{i_m}) \mid 
 0 \leq i_j <
s_j, j=1, \ldots , m\}$$
constitutes a basis for ${\mathbf{F}}_q^{n}$ as a vectorspace over
${\mathbf{F}}_q$. For this purpose it is sufficient to show that the 
restriction of ${\mbox{ev}}_{\mathcal{S}}$ to 
\begin{eqnarray}
\{ G(X_1, \ldots , X_m) \mid \deg_{X_i} G < s_i, i=1, \ldots ,m\} \label{equi}
\end{eqnarray}
is surjective. Given $(a_1, \ldots ,a_{n})\in {\mathbf{F}}_q^{n}$ let
$$F(X_1, \ldots , X_m)=\sum_{v=1}^n a_v \prod_{i=1}^m \prod_{a \in {\mathbf{F}}_q
  \backslash \{P_i^{(v)}\}}\bigg( \frac{X_i-a}{P_i^{(v)}-a} \bigg).$$
Here, we have used the notation $P_v=(P_1^{(v)}, \ldots , P_n^{(v)})$,
$v=1, \ldots , n$. 
It is clear that ${\mbox{ev}}_{\mathcal{S}}(F)=(a_1, \ldots , a_n)$ and therefore
${\mbox{ev}}_{\mathcal{S}} : {\mathbf{F}}_q [X_1, \ldots , X_m] \rightarrow
{\mathbf{F}}_q^n$ is surjective. Consider an arbitrary monomial
ordering. Let $R(X_1, \ldots , X_m)$ be the remainder of $F(X_1,
\ldots ,X_m)$ after division with
$$\{ \prod_{a \in S_1}(X_1-a), \ldots ,\prod_{a \in S_m}(X_m-a)\}.$$ 
Clearly, $F(P_i)=R(P_i)=a_i$, $i=1, \ldots , n$. Hence, the
restriction of ${\mbox{ev}}_{\mathcal{S}}$ to~(\ref{equi}) is indeed surjective.
\end{proof}
We next show how to estimate the minimum distance of
$E({\mathbb{M}},{\mathcal{S}})$. The Schwartz-Zippel bound \cite{Schwartz,Zippel,DeMilloLipton} is as follows:
\begin{theorem}\label{thorgsz}
Given a lexicographic ordering let the leading monomial of $F(X_1,
\ldots , X_m)$ be $X_1^{i_1}\cdots X_m^{i_m}$.
The number of elements in
${\mathcal{S}}=S_1 \times \cdots \times S_m$ that are zeros of $F$ is at most
equal to 
$$i_1s_2\cdots s_m+s_1 i_2 s_3 \cdots s_m+\cdots
+s_1\cdots s_{m-1} i_m.$$
\end{theorem}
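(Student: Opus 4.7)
The plan is to proceed by induction on the number of variables $m$, exploiting that the lex order is ``tower-like'' with $X_1$ as the dominant variable. The base case $m=1$ is immediate: having leading monomial $X_1^{i_1}$ means $\deg F = i_1$, so $F$ has at most $i_1$ zeros in $S_1$, matching the stated bound.

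For the inductive step I would regard $F$ as a polynomial in $X_1$ over $\F_q[X_2,\ldots,X_m]$, writing
$$F(X_1,\ldots,X_m) = \sum_{j=0}^{i_1} X_1^j F_j(X_2,\ldots,X_m).$$
The sum truncates at $j=i_1$ because $X_1$ is the greatest variable in lex and the leading monomial has $X_1$-degree exactly $i_1$; for the same reason $F_{i_1}\neq 0$ and the leading monomial of $F_{i_1}$ (under the induced lex order on $X_2,\ldots,X_m$) is exactly $X_2^{i_2}\cdots X_m^{i_m}$. Applying the inductive hypothesis to $F_{i_1}$ on the product $S_2\times\cdots\times S_m$ bounds the number $B$ of ``bad'' fibres (points $(x_2,\ldots,x_m)$ with $F_{i_1}(x_2,\ldots,x_m)=0$) by
$$B \leq i_2 s_3\cdots s_m + s_2 i_3 s_4\cdots s_m + \cdots + s_2\cdots s_{m-1} i_m.$$

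I then count zeros of $F$ fibrewise over $S_2\times\cdots\times S_m$. Over each of the $s_2\cdots s_m - B$ good fibres the univariate specialization $F(X_1,x_2,\ldots,x_m)$ has degree exactly $i_1$ in $X_1$, contributing at most $i_1$ zeros in $S_1$; over each bad fibre I only use the trivial bound $s_1$. The total count is therefore at most
$$(s_2\cdots s_m - B)\, i_1 + B\, s_1 \;=\; i_1 s_2\cdots s_m + (s_1 - i_1)B.$$
Assuming $i_1 \leq s_1$, replacing $(s_1-i_1)$ by $s_1$ and substituting the inductive bound for $B$ produces precisely the claimed expression; and if $i_1 > s_1$ the first summand $i_1 s_2\cdots s_m$ already dominates $n = s_1\cdots s_m$, so the inequality is vacuous.

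There is no serious obstacle in this argument; it is essentially the classical Schwartz--Zippel induction adapted to lex leading monomials in place of total degree. The only bookkeeping items worth verifying carefully are (i) that the leading monomial of the coefficient $F_{i_1}$ in the remaining variables really is $X_2^{i_2}\cdots X_m^{i_m}$, which follows from unwinding the definition of lex order, and (ii) that the possibly negative factor $s_1 - i_1$ does no harm, which is dispatched by the trivial case $i_1 > s_1$.
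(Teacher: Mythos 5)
Your proof is correct. Note that the paper does not actually supply a proof of this theorem: it merely cites the classical references (Schwartz, Zippel, DeMillo--Lipton), remarks that the argument is ``purely combinatorial,'' and moves on to the stronger footprint bound of Theorem~\ref{footprspecial}. Your induction is precisely the standard Schwartz--Zippel argument, correctly adapted from total degree to lex leading monomials: the two key observations --- that $\deg_{X_1}F=i_1$ and that the top coefficient $F_{i_1}$ has leading monomial $X_2^{i_2}\cdots X_m^{i_m}$ in the induced lex order --- both follow from $X_1$ being the dominant variable, and your handling of the sign of $s_1-i_1$ (discarding it when $i_1\leq s_1$, and observing the bound is vacuous when $i_1>s_1$) closes the only delicate point. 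One cosmetic remark: the theorem allows an arbitrary lexicographic ordering, so you should say explicitly that you relabel variables so that $X_1$ is dominant, which is harmless because the stated bound is symmetric under simultaneous permutation of the pairs $(i_j,s_j)$.
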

The proof of this result is purely combinatorial. Using the
inclusion-exclusion principle it can actually be strengthened to the
following result which is a special case of the footprint bound from Gr\"{o}bner basis
theory: 
\begin{theorem}\label{footprspecial}
Given a lexicographic ordering let the leading monomial of $F(X_1,
\ldots , X_m)$ be $X_1^{i_1}\cdots X_m^{i_m}$. The number of elements in
${\mathcal{S}}=S_1 \times \cdots \times S_m$ that are zeros of $F$ is at most
equal to $$ n-(s_1-i_1)(s_2-i_2)\cdots (s_m-i_m). $$
\end{theorem}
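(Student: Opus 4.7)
The plan is to proceed by induction on the number of variables $m$, counting the \emph{non-zeros} of $F$ in $\mathcal{S}$ rather than the zeros. The desired inequality is equivalent to saying that the number of non-zeros is at least $(s_1-i_1)(s_2-i_2)\cdots(s_m-i_m)$, a form that is well suited to a product argument.

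For the base case $m=1$, the statement reduces to the standard fact that a nonzero univariate polynomial of degree $i_1$ has at most $i_1$ zeros in $S_1$, and therefore at least $s_1-i_1$ non-zeros (vacuously true if $i_1\geq s_1$, since the count of non-zeros is nonnegative).

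For the inductive step with $m\geq 2$, I would expand $F$ as a polynomial in $X_1$,
$$F=\sum_{k=0}^{i_1} F_k(X_2,\ldots,X_m)\,X_1^k.$$
Since the lex order compares first by the exponent of $X_1$, the hypothesis on the leading monomial of $F$ forces $F_{i_1}\neq 0$ and moreover guarantees that its leading monomial under the induced lex order on $\mathbf{F}_q[X_2,\ldots,X_m]$ is exactly $X_2^{i_2}\cdots X_m^{i_m}$. Let $A\subseteq S_2\times\cdots\times S_m$ denote the set of tuples on which $F_{i_1}$ does not vanish; the induction hypothesis applied to $F_{i_1}$ yields
$$|A|\geq (s_2-i_2)(s_3-i_3)\cdots(s_m-i_m).$$
For every $(a_2,\ldots,a_m)\in A$ the univariate polynomial $F(X_1,a_2,\ldots,a_m)$ has degree exactly $i_1$ in $X_1$, and hence contributes at least $s_1-i_1$ non-zeros in $S_1$. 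Multiplying the two lower bounds produces at least $(s_1-i_1)(s_2-i_2)\cdots(s_m-i_m)$ non-zeros of $F$ in $\mathcal{S}$, which is the desired conclusion.

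I do not expect a substantial obstacle. The only delicate point is transferring the leading-monomial information from $F$ to the coefficient $F_{i_1}$, but this is automatic for a pure lex order. For perspective, the argument is a hands-on instance of the footprint bound alluded to in the statement: among the $n=s_1\cdots s_m$ monomials $X_1^{j_1}\cdots X_m^{j_m}$ with $0\leq j_k<s_k$, exactly $(s_1-i_1)\cdots(s_m-i_m)$ are divisible by $X_1^{i_1}\cdots X_m^{i_m}$ and are therefore removed from the footprint of the ideal $\langle F,\prod_{a\in S_1}(X_1-a),\ldots,\prod_{a\in S_m}(X_m-a)\rangle$, producing the same upper bound on the number of common zeros in $\mathcal{S}$.
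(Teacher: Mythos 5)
Your argument is correct, and it is worth noting that the paper itself offers no proof of this theorem: it is stated as a known strengthening of the Schwartz--Zippel bound obtainable ``using the inclusion-exclusion principle,'' and as a special case of the footprint bound from Gr\"obner basis theory. Your route is therefore genuinely different from (and more self-contained than) what the paper does. The slicing of $F$ along powers of $X_1$, the observation that the lex hypothesis transfers the leading monomial $X_2^{i_2}\cdots X_m^{i_m}$ to the top coefficient $F_{i_1}$, and the multiplication of the two lower bounds on non-zeros are all sound; this is in effect the standard inductive proof of the footprint bound specialized to the ideal $\langle F,\prod_{a\in S_1}(X_1-a),\ldots,\prod_{a\in S_m}(X_m-a)\rangle$, exactly as your closing remark suggests. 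What your approach buys is an elementary, citation-free proof; what the paper's approach buys is brevity and the placement of the result inside a more general theory. One small point you should make explicit: the multiplication of the bounds $|A|\geq(s_2-i_2)\cdots(s_m-i_m)$ and ``at least $s_1-i_1$ non-zeros per fibre'' requires each factor $s_j-i_j$ to be non-negative, so the statement should be read under the standing assumption $0\leq i_j<s_j$ (which holds throughout the paper, since monomials are always taken from the reduced set $\{X_1^{i_1}\cdots X_m^{i_m}\mid 0\leq i_j<s_j\}$); without it the claimed inequality can fail when an even number of factors are negative.
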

\begin{proposition}\label{afstand}
The minimum distance of $E({\mathbb{M}},{\mathcal{S}})$ is at least 
$$\min \{ (s_1-i_1)(s_2-i_2)\cdots (s_m-i_m) | X_1^{i_1}\cdots
X_m^{i_m} \in {\mathbb{M}}\}.$$
The bound is sharp if for every $M\in {\mathbb{M}}$ all divisors of $M$ also belong to
${\mathbb{M}}$.
\end{proposition}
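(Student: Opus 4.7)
The plan is to reduce the lower bound to a direct application of Theorem \ref{footprspecial}, and then to exhibit an explicit codeword attaining the bound when $\mathbb{M}$ is closed under taking divisors.

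For the lower bound, first note that every codeword of $E(\mathbb{M},{\mathcal{S}})$ is of the form ${\mbox{ev}}_{\mathcal{S}}(F)$ for a unique $F$ in the $\mathbf{F}_q$-span of $\mathbb{M}$. Uniqueness is ensured by the dimension count from the previous proposition together with the fact that each monomial in $\mathbb{M}$ has $X_j$-degree strictly less than $s_j$; in particular, any nonzero codeword corresponds to a nonzero polynomial $F$. Fix a lexicographic ordering and let $X_1^{i_1}\cdots X_m^{i_m}$ be the leading monomial of $F$; since the support of $F$ is contained in $\mathbb{M}$, this leading monomial lies in $\mathbb{M}$. Theorem \ref{footprspecial} then bounds the number of zeros of $F$ on $\mathcal{S}$ by $n - (s_1-i_1)(s_2-i_2)\cdots(s_m-i_m)$, so the Hamming weight of the codeword is at least $(s_1-i_1)(s_2-i_2)\cdots (s_m-i_m)$, which is bounded below by the displayed minimum.

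For sharpness under the divisor-closedness hypothesis, pick a monomial $X_1^{i_1}\cdots X_m^{i_m}\in\mathbb{M}$ that attains the minimum. For each $j$ choose an arbitrary $T_j\subseteq S_j$ with $|T_j|=i_j$, and set
$$F(X_1,\ldots,X_m) = \prod_{j=1}^{m}\prod_{a\in T_j}(X_j - a).$$
Expanding $F$, every monomial that appears divides $X_1^{i_1}\cdots X_m^{i_m}$, hence lies in $\mathbb{M}$ by the divisor-closedness assumption; therefore ${\mbox{ev}}_{\mathcal{S}}(F)\in E(\mathbb{M},{\mathcal{S}})$. By construction $F$ vanishes at a point $(P_1^{(v)},\ldots,P_m^{(v)})\in\mathcal{S}$ if and only if $P_j^{(v)}\in T_j$ for some $j$, so $F$ is nonzero exactly on the set $(S_1\setminus T_1)\times\cdots\times(S_m\setminus T_m)$, which has cardinality $(s_1-i_1)(s_2-i_2)\cdots(s_m-i_m)$. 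This produces a codeword of weight equal to the minimum, proving sharpness.

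I do not anticipate any serious obstacle: once Theorem \ref{footprspecial} is in hand the lower bound is immediate, and the sharpness construction is the standard ``product of linear factors, one coordinate at a time'' polynomial. The only point requiring a little care is verifying that $F$ lies in the span of $\mathbb{M}$, but this is precisely why the divisor-closed hypothesis is imposed.
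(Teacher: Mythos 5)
Your proposal is correct and follows essentially the same route as the paper: the lower bound is a direct application of Theorem~\ref{footprspecial} to the leading monomial of the representing polynomial (which lies in $\mathbb{M}$ since the support does), and sharpness is shown by the same product-of-linear-factors polynomial, whose support consists of divisors of the chosen monomial. The only cosmetic difference is that you allow arbitrary subsets $T_j\subseteq S_j$ of size $i_j$ where the paper takes the first $i_j$ listed elements of each $S_j$.
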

\begin{proof}
The first part follows from Theorem~\ref{footprspecial}. To see the
last part write for $i=1, \ldots , m$, $S_i=\{b_1^{(i)}, \ldots , b_{|S_i|}^{(i)}\}$.
The polynomial 
$$F(X_1, \ldots  X_m)=\prod_{v=1}^m \prod_{j=1}^{i_v} \big(X_v-b_j^{(v)}\big)$$
has leading monomial $X_1^{i_1}\cdots X_m^{i_m}$ with respect to any
monomial ordering and evaluates to zero in exactly $n- (s_1-i_1)(s_2-i_2)\cdots (s_m-i_m)$ points from ${\mathcal{S}}$. Finally, any monomial that occurs in the support
of $F$ is a factor of $X_1^{i_1} \cdots X_m^{i_m}$.
\end{proof}
\section{Weighted Reed-Muller codes}\label{secwrm}
The first example of codes $E({\mathbb{M}},{\mathcal{S}})$ that comes to mind are the $q$-ary Reed-Muller codes ${\mbox{RM}}_q(u,m)$. They
are defined by choosing
\begin{eqnarray}
&S_1=\cdots =S_m={\mathbf{F}}_q, \nonumber\\
&{\mathbb{M}}=\{X_1^{i_1} \cdots X_m^{i_m} \mid i_1+\cdots +i_m\leq
u\}. \label{dabel1}
\end{eqnarray}
S{\o}rensen in~\cite{abs} modified the above construction by instead letting
\begin{eqnarray}
&{\mathbb{M}}=\{X_1^{i_1} \cdots X_m^{i_m} \mid w_1 i_1+\cdots +w_m i_m\leq
u\} \label{dabel2}
\end{eqnarray}
where $w_1, \ldots ,w_m$ are fixed positive numbers. The resulting
codes are called weighted Reed-Muller codes. In the same paper
S{\o}rensen argues that there is actually no point in
considering~(\ref{dabel2}) rather than~(\ref{dabel1}) as every
weighted Reed-Muller code is contained in a code ${\mbox{RM}}_q(u,m)$
which has the same minimum distance. In the present paper we allow
$S_1, \ldots , S_m$ to be any subsets of ${\mathbf{F}}_q$. As we shall
demonstrate, in such a general setting replacing~(\ref{dabel1})
with~(\ref{dabel2}) may result in much better codes. In other words, 
the concept of weighted Reed-Muller codes actually makes a lot of
sense. We start with a motivating example.
\begin{example}
In this example we construct codes over ${\mathbf{F}}_{16}$ of length
$n=64$. First let ${\mathcal{S}}=S_1\times S_2$ be such that $s_1=s_2=8$. Define,
\begin{eqnarray}
&{\mathbb{M}}=\{X_1^{i_1} X_2^{i_2} \mid 0\leq i_1, i_2\leq 7, i_1+i_2\leq 7 \}. \nonumber
\end{eqnarray}
The code $E({\mathbb{M}},{\mathcal{S}})$ is of dimension $36$ and minimum distance
$8$. Letting instead $\tilde{{\mathcal{S}}}=\tilde{S}_1\times \tilde{S}_2$ where
$|\tilde{S}_1|=16$ and $|\tilde{S}_2|=4$ we consider the following two
sets of monomials
\begin{eqnarray}
&{\mathbb{M}}'=\{X_1^{i_1}X_2^{i_2} \mid 0 \leq i_1 \leq 15, 0 \leq i_2
\leq 3, i_1+i_2\leq 11\}, \nonumber\\
&{\mathbb{M}}''=\{X_1^{i_1}X_2^{i_2} \mid 0 \leq i_1 \leq 15, 0\leq i_2
\leq 3, i_1+2i_2\leq 14\}. \nonumber
\end{eqnarray}
The code $E({\mathbb{M}}',\tilde{{\mathcal{S}}})$ is of dimension $42$ and minimum
distance $8$ whereas the code   $E({\mathbb{M}}'',\tilde{{\mathcal{S}}})$ is of
dimension $48$ and minimum distance $8$. 
\end{example}
The above example illustrates two facts. Firstly, choosing the $S_i$'s
to be of different sizes may be an advantage. Secondly, using a
weighted degree rather than the total degree when choosing monomials
may result in better codes. It is time for a definition.
\begin{definition}
Let $S_1, \ldots , S_m \subseteq {\mathbb{F}}_q$ and consider positive
numbers $w_1, \ldots , w_m,u$. Let 
\begin{eqnarray}
\hspace*{-1em}&{\mathbb{M}}=\{X_1^{i_1} \cdots X_m^{i_m} \mid 0\leq i_t\leq s_t-1,
t=1, \ldots , m, {\mbox{\ and \ }} w_1i_1+\cdots +w_mi_m\leq u\}. \nonumber
\end{eqnarray}
The corresponding code $E({\mathbb{M}},{\mathcal{S}})$ is called a weighted
Reed-Muller code and we denote it by ${\mbox{RM}}(S_1, \ldots ,
S_m,u,w_1, \ldots , w_m)$. As is often done we shall refer to weighted
Reed-Muller codes with  $S_1=\cdots
=S_m$ and $w_1=\cdots =w_m$ as $q$-ary Reed-Muller codes.
\end{definition}
We start by taking a closer look at the case of two
variables. According to Theorem~\ref{afstand} the minimum distance of
${\mbox{RM}}(S_1,S_2,u,w_1,w_2)$ equals
\begin{multline}
\hspace*{-1em}\min \{ (s_1-i_1)(s_2-i_2) \mid i_1, i_2 \in {\mathbb{N}}, 0\leq
i_1\leq s_1-1 ,
0\leq i_2\leq s_2-1, w_1i_1+w_2i_2\leq
u\} \\
\geq\min \{ (s_1-i_1)(s_2-i_2) \mid i_1, i_2 \in {\mathbb{Q}}, 0\leq
i_1\leq s_1-1,\\ 0\leq i_2\leq s_2-1, w_1i_1+w_2i_2=
u\}. \label{eqkloer}
\end{multline}
Substituting $i_2=(u-w_1i_1)/w_2$ into $(s_1-i_1)(s_2-i_2)$ we get a
concave function (a parabola). Hence, the minimal value of
$(s_1-i_1)(s_2-i_2)$ under the condition in~(\ref{eqkloer}) is either
attained for $i_1$ as small as possible or for $i_1$ as large as
possible. Given a weight $w_1$ and a positive number $u$ we seek $w_2$
such that $(s_1-i_1)(s_2-i_2)$ is the same for $i_1$ minimal and
maximal under the condition in~(\ref{eqkloer}). 
\begin{proposition}\label{optimal}
Let $s_2\leq s_1$ be positive integers. Given fixed positive numbers
$w_1$ and $u$ assume $w_2$ is chosen to be the positive number such that
$(s_1-i_1)(s_2-i_2)$ attains the same value whenever $i_1$ is
minimal or is maximal under the condition
\begin{eqnarray}
&w_1i_1+w_2i_2=u, \nonumber\\
&0\leq i_1\leq s_1-1, {\mbox{ \ \ }} 0\leq i_2\leq s_2-1.\nonumber
\end{eqnarray}
We have
\begin{equation}
\frac{w_1}{w_2}=\left\{ \begin{array}{cl}
s_2/s_1&{\mbox{ \ if \ }} 0 < u \leq
(s_1-\frac{s_1}{s_2})w_1\\
w_1/(w_1s_1-u)& {\mbox{ \ if \ }} (s_1-\frac{s_1}{s_2})w_1\leq u
\leq (s_1-1)w_1 \\
1& {\mbox{ \ if \ }} (s_1-1)w_1 \leq u < (s_1-1)w_1+(s_2-1)w_2.
\end{array}\right. \label{eqnogetsomething}
\end{equation}
\end{proposition}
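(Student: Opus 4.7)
The plan is to identify, as functions of $w_1, w_2, u$, the two endpoints of the line segment
$$\{(i_1,i_2) \in [0,s_1-1]\times[0,s_2-1] : w_1 i_1+w_2 i_2 = u\},$$
evaluate $(s_1-i_1)(s_2-i_2)$ at each endpoint, equate the two values (the minimum along the segment is at an endpoint by the concavity established in the discussion preceding the proposition), and solve for $w_2$. The three ranges of $u$ in the statement will correspond to three qualitatively different endpoint configurations.

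First I would catalogue the possibilities. The endpoint with smaller $i_1$ is either $(0, u/w_2)$ if $u \le w_2(s_2-1)$, or $\bigl((u-w_2(s_2-1))/w_1,\ s_2-1\bigr)$ otherwise. Dually, the endpoint with larger $i_1$ is either $(u/w_1, 0)$ if $u \le w_1(s_1-1)$, or $\bigl(s_1-1,\ (u-w_1(s_1-1))/w_2\bigr)$ otherwise. In Case~1 ($0 < u \le (s_1-s_1/s_2)w_1$) the relevant configuration is left endpoint on the $i_1=0$ axis and right endpoint on the $i_2=0$ axis; setting $s_1(s_2-u/w_2) = s_2(s_1-u/w_1)$ simplifies immediately to $w_1/w_2 = s_2/s_1$. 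In Case~2 the left endpoint has moved onto the top edge while the right endpoint remains on the $i_2=0$ axis; equating $\bigl(s_1-(u-w_2(s_2-1))/w_1\bigr)\cdot 1 = s_2(s_1-u/w_1)$ is linear in $w_2$ and yields $w_2 = w_1 s_1 - u$.

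In Case~3 both endpoints lie on the ``far'' edges ($i_2=s_2-1$ and $i_1=s_1-1$), and the resulting equation is quadratic in $w_2$. I would verify by direct substitution that $w_2=w_1$ is a root, factor it out, and check that the remaining linear factor has root $w_2 = (w_1(s_1-1)-u)/(s_2-1)$, which is non-positive precisely when $u \ge (s_1-1)w_1$; so $w_2=w_1$ is the unique positive solution, giving $w_1/w_2 = 1$.

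The main bookkeeping obstacle is the self-consistency checks: for each case one must verify that the $w_2$ produced indeed realizes the assumed endpoint configuration, and that the case boundaries match the thresholds in the statement. In particular, the transition between Cases~1 and~2 must occur exactly when the left endpoint leaves the axis $i_1=0$, i.e.\ at $u = w_2(s_2-1)$ with $w_2 = w_1 s_1/s_2$, which reduces to $u = (s_1 - s_1/s_2)w_1$; the transition between Cases~2 and~3 must occur when the right endpoint leaves the axis $i_2=0$, namely at $u = (s_1-1)w_1$. The hypothesis $s_2 \le s_1$ enters at this point, through the inequality $s_1(s_2-1)/s_2 \le s_1-1$, which guarantees that the three $u$-intervals are non-degenerate and abut correctly.
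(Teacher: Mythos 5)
Your overall strategy is the same as the paper's: parametrize the two endpoints of the feasible segment, evaluate $(s_1-i_1)(s_2-i_2)$ at each, equate, and solve for $w_2$. The paper carries this out only for the middle case (via the substitution $u/w_1=s_1-\delta$, $1\le\delta\le s_1/s_2$) and explicitly leaves the other two cases to the reader, whereas you treat all three; your Cases~1 and~2 are correct, and your identification of the case boundaries and of where $s_2\le s_1$ enters is also right.

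There is, however, a concrete error in Case~3. Clearing denominators in the equal-values condition $s_1-\frac{u-w_2(s_2-1)}{w_1}=s_2-\frac{u-w_1(s_1-1)}{w_2}$ gives the quadratic $(s_2-1)w_2^2+(w_1s_1-w_1s_2-u)w_2+w_1\bigl(u-w_1(s_1-1)\bigr)=0$. The root $w_2=w_1$ is indeed there, but the second root is $\bigl(u-w_1(s_1-1)\bigr)/(s_2-1)$ --- the \emph{negative} of what you wrote. Since Case~3 assumes $u\ge(s_1-1)w_1$, this second root is non-negative, so your stated justification that $w_2=w_1$ is ``the unique positive solution'' does not go through. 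The conclusion can be rescued, but only by actually performing one of the self-consistency checks you defer: at the second root one has $(s_2-1)w_2=u-(s_1-1)w_1$, i.e.\ $u=(s_1-1)w_1+(s_2-1)w_2$, so the feasible segment degenerates to the single corner point $(s_1-1,s_2-1)$ and the strict inequality $u<(s_1-1)w_1+(s_2-1)w_2$ required in the statement fails; hence that root must be discarded and $w_2=w_1$ is the only admissible solution. As written, Case~3 is not yet proved without this exclusion.
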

\begin{proof}
The proposition is illustrated in Figure~\ref{figo} for the case of
$s_1=18$ and $s_2=6$. \\
\begin{figure}
    \centering
    \begin{tikzpicture}
        [x=1.5em,y=1.5em,>=latex,
        graydot/.style={circle,fill=gray,inner sep=0pt,minimum size=.375em},
        blackdot/.style={circle,fill=black,inner sep=0pt,minimum size=.525em}]
        \draw [->,very thick] (0,0) node {} -- (18,0);
        \draw [->,very thick] (0,0) node {} -- (0,6);
        \draw (17,0) node {} -- (17,5);
        \draw (0,5) node {} -- (17,5);
        \foreach \x in {0,...,17}
            \foreach \y in {0,...,5}
                \draw (\x,\y) node [graydot] {};
        \draw (0,5) node [blackdot] {};
        \node at (0,5) [anchor=east] {$s_2-1$};
        \path[use as bounding box] (-2,0) rectangle ++(21,00);
        \draw (15,0) node [blackdot] {};
        \node at (15,-.75) [left=-1em] {$s_1-\frac{s_1}{s_2}$};
        \draw (17,0) node [blackdot] {};
        \node at (17,-.75) [right=-1em] {$s_1-1$};
        \draw (3,0) -- (0,1);
        \draw (6,0) -- (0,2);
        \draw (9,0) -- (0,3);
        \draw (12,0) -- (0,4);
        \draw [line width=1.5pt] (15,0) node {} -- (0,5);
        \draw (15.5,0) node {} -- (3,5);
        \draw (16,0) node {} -- (6,5);
        \draw (16.5,0) node {} -- (9,5);
        \draw [line width=1.5pt] (17,0) node {} -- (12,5);
        \draw (17,1) -- (13,5);
        \draw (17,2) -- (14,5);
        \draw (17,3) -- (15,5);
        \draw (17,4) -- (16,5);
    \end{tikzpicture}
    \caption{The situation in the proof of Proposition~\ref{optimal}.}
    \label{figo}
\end{figure}
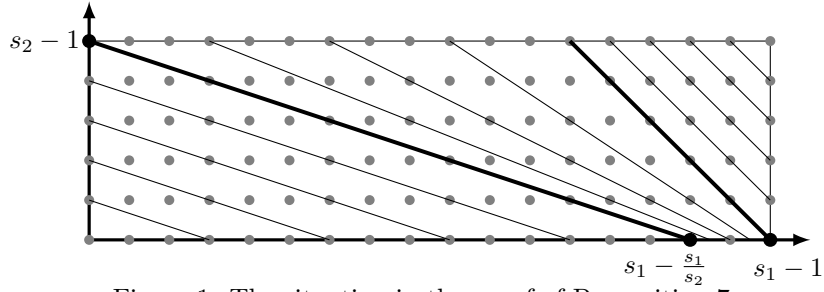
We concentrate on the situation where 
$$
(s_1-\frac{s_1}{s_2})w_1 \leq u \leq (s_1-1)w_1
$$
and leave the other two simpler cases for the reader. Write
$u/w_1=s_1-\delta$ with $s_1/s_2 \geq \delta \geq 1$. The maximal
value of $i_1$ is $u/w_1$ in which case $i_2=0$. So for $i_1$ maximal
$(s_1-i_1)(s_2-i_2)=\delta s_2$. We seek $i_1$ minimal such that with
$i_2=s_2-1$ we get $(s_1-i_1)(s_2-i_2)=\delta s_2$. We find
$i_1=s_1-\delta s_2$ which is indeed a non-negative number. Hence,
$w_2$ must satisfy
\begin{eqnarray}
&&w_1(s_1-\delta s_2)+w_2(s_2-1)=w_1(s_1-\delta)\nonumber \\
&\Downarrow & \nonumber \\
&&\frac{w_1}{w_2}=\frac{1}{\delta} \nonumber \\
&\Downarrow \nonumber \\
&&\frac{w_1}{w_2}=\frac{w_1}{w_1s_1-u}. \nonumber
\end{eqnarray}
\end{proof}
Proposition~\ref{optimal}
justifies the
following definition.
\begin{definition}\label{defoptimal}
If $s_1, s_2, u, w_1, w_2$ satisfy~(\ref{eqnogetsomething}) then
the code ${\mbox{RM}}(S_1,S_2,u,w_1,w_2)$ is called an optimal
weighted Reed-Muller code (in two variables).
\end{definition}
The next proposition estimates the minimum distance of any
weighted Reed-Muller code ${\mbox{RM}}(S_1,S_2,u,w_1,w_2)$ (optimal or
not).
\begin{proposition}\label{wrmafstand}
Consider ${\mbox{RM}}(S_1,S_2,u,w_1,w_2)$ with $s_2 \leq s_1$. Write $\rho=w_1/w_2$ and
let $d$ be the minimum distance.\\
If $\rho \leq \frac{s_2}{s_1}$ then
\begin{align}
d&\geq s_2(s_1-\frac{u}{w_1}), && \textrm{if}\ u \leq (s_1-1)w_1,
\label{EQ1}\\
d&\geq s_2-\frac{u-(s_1-1)w_1}{w_2},&& \textrm{if}\ (s_1-1)w_1 < u
\leq (s_1-1)w_1+(s_2-1)w_2. \label{EQ2}
\intertext{If $\frac{s_2}{s_1} < \rho <1$ then}
d&\geq
(s_2-\frac{u}{w_2})s_1,&& \textrm{if}\ u \leq (s_2-1)w_2, \label{EQ3}\\
d&\geq s_1-\frac{u-(s_2-1)w_2}{w_1},&&\textrm{if}\ (s_2-1)w_2 < u
\leq (s_1-\frac{1}{\rho})w_1,\label{EQ4}\\
d&\geq (s_1-\frac{u}{w_1})s_2,&& \textrm{if}\
(s_1-\frac{1}{\rho})w_1 < u \leq (s_1-1)w_1,\label{EQ5}\\
d&\geq s_2-\frac{u-(s_1-1)w_1}{w_2},&&\textrm{if}\ (s_1-1)w_1 < u
\leq (s_1-1)w_1+(s_2-1)w_2 .\label{EQ6}
\intertext{If $1  \leq \rho$ then}
d&\geq (s_2-\frac{u}{w_2})s_1,&&\textrm{if}\ u\leq (s_2-1)w_2, \label{EQ7}\\
d&\geq s_1-\frac{u-(s_2-1)w_2}{w_1},&&\textrm{if}\
(s_2-1)w_2<u\leq(s_1-1)w_1+(s_2-1)w_2.\label{EQ8}
\end{align}
Equality holds in~(\ref{EQ2}), (\ref{EQ4}), (\ref{EQ6}), and
(\ref{EQ8}), respectively, if the expression is an integer. Equality
holds in (\ref{EQ1}) and  (\ref{EQ5}) if $u/w_1$ is an
integer. Finally, equality holds in (\ref{EQ3}) and (\ref{EQ7}) if
$u/w_2$ is an integer.
\end{proposition}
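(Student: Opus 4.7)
The plan is to apply Proposition~\ref{afstand}, which reduces the problem to minimizing $(s_1-i_1)(s_2-i_2)$ over pairs $(i_1,i_2)$ with $X_1^{i_1}X_2^{i_2}\in{\mathbb{M}}$. Relaxing to the continuous region $R=\{(i_1,i_2)\in{\mathbb{R}}^2:0\leq i_1\leq s_1-1,\ 0\leq i_2\leq s_2-1,\ w_1i_1+w_2i_2\leq u\}$ yields a lower bound for the integer minimum. Since $(s_1-i_1)(s_2-i_2)$ is strictly decreasing in both coordinates, the minimum over $R$ is attained on the segment $L=R\cap\{w_1i_1+w_2i_2=u\}$, and the substitution $i_2=(u-w_1i_1)/w_2$ noted just before~(\ref{eqkloer}) turns the objective into a concave quadratic in $i_1$, so its minimum on the compact segment $L$ is attained at one of the two endpoints.

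The heart of the proof is a case analysis that identifies these endpoints and decides which one is extreme. When $u$ is small, $L$ runs from $(0,u/w_2)$ to $(u/w_1,0)$; a direct comparison of $s_1(s_2-u/w_2)$ with $s_2(s_1-u/w_1)$ shows that $(u/w_1,0)$ is the extreme endpoint exactly when $\rho\leq s_2/s_1$, which explains the split between~(\ref{EQ1}) and~(\ref{EQ3}), (\ref{EQ7}). As $u$ grows, one axis endpoint is replaced by an endpoint on $\{i_2=s_2-1\}$ or $\{i_1=s_1-1\}$. A short algebraic comparison of the resulting endpoint values then shows that in the regime $s_2/s_1<\rho<1$ the extreme endpoint switches from the upper-left to the lower-right precisely at $u=(s_1-1/\rho)w_1$ --- this is the boundary between~(\ref{EQ4}) and~(\ref{EQ5}) --- while in the regime $\rho\geq 1$ the same comparison always favours the upper-left endpoint, which yields~(\ref{EQ8}). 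The bounds~(\ref{EQ2}) and~(\ref{EQ6}) handle the remaining range of $u$ in which $L$ enters the rectangle through both its top and its right edge.

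For the sharpness statements we appeal to the last sentence of Proposition~\ref{afstand}. The set ${\mathbb{M}}$ of a weighted Reed-Muller code is manifestly closed under division, so whenever the extreme endpoint identified above has integer coordinates it corresponds to a genuine element of ${\mathbb{M}}$ and achieves the bound. The integrality hypotheses ``$u/w_i\in{\mathbb{Z}}$'' or ``$(u-(s_i-1)w_i)/w_j\in{\mathbb{Z}}$'' stated in the proposition are precisely the conditions that these extreme endpoints are lattice points. The main obstacle is clerical rather than conceptual: one must keep track, across three regimes of $\rho$ and up to four sub-intervals of $u$ in each, of which endpoint of $L$ is extreme, but each individual comparison reduces to a one-line sign check once a picture in the spirit of Figure~\ref{figo} is drawn.
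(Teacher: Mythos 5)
Your proposal is correct and follows essentially the same route as the paper: reduce via Proposition~\ref{afstand} to minimizing $(s_1-i_1)(s_2-i_2)$, relax to the continuous constraint line where concavity forces the minimum to an endpoint, identify which endpoint wins in each regime of $\rho$ (with the switch in the middle regime at $u'=(s_1-\tfrac{1}{\rho})w_1=w_1s_1-w_2$), and obtain sharpness from the divisor-closedness of ${\mathbb{M}}$ together with the integrality of the extreme endpoint. The only cosmetic difference is that the paper delegates the endpoint comparisons to Proposition~\ref{optimal} while you carry them out directly.
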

\begin{proof}
The task is to determine under the various conditions of the
proposition whether $(s_1-i_1)(s_1-i_2)$ is minimized for $i_1$
minimal or maximal. The corresponding values of $i_1$ and $i_2$ are
then plugged in to give (\ref{EQ1}),$\cdots$,(\ref{EQ8}). To find
out if $i_1$ should be chosen minimal or maximal we use the
information from Proposition~\ref{optimal}. If $\rho \leq s_2/s_1$ the
minimum is always attained for $i_1$ maximal. If $1 \leq \rho$ then
the minimum is always attained for $i_1$ minimal. In the case $s_2/s_1
<\rho<1$ the minimal is attained for $i_1$ minimal when $u \leq u'$
and is attained for $i_1$ maximal when $u \geq u'$. Here, $u'$ is a
number that we determine below. It is clear that 
$$(s_1-\frac{s_1}{s_2})w_1<u'<(s_1-1)w_1$$
and therefore $u'$ is the number such that
$$(s_1-\frac{u'}{w_1})s_2=s_1-\frac{u' -(s_2-1)w_2}{w_1}.$$
Solving for $u'$ gives
$$u'=w_1s_1-w_2=(s_1-\frac{1}{\rho})w_1.$$ 
\end{proof}
Proposition~\ref{wrmafstand} also allows us to state general bounds
for the minimum distance of ${\mbox{RM}}(S_1, \ldots ,S_m,u,w_1,
\ldots ,w_m)$ in two important cases. Observe, that in particular the
following proposition can be applied when $w_i=\prod_{i \neq j}s_j$.
\begin{proposition}
Assume $s_1 \geq \cdots \geq s_m$, and let $u$ be a number $0\leq u
\leq (s_1-1)w_1+\cdots +(s_m-1)w_m$. If
\begin{equation}
\frac{w_1}{\prod_{i\neq 1}s_i} \leq \frac{w_2}{\prod_{i\neq 2}s_i}\leq
\cdots \leq \frac{w_m}{\prod_{i\neq m}s_i} \label{cirkela}
\end{equation}
holds then write 
$$u=(s_1-1)w_1 +\cdots +(s_{t-1}-1)w_{t-1}+a_tw_t$$
where $0<a_t\leq s_t-1$. The minimum distance of  ${\mbox{RM}}(S_1, \ldots ,S_m,u,w_1,
\ldots ,w_m)$ satisfies
$$d\geq (s_t-a_t)\prod_{i=t+1}^m s_i$$
with equality if $a_{t}$ is an integer.\\
If $w_1 \geq \cdots \geq w_m$ then write
$$u=(s_m-1)w_m +\cdots +(s_{t-1}-1)w_{t-1}+a_tw_t$$
where $0<a_t\leq s_t-1$. The minimum distance of  ${\mbox{RM}}(S_1, \ldots ,S_m,u,w_1,
\ldots ,w_m)$ satisfies
$$d\geq (s_t-a_t)\prod_{i=1}^{t-1} s_i$$
with equality if $a_{t}$ is an integer.\\
\end{proposition}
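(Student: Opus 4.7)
My plan is to apply Proposition~\ref{afstand}, reducing the bound on $d$ to showing $\min \prod_{j=1}^m (s_j - i_j) \geq (s_t - a_t)\prod_{j>t} s_j$, where the minimum ranges over integer tuples with $0 \leq i_j \leq s_j - 1$ and $\sum_j w_j i_j \leq u$. Since the integer minimum is at least its real relaxation, it suffices to prove the bound over reals; and for the sharpness claim when $a_t$ is an integer, the tuple $(s_1-1,\ldots,s_{t-1}-1,a_t,0,\ldots,0)$ is integer, feasible, and achieves product exactly $(s_t-a_t)\prod_{j>t}s_j$, while every divisor of the associated monomial lies in~$\mathbb{M}$, so the sharpness half of Proposition~\ref{afstand} promotes the bound to equality.

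For the lower bound under~(\ref{cirkela}) I would induct on $m$. The base case $m = 2$ is Proposition~\ref{wrmafstand} in the regime $\rho = w_1/w_2 \leq s_2/s_1$ (which is~(\ref{cirkela}) rewritten as $w_1 s_1 \leq w_2 s_2$); equations~(\ref{EQ1}) and~(\ref{EQ2}) are exactly the $t=1$ and $t=2$ branches of the greedy formula. For the inductive step the key tool is a pairwise exchange: at a real minimizer $(i_1^*,\ldots,i_m^*)$, for each $j \geq 2$, fixing every coordinate except $i_1$ and $i_j$ reduces the problem to the two-variable minimization in the residual budget, and since $w_1 s_1 \leq w_j s_j$ this falls into the first regime of Proposition~\ref{wrmafstand}, whose analysis forces the minimum of $(s_1-i_1)(s_j-i_j)$ to be attained only at points with $i_1 = s_1 - 1$ or $i_j = 0$. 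Therefore either $i_1^* = s_1 - 1$ (Case~A), or $i_j^* = 0$ for every $j \geq 2$ (Case~B). In Case~A the problem collapses to the $(m-1)$-variable instance on $(i_2,\ldots,i_m)$ with budget $u - (s_1-1)w_1$; condition~(\ref{cirkela}) survives, the subsystem's greedy decomposition inherits the same index $t$ and value $a_t$ from the original, and induction delivers the bound. In Case~B the budget reduces to $w_1 i_1^* \leq u$, forcing $u \leq (s_1-1)w_1$ and $i_1^* = u/w_1$, so $\prod_j(s_j - i_j^*) = (s_1 - u/w_1)\prod_{j\geq 2}s_j$ is precisely the $t=1$ greedy value.

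The second half of the proposition, under $w_1 \geq \cdots \geq w_m$, I would prove by the mirror argument: for each $j < m$ the pair $(i_j,i_m)$ has $s_j \geq s_m$ and $\rho = w_j/w_m \geq 1$, placing it in the third regime of Proposition~\ref{wrmafstand} (equations~(\ref{EQ7}),~(\ref{EQ8})), so the exchange forces $i_m^* = s_m - 1$ or $i_j^* = 0$; then the greedy is filled from the last index downward and the same case split and induction yield $d \geq (s_t - a_t)\prod_{j<t}s_j$. The main technical obstacle is the bookkeeping in Case~A, namely verifying that the subsystem's greedy decomposition shares $t$ and $a_t$ with the original; this is routine once one writes $u = (s_1-1)w_1 + [(s_2-1)w_2 + \cdots + (s_{t-1}-1)w_{t-1} + a_t w_t]$ and identifies the bracketed sum with the subsystem's decomposition.
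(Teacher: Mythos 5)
Your proposal is correct and follows essentially the same route as the paper: a pairwise two-variable exchange argument (justified by Proposition~\ref{optimal}, since (\ref{cirkela}) gives $w_c/w_d \leq s_d/s_c$ and hence the pairwise minimum sits at the endpoint with $i_d$ minimal), followed by induction on the number of variables. You are somewhat more explicit than the paper — which compresses the case analysis into one sentence, omits the mirror-image second half, and leaves the sharpness/equality claim implicit — but the underlying argument is the same.
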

\begin{proof}
We only prove the first part. Assume~(\ref{cirkela}) holds. Let $i_1,
\ldots ,i_m\in {\mathbb{Q}}$ be chosen such that $(s_1-i_1)\cdots
(s_m-i_m)$ is minimal under the conditions
$$w_1i_1+\cdots +w_mi_m=u,$$
$$0\leq i_1 \leq s_1-1, \ldots , 0\leq i_m\leq s_m-1.$$
For integers $c,d$ with $1 \leq c<d\leq m$ we have $w_c/w_d \leq
s_d/s_c$. Note from Proposition~\ref{optimal} that $s_d/s_c$ is the
smallest possible ratio of $w_c^\prime/w_d^\prime$ for an optimal code
${\mbox{RM}}(S_c,S_d,i_cw_c+i_dw_d,w_c^\prime,w_d^\prime)$. Therefore,
under the condition that $i_cw_c+i_dw_d$ is fixed and $0 \leq i_c \leq
s_c-1$, $0\leq i_d \leq s_d-1$ the minimal value of
$(s_c-i_c)(s_d-i_d)$ is attained for $i_d$ minimal. The result now
follows by induction.
\end{proof}
In the remaining part of this section we restrict solely to the case
of two variables. As shall be demonstrated in this situation almost
all weighted Reed-Muller codes outperform the corresponding $q$-ary
Reed-Muller codes. Before getting to the analysis let us consider an example.
\begin{example}\label{exto}
Consider optimal weighted Reed-Muller codes
${\mbox{RM}}(S_1,S_2,u,w_1,w_2)$ (see
Definition~\ref{defoptimal}). Choosing $(s_1,s_2)$ from the set
$$ \{(32,32),(64,16),(128,8),(256,4),(512,2)\}$$
gives five different classes of codes all of length
$n=1024$. Observe that the first class of codes is similar to
$q$-ary Reed-Muller codes as the optimal choice of $w_1, w_2$ is
$w_1=w_2$ whenever $s_1=s_2$. The codes are defined whenever the field
under consideration contains at least $s_1$ elements. Hence, the first
class of codes is defined over any field ${\mathbf{F}}_q$ with $q
\geq 32$, the second class over any field ${\mathbf{F}}_q$ with $q
\geq 64$, ..., the last class of codes over any field ${\mathbf{F}}_q$
with $q \geq 512$. In particular all classes of codes are defined over
$F_{512}$. In Figure~\ref{figtvo} we compare their
performance. It is clear that the second class of codes outperforms
the first class for higher dimensions, whereas the last three classes
of codes outperform the first class for any dimension. 
\begin{figure}
\begin{center}
\includegraphics[width=8cm]{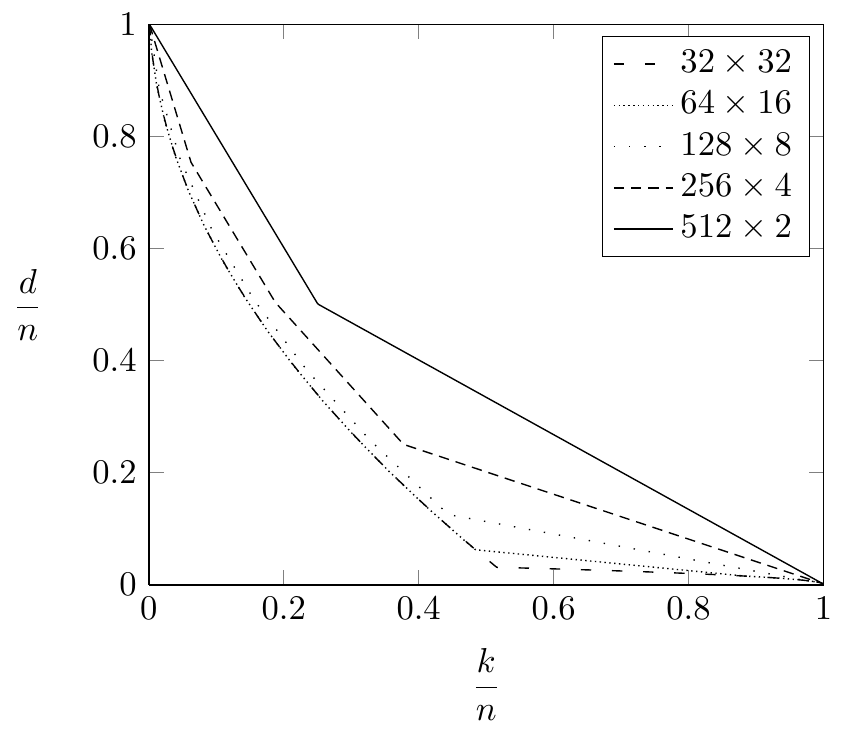} 
\end{center}
\caption{Performance of the codes in Example~\ref{exto}}
\end{figure}
\label{figtvo}
\end{example}
Below we investigate in detail how well general optimal
weighted Reed-Muller codes ${\mbox{RM}}(S_1,S_2,u,w_1,w_2)$ perform in
comparison with $q$-ary Reed-Muller codes
${\mbox{RM}}(S,S,u^\prime,1,1)$. Here, we assume that
$s_1s_2=s^2$. Recall, from Proposition~\ref{optimal} that the
description of the weights used in the optimal weighted Reed-Muller
codes involves three cases depending on the value of $u$. Choosing in
the following without loss of generality $w_1=1$ we shall refer to $u
\leq s_1-(s_1/s_2)$ as region I, $s_1-(s_1/s_2) \leq u \leq s_1-1$ as
region II, and finally $s_1-1 \leq u \leq (s_1-1)+w_2(s_2-1)$ as
region III. Proposition~\ref{sekseren}, Proposition~\ref{syveren}, and
Proposition~\ref{otteren}, respectively, takes care of region I,
region II, and region III, respectively. In Proposition~\ref{sekseren}
   we will to ease the analysis
  make the small restriction that $s_2\mid s_1$ and that $s_1
  \mid u s_2$. Furthermore, in all three propositions we assume that
  $u$ is an integer. We stress that when such assumptions do not hold
  then the formulas to be presented are still very close to be true.
What we will learn is that the
  codes ${\mbox{RM}}(S_1,S_2,u,w_1=1,w_2)$ always outperform the codes
  ${\mbox{RM}}(S,S,u^\prime,1,1)$ provided that $s_1 \geq
  4s_2$. Furthermore, for $s_1-s/s_2 \leq u$ such a result holds in
  the general situation $s_1 > s_2$.
\begin{proposition}\label{sekseren}
Consider integers  $s_1, s_2$ with $1<s_2 < s_1$. Let $u$ be an integer with $u
\leq s_1-(s_1/s_2)$. Assume $s_1/s_2$ and $us_2/s_1$ are integers and
that $s_1s_2=s^2$ for some integer $s$. Let $w_1=1$ and $w_2=s_1/s_2$
(that is, $w_1$ and $w_2$ are chosen as in
Proposition~\ref{optimal}). The code ${\mbox{RM}}(S_1,S_2,u,w_1,w_2)$
is of dimension
$$\frac{1}{2}(u^2\frac{s_2}{s_1}+u)+u\frac{s_2}{s_1}+1$$
and any ${\mbox{RM}}(S,S,u',1,1)$ of the same or larger minimum
distance is of dimension at most
$$\frac{1}{2} (\frac{s_2}{s_1}u^2+3u\sqrt{\frac{s_2}{s_1}}+2).$$
For $s_1 \geq 4 s_2$ the code ${\mbox{RM}}(S_1,S_2,u,w_1,w_2)$ is the
better one.
\end{proposition}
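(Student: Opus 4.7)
The plan is to split the claim into four pieces: the dimension formula for the weighted code, its minimum distance, an upper bound on the dimension of any competing $q$-ary Reed--Muller code of equal or larger minimum distance, and the final algebraic comparison that yields the threshold $s_1 \geq 4 s_2$.

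For the dimension, I would count pairs $(i_1, i_2) \in \mathbb{Z}_{\geq 0}^2$ with $i_1 + (s_1/s_2)\, i_2 \leq u$. Since $s_1/s_2$ is a positive integer greater than $1$ (hence $\geq 2$) and $u \leq s_1 - s_1/s_2$, both box constraints $i_1 \leq s_1 - 1$ and $i_2 \leq s_2 - 1$ are automatically satisfied once the weight inequality is. Summing over $i_2 = 0, 1, \ldots, u s_2/s_1$ (an integer by hypothesis) the contribution $u - (s_1/s_2) i_2 + 1$, the arithmetic series collapses to $(u s_2/s_1 + 1)(u/2 + 1)$, which expands to the stated formula. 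The minimum distance of the weighted code falls out of Proposition~\ref{wrmafstand} at the boundary $\rho = w_1/w_2 = s_2/s_1$: since $u \leq (s_1 - 1) w_1$ and $u/w_1 = u$ is an integer, equation~(\ref{EQ1}) holds with equality, giving $d = s_2(s_1 - u)$.

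For the $q$-ary Reed--Muller side, Proposition~\ref{afstand} (equivalently the $\rho = 1$ case of Proposition~\ref{wrmafstand}) gives that $\mathrm{RM}(S, S, u', 1, 1)$ has minimum distance $s(s - u')$ provided $u' \leq s - 1$. Requiring this to be at least $d = s_2(s_1 - u)$ and using $s^2 = s_1 s_2$ yields $u' \leq u \sqrt{s_2/s_1}$. A short computation (using $u \leq s_1 - s_1/s_2$ and $s_1\sqrt{s_2/s_1} = s$) shows $u \sqrt{s_2/s_1} \leq s - s/s_2 < s - 1$, so the hypothesis $u' \leq s - 1$ is automatic and the dimension equals $(u'+1)(u'+2)/2 \leq (u\sqrt{s_2/s_1}+1)(u\sqrt{s_2/s_1}+2)/2$, which rearranges to the claimed upper bound.

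Finally, subtracting the two dimension expressions, the $\tfrac{1}{2} u^2 s_2/s_1$ terms cancel, and setting $t = \sqrt{s_2/s_1}$ the remainder equals $\tfrac{u}{2}(1 - 3t + 2t^2) = \tfrac{u}{2}(1 - t)(1 - 2t)$. This is nonnegative precisely when $t \leq 1/2$, i.e.\ $s_1 \geq 4 s_2$, which is the desired conclusion. The only delicate points are bookkeeping ones: confirming that we remain in the boundary regime of Proposition~\ref{wrmafstand}, and verifying that $u' \leq s - 1$ so that the triangular-number formula for the dimension of the $q$-ary code applies. Once those are in place, the factorisation $(1-t)(1-2t)$ makes the threshold $s_1 \geq 4 s_2$ appear at once.
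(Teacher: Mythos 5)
Your proposal is correct and follows essentially the same route as the paper's (much terser) proof: count lattice points under the weighted-degree constraint to get the dimension $(us_2/s_1+1)(u/2+1)$, read off the minimum distance $s_2(s_1-u)$ from the region-I case, deduce $u' \leq u\sqrt{s_2/s_1}$ for any competing $q$-ary Reed--Muller code, and apply the triangular-number formula. The only thing you add beyond the paper is the explicit final comparison via the factorisation $\tfrac{u}{2}(1-t)(1-2t)$ with $t=\sqrt{s_2/s_1}$, which the paper leaves implicit; this is a welcome completion rather than a different method.
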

\begin{proof}
The dimension of ${\mbox{RM}}(S_1,S_2,u,w_1,w_2)$ is
$$\sum_{i=1}^{us_2/s_1} \frac i {s_1}{s_2}+\frac{us_2}{s_1}+1$$
and the minimum distance is $s_2(s_1-u)$. Assuming
$u^\prime=u\sqrt{\frac{s_2}{s_1}}$ is an integer, the code
${\mbox{RM}}(S,S,u^\prime,1,1)$ is of minimum distance $s_2(s_1-u)$. This
code is of dimension
$$\frac{1}{2}(\frac{s_2}{s_1}u^2+3u\sqrt{\frac{s_2}{s_1}}+2).$$
\end{proof}
\begin{proposition}\label{syveren}
Consider integers $s_1$ and $s_2$ with $s_2 < s_1$. Let $u$ be an integer
with
$$s_1-\frac{s_1}{s_2} \leq u \leq s_1-1.$$
Assume $s_1s_2=s^2$ for some integer $s$. Let $w_1=1$ and $w_2=s_1-u$
(that is, $w_1$ and $w_2$ are chosen as in
Proposition~\ref{optimal}). The dimension of
${\mbox{RM}}(S_1,S_2,u,w_1,w_2)$ equals 
\begin{equation}
s_1s_2-\frac{s_2^2(s_1-u)}{2}+s_2-\frac{s_2(s_1-u)}{2}.\label{firkant1}
\end{equation}
If $u \geq s_1-s/s_2$ then any code ${\mbox{RM}}(S,S,u',1,1)$ of the
same or larger minimum distance is of dimension at most
$$s_1s_2-\frac{(s_1-u)s_2((s_1-u)s_2-1)}{2}$$
which is less than (\ref{firkant1}) for $s_2 < s_1$. If $u < s_1
-s/s_2$ then any code  ${\mbox{RM}}(S,S,u^{\prime \prime},1,1)$ of the
same or larger minimum distance is of dimension at most
$$\frac{1}{2}\big( \frac{s_2u}{s}+2\big)\big(\frac{s_2u}{s}+1\big).$$
This number is smaller than the value
of~(\ref{firkant1}) for $s_1>4s_2$ and equal if $s_1=4s_2$.
\end{proposition}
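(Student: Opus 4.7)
The plan is four-fold: (a) establish the dimension formula~(\ref{firkant1}) for ${\rm RM}(S_1,S_2,u,w_1,w_2)$, (b) identify its minimum distance as $(s_1-u)s_2$, (c) derive upper bounds on the dimension of any ${\rm RM}(S,S,u',1,1)$ with minimum distance at least $(s_1-u)s_2$, and (d) verify the concluding algebraic comparisons with~(\ref{firkant1}).

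For~(a), the set ${\mathbb{M}}$ consists of pairs $(i_1,i_2)$ with $0\leq i_1\leq s_1-1$, $0\leq i_2\leq s_2-1$, and $i_1+(s_1-u)i_2\leq u$. The hypothesis $u\geq s_1-s_1/s_2$ guarantees $u-(s_1-u)(s_2-1)\geq 0$, so every $i_2\in\{0,\ldots,s_2-1\}$ admits at least one $i_1$, while the bound $i_1\leq s_1-1$ is automatic since $u\leq s_1-1$. Summing the $s_2$ arithmetic-progression counts $u-(s_1-u)i_2+1$ gives~(\ref{firkant1}) after routine rearrangement. For~(b), with $w_1=1$ and $w_2=s_1-u$ we have $\rho=1/(s_1-u)\in[s_2/s_1,1]$, and $u$ sits precisely at the boundary $u=(s_1-1/\rho)w_1$ between~(\ref{EQ4}) and~(\ref{EQ5}); both formulae then give $d\geq(s_1-u)s_2$.

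For~(c), applying Theorem~\ref{footprspecial} (equivalently, Proposition~\ref{wrmafstand}) to ${\rm RM}(S,S,u',1,1)$ yields minimum distance $(s-u')s$ and dimension $(u'+1)(u'+2)/2$ when $u'\leq s-1$, and minimum distance $2s-1-u'$ and dimension $s_1s_2-(2s-1-u')(2s-u')/2$ when $s-1\leq u'\leq 2s-2$. When $u\geq s_1-s/s_2$ the target distance $(s_1-u)s_2\leq s$ is feasible in the second regime, and taking $2s-1-u'=(s_1-u)s_2$ maximises the dimension to the stated value. When $u<s_1-s/s_2$ one has $(s_1-u)s_2>s$, ruling out the second regime; the constraint $(s-u')s\geq(s_1-u)s_2$, combined with $s_1s_2=s^2$, reduces to $u'\leq us_2/s$, producing the second claimed bound.

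The main obstacle is the algebra in~(d). In the first sub-case the difference between~(\ref{firkant1}) and the bound factors cleanly as $\tfrac{s_2}{2}(s_1-u-1)\bigl[s_2(s_1-u)-2\bigr]$, manifestly non-negative and strictly positive except in the degenerate boundary $u=s_1-1$. In the second sub-case one has to analyse a quadratic expression in $u$; introducing $B=(s_1-u)s_2/s\in(1,s_2]$ and expanding~(\ref{firkant1}) minus the bound, I expect to rearrange the result into a form whose sign is controlled by the ratio $s_1/s_2$, showing strict positivity whenever $s_1>4s_2$ and equality exactly on the integer configuration obtained at $s_1=4s_2$ (when $B$ is an integer). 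Pinning down this factorisation is the delicate step; the rest of the argument is essentially lattice-point counting combined with the footprint bound.
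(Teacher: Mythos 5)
Your overall route is the same as the paper's: count lattice points column-by-column for the dimension, identify the two comparison regimes by whether $(s_1-u)s_2\leq s$, and compare dimensions. Parts (a)--(c) are essentially right, and in the first sub-case your factorisation $\frac{s_2}{2}(s_1-u-1)\bigl(s_2(s_1-u)-2\bigr)$ of the difference is exactly the paper's expression $(s_1-u-1)\bigl(s_2^2(s_1-u)/2-s_2\bigr)$. (One small slip in (c): the dimension of ${\mbox{RM}}(S,S,u',1,1)$ for $s-1\leq u'\leq 2s-2$ is $s^2-(2s-2-u')(2s-1-u')/2$, i.e.\ $s^2-d(d-1)/2$ with $d=2s-1-u'$, not $s^2-(2s-1-u')(2s-u')/2$; your own conclusion "the stated value" is the correct one, so this is an off-by-one in the intermediate formula only.)

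The genuine gap is in (d), second sub-case: you say you "expect to rearrange" the difference into a form controlled by $s_1/s_2$, but this is precisely the decisive computation and it is not done. Note that the sign of the difference is not governed by $s_1/s_2$ alone uniformly in $u$, so some reduction is needed. The paper's device is to observe that (\ref{firkant1}) is linear in $u$ while $\frac{1}{2}\bigl(\frac{s_2u}{s}+2\bigr)\bigl(\frac{s_2u}{s}+1\bigr)$ is convex in $u$, so their difference is concave in $u$ and attains its minimum at an endpoint of the admissible interval $[\,s_1-s_1/s_2,\ s_1-s/s_2\,]$. The right endpoint reduces to the already-settled first sub-case; at the left endpoint one substitutes $s_1=xs$, $s_2=s/x$ and the difference becomes $\frac{(s-x)(x-1)(x-2)}{2x}$, which vanishes at $x=2$ (i.e.\ $s_1=4s_2$) and is positive for $2<x<s$. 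Without the concavity observation (or an equivalent reduction to boundary values of $u$), your sketch does not yet establish the claimed positivity for $s_1>4s_2$ nor the equality case at $s_1=4s_2$.
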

\begin{proof}
Consider the first code which is of minimum distance $(s_1-u)s_2$. For
$i_2=s_2-1$ the value $i_1$ such that $w_1i_1+w_2i_2=u$ is
$i_1=u-(s_2-1)(s_1-u)$. Therefore the dimension equals
$$s_2(i_1+1)+\sum_{i=1}^{s_2-1}(s_1-u)i=s_1s_2-\frac{s_2^2(s_1-u)}{2}+s_2-\frac{s_2(s_1-u)}{2}.$$
If $(s_1-u)s_2 \leq s \Leftrightarrow u\geq s_1-s/s_2$ then for
$u'=2s-1-s_1s_2+us_2$ the code  ${\mbox{RM}}(S,S,u',1,1)$ is of the
same minimum distance. This code is of dimension
$$s^2-\sum_{i=1}^{(s_1-u)s_2-1}i=s_1s_2-\frac{(s_1-u)s_2((s_1-u)s_2-1)}{2}.$$
The dimension of the first code exceed the dimension of the latter
code by $(s_1-u-1)\big(s_2^2(s_1-u)/2-s_2\big)$ which is a positive
number for $u <s_1-1$ and equals zero for $u=s_1-1$.\\
If $(s_1-u)s_2 >s\Leftrightarrow u < s_1-s/s_2$ then imagining that
$s$ divides $s_2(s_1-u)$ the code  ${\mbox{RM}}(S,S,u^{\prime \prime},1,1)$ with
$$(s-u'')s=s_2(s_1-u) \Leftrightarrow u''=\frac{us_2}{s}$$
is of the same minimum distance as
${\mbox{RM}}(S_1,S_2,u,w_1,w_2)$. The dimension equals
$$\sum_{i=1}^{u''+1}i= \frac{1}{2}\big(
\frac{s_2u}{s}+2\big)\big(\frac{s_2u}{s}+1\big).$$  
Subtracting this expression from~(\ref{firkant1}) one gets a concave
function (a parabola) in $u$. Therefore the smallest value of the
difference is attained either for $u=s_1-s_1/s_2$ or for
$u=s_1-s/s_2$. Plugging in the first value and substituting $s_1=xs$,
$s_2=s/x$ one finds that the resulting function is zero for $x=2$ and
positive for $x\in ]2; s[$. Plugging in the latter value is not
needed as we already know from the first part of the theorem that here the
difference is positive.
\end{proof}
\begin{proposition}\label{otteren}
Consider integers $s_1$ and $s_2$ with $1 < s_2 < s_1$ and $s_1s_2=s^2$ where
$s$ is an integer. Let $u$ be an integer
with $s_1-1 \leq u \leq (s_1-1)+(s_2-1)$. Let $w_1=w_2=1$ (that
is, $w_1$ and $w_2$ are chosen as in
Proposition~\ref{optimal}). There is a Reed-Muller code over $S\times
S$ of the same minimum distance and the same dimension.
\end{proposition}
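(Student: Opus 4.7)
The plan is to match parameters directly. Write $u = (s_1 - 1) + k$ with $0 \le k \le s_2 - 1$; this parametrizes the region III range exactly. I will then express both the minimum distance and the dimension of ${\mbox{RM}}(S_1, S_2, u, 1, 1)$ as closed-form functions of $s_1, s_2, k$ and observe that, remarkably, they depend on $(s_1, s_2)$ only through the product $s_1 s_2 = s^2$ and through the combination $s_2 - k$. That is the crucial structural fact that makes the claim possible.

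For the minimum distance I invoke Proposition~\ref{afstand}: because ${\mathbb{M}}$ is closed under divisibility, the bound is attained, and with $\rho = 1$ equation (\ref{EQ8}) applies, giving $d = s_2 - k$. For the dimension I count the lattice points $(i_1, i_2)$ in $[0, s_1-1] \times [0, s_2 - 1]$ satisfying $i_1 + i_2 \le u$. Passing to complementary variables $j_1 = s_1 - 1 - i_1$, $j_2 = s_2 - 1 - i_2$, the excluded region becomes the simplex $\{(j_1, j_2) \in {\mathbb{N}}^2 : j_1 + j_2 \le s_2 - 2 - k\}$, which contains exactly $\binom{s_2 - k}{2}$ points. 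Hence
\begin{equation*}
\dim {\mbox{RM}}(S_1, S_2, u, 1, 1) = s_1 s_2 - \tfrac{1}{2}(s_2 - k)(s_2 - k - 1) = s^2 - \tfrac{1}{2}(s_2 - k)(s_2 - k - 1).
\end{equation*}

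Now define $k' = s - s_2 + k$ and set $u' = (s - 1) + k'$. The hypotheses $s_2 < s_1$ and $s_1 s_2 = s^2$ force $s_2 < s$, so $k' > 0$; and $k \le s_2 - 1$ gives $k' \le s - 1$, so $u'$ lies in the corresponding region III for the square case. Applying the same two formulas to ${\mbox{RM}}(S, S, u', 1, 1)$ yields minimum distance $s - k' = s_2 - k$ and dimension $s^2 - \tfrac{1}{2}(s - k')(s - k' - 1) = s^2 - \tfrac{1}{2}(s_2 - k)(s_2 - k - 1)$. Both quantities agree with those of the weighted code, proving the statement.

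No step is genuinely hard. The only thing to verify carefully is the admissibility of $k'$ (that $u'$ is a legitimate region III parameter for the square Reed-Muller construction), and that the counting of lattice points in the complementary triangle is right; both are straightforward given the explicit parametrization.
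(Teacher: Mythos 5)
Your proof is correct: the closed forms $d=s_2-k$ and $\dim=s^2-\binom{s_2-k}{2}$ are right (the lattice-point count in the complementary triangle checks out, as does the admissibility $1\le k'\le s-1$ of the matching square-case parameter). The paper actually states Proposition~\ref{otteren} without any proof, so there is nothing to compare against, but your direct parameter-matching argument is exactly the computation the authors implicitly rely on.
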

\begin{example}\label{extre}
This is a continuation of Example~\ref{exto}. Consider the graph in
Figure~\ref{figtvo}. 
First we take a look at the optimal weighted Reed-Muller codes corresponding to
$(s_1,s_2)=(64,16)$. For these codes region I
(Proposition~\ref{sekseren}) corresponds to rates $k/n$ below approximately
$0.5$. As $64=4 \cdot 16$ we expect the optimal weighted Reed-Muller codes
to behave very much like the corresponding Reed-Muller codes in this
region, which is indeed what the graph reveals. Considering values  
of $(s_1,s_2)$ with $s_1> 4s_2$, when 
$s_1/s_2$ increases the rates corresponding
to region I defines a smaller and smaller interval (starting of
course still with rate equal to 0). The improvements in region I
increases, but a more important contribution for the codes to become
better and better is that region II takes over at smaller
rates. Similarly, the interval of rates corresponding to region
III (Proposition~\ref{otteren}) becomes smaller and smaller. This is
the interval where the optimal
weighted Reed-Muller codes are (again) as bad as the $q$-ary Reed-Muller
codes. For $(s_1,s_2)=(64,16)$ this last mentioned interval starts at
approximately $0.875$. Already for $(s_1,s_2)=(128,8)$ the starting
point of the interval is around $0.97$.
\end{example}
Proposition~\ref{sekseren}, Proposition~\ref{syveren} and
Proposition~\ref{otteren} tell us that whenever $s_1 \geq 4s_2$ then
the optimal weighted Reed-Muller codes outperform the Reed-Muller
codes coming from $S \times S$. Example~\ref{exto} further suggests
that from that point further increasing $s_1$ and decreasing $s_2$
can only help. The following three propositions together confirm this
observation. As in previous propositions we will need to make a few assumptions on the codes
that we consider. Again we stress that when such assumptions do not hold
  then the formulas to be presented are still very close to be true.
\begin{proposition}\label{prob1}
Consider positive integers $s_1, s_2,s_1^{\prime}, s_2^{\prime}$ with
$s_1s_2=s_1^{\prime} s_2^{\prime}$, $s_1^{\prime} >s_1>s_2>s_2^{\prime}$ and
$s_1/s_2\geq 2$. Let $u$, $0<u\leq s_1-s_1/s_2$ be an integer with
$s_1 | us_2$ and consider the optimal weighted Reed-Muller code
\begin{equation}
{\mbox{RM}}(S_1,S_2,u,w_1=1,w_2=s_1/s_2).\label{eqqsnab1}
\end{equation}
If for an integer $u^{\prime}$ with $s_1^\prime | u^\prime s_2^\prime$
${\mbox{RM}}(S_1^\prime,S_2^\prime,u^\prime,w_1^\prime,w_2^\prime)$ is
    an optimal weighted Reed-Muller code of the same or smaller
    minimum distance as that of (\ref{eqqsnab1}) then the latter code
    is of dimension at least that of the first one.
\end{proposition}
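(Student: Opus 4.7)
My plan is to use Proposition~\ref{sekseren} to write down explicit dimension/minimum-distance formulas for both codes and then verify the desired inequality by a direct calculation after a helpful change of parameter.

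Applied to the unprimed code~(\ref{eqqsnab1}), Proposition~\ref{sekseren} directly gives $d=s_2(s_1-u)$ and the stated dimension formula. The primed code is an optimal weighted Reed-Muller code with $s_1^\prime\mid u^\prime s_2^\prime$, so it likewise lies in region~I of Proposition~\ref{optimal} (one checks that the divisibility condition rules out regions II and III), and the same formulas apply with primes everywhere. Introducing $m=us_2=vs_1$ and $m^\prime=u^\prime s_2^\prime=v^\prime s_1^\prime$, and using $n=s_1s_2=s_1^\prime s_2^\prime$, the pair $(k,d)$ becomes
\[
d=n-m, \qquad k=\frac{m^2}{2n}+\frac{m}{2s_2}+\frac{m}{s_1}+1,
\]
and analogously for the primed quantities. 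Under this substitution the hypothesis $d^\prime\leq d$ translates to $m^\prime\geq m$.

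Viewed as a function of the real variable $m^\prime\geq 0$, the expression $k^\prime(m^\prime)$ is strictly increasing, so it is enough to prove $k^\prime(m)\geq k(m)$, i.e.\ to evaluate $k^\prime$ at the worst point $m^\prime=m$. A short computation, exploiting the identity $1/s_1-1/s_1^\prime=(s_2-s_2^\prime)/n$ (which is immediate from $s_1s_2=s_1^\prime s_2^\prime=n$), collapses the difference to
\[
k^\prime(m)-k(m)=\frac{m(s_2-s_2^\prime)(s_1-2s_2^\prime)}{2s_2^\prime n}.
\]
This is strictly positive: $s_2>s_2^\prime$ by hypothesis, and $s_1\geq 2s_2>2s_2^\prime$ by the assumption $s_1/s_2\geq 2$. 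Combined with the monotonicity step, this gives $k^\prime(m^\prime)\geq k^\prime(m)>k$, as required.

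The main obstacle I expect is the bookkeeping around region~I: one has to argue carefully that the primed code indeed falls into the same regime of Proposition~\ref{optimal} so that Proposition~\ref{sekseren} can be invoked, and that the tacit divisibility $s_2^\prime\mid s_1^\prime$ (needed for the optimal weight $w_2^\prime=s_1^\prime/s_2^\prime$) is either given or can be treated as the continuous analogue does. Once that preliminary is in place, the rest of the proof is a one-line manipulation followed by a positivity check.
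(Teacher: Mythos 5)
Your core computation is sound and is essentially the paper's own region-I argument in a cleaner parametrization: the substitution $m=us_2$ turns the paper's improvement bound $-u\bigl(\tfrac12+\tfrac{s_2}{s_1}-\tfrac{s_2}{2s_2^\prime}-\tfrac{s_2}{s_1^\prime}\bigr)$ into the transparent factorization $m(s_2-s_2^\prime)(s_1-2s_2^\prime)/(2s_2^\prime n)$, whose positivity under $s_1\geq 2s_2>2s_2^\prime$ is exactly the paper's observation that the parabola $\mu^2(s_1/2)+\mu(-s_2-s_1/2)+s_2$ has roots $1$ and $2s_2/s_1\leq 1$; your monotonicity step in $m^\prime$ is the paper's ``$u^\prime\geq us_2/s_2^\prime$, evaluate at the extreme.'' Where you genuinely diverge is the case analysis. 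The paper handles the possibility that the competing code lies in region II by a separate, messier computation (a cubic in $\mu$). You instead invoke $s_1^\prime\mid u^\prime s_2^\prime$ to discard that case, and for region II this does work: the divisibility forces $u^\prime$ to be a multiple of $s_1^\prime/\gcd(s_1^\prime,s_2^\prime)\geq s_1^\prime/s_2^\prime$, and the largest such multiple below $s_1^\prime$ is $s_1^\prime-s_1^\prime/\gcd(s_1^\prime,s_2^\prime)\leq s_1^\prime-s_1^\prime/s_2^\prime$, so $u^\prime$ cannot lie in the interior of region II (and at the shared boundary the two weight choices coincide, so the region-I formula still applies). This is a legitimate shortcut the paper does not exploit, but it needs the one-line argument just given rather than a parenthetical assertion.

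The same divisibility does \emph{not}, however, exclude region III: $u^\prime=s_1^\prime$ satisfies $s_1^\prime\mid u^\prime s_2^\prime$ and lies in region III whenever $s_2^\prime\geq 3$, and such a code has minimum distance at most $s_2^\prime-1<d$, so it falls under the hypothesis ``same or smaller minimum distance.'' Your proof as written says nothing about this case. The repair is easy --- a region-III code omits at most $\binom{s_2^\prime}{2}<n/4$ monomials and so has dimension exceeding $3n/4$, while the unprimed code has dimension at most $\tfrac n2-\tfrac{s_1}{2}+s_2\leq \tfrac n2$ because $u\leq s_1-s_1/s_2$ and $s_1\geq 2s_2$ --- but it is needed, since the claim you lean on is false as stated. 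Your caveat about the tacit assumption $s_2^\prime\mid s_1^\prime$ matches the paper's own disclaimer that its formulas are only ``very close to true'' when such divisibility fails, so that point is not a defect relative to the paper's standard of rigor.
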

\begin{proof}
If the latter code is of minimum distance close to that of
(\ref{eqqsnab1}) then it belongs to region I or II. The codes are of
minimum distance $d=s_2(s_1-u)$ and
$d^\prime=s_2^\prime(s_1^\prime-u^\prime)$, respectively. Hence,
$u^\prime \geq us_2/s_2^\prime$. If the latter code is in region I
the
improvement in dimension is at least
$$-u(\frac{1}{2}+\frac{s_2}{s_1}-\frac{s_2}{2s_2^\prime}-\frac{s_2}{s_1^\prime})$$
which is positive when $s_2+s_1/2<s_2^\prime+s_1^\prime/2$. Writing
$s_1^\prime=\mu s_1$ this corresponds to
\begin{equation}
\mu^2(s_1/2)+\mu(-s_2-s_1/2)+s_2>0. \label{eqqstar}
\end{equation}
The left side is a convex parabola with roots $\mu=1$ and $\mu=2s_2/s_1$. The assumption $s_1/s_2\geq 2$ therefore
guarantees that~(\ref{eqqstar}) holds for all $\mu >1$.\\
Assume next that the latter code in the proposition is in region II. The
improvement can be calculated to be
\begin{equation}
u^2(-\frac{s_2}{2s_1})+u(\frac{s_2s_2^{\prime}}{2}+\frac{s_2}{2}-\frac{s_2}{2s_1})+(\frac{s_1s_2}{2}-\frac{s_2^{\prime}
  s_1s_2}{2}+s_2^{\prime} -1) \label{eqqdiamond}
\end{equation}
which is a concave function in $u$. Our assumptions give
$s_1-s_1/s_2^\prime \leq u \leq s_1-s_1/s_2$ and therefore it is
enough to plug $u=s_1-s_1/s_2^\prime$ and $u= s_1-s_1/s$ into
(\ref{eqqdiamond}) and then to check that the resulting values are
positive. The first value is positive if 
\begin{equation}
\mu^3(-\frac{s_1}{2s_2})+\mu^2(1+\frac{s_1}{2s_2}+\frac{s_1}{2})+\mu(-1-\frac{s_1}{2}-s_2)+s_2 \label{eqqsnabel}
\end{equation}
is positive. The roots of this function in $\mu$ are $0$, $2s_2/s_1$,
and $s_2$. Hence, (\ref{eqqsnabel}) is indeed positive for $\mu \in
]1,s_2[$. When $u=s_1-s_1/s_2$ is plugged into~(\ref{eqqdiamond}) we
get $$\frac{s_2}{\mu}+\frac{s_1s_2}{2}-s_2-\frac{s_1s_2}{2\mu}$$
which is positive for $\mu >1$.
\end{proof}
\begin{proposition}
Consider positive integers $s_1,s_2,s_1^\prime,s_2^\prime$ with
$s_1s_2=s_1^\prime s_2^\prime$, $s_1^\prime >s_1>s_2>s_2^\prime$, and
$s_1/s_2\geq 2$. Let $u$, $s_1-s_1/s_2 \leq u \leq s_1-1$ be an
integer and consider the optimal weighted Reed-Muller code
\begin{equation}
{\mbox{RM}}(S_1,S_2,u,w_1=1,w_2=s_1-u). \label{eqqtrekant}
\end{equation}
If for an integer $u^\prime$ ${\mbox{RM}}(S_1^\prime, S_2^\prime
,u^\prime, w_1^\prime,w_2^\prime)$ is an optimal weighted Reed-Muller
code of the same or smaller minimum distance as that
of~(\ref{eqqtrekant}) then the latter code is of dimension at least
that of the first one. 
\end{proposition}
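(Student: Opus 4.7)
The plan mirrors the structure of the proof of Proposition~\ref{prob1}. First, I would record from (\ref{firkant1}) in Proposition~\ref{syveren} that the code~(\ref{eqqtrekant}) has dimension $k=s_1s_2+s_2-(s_1-u)\,s_2(s_2+1)/2$ and minimum distance $d=(s_1-u)s_2$. The region~II hypothesis $s_1-u\leq s_1/s_2$ forces $d\leq s_1<s_1'$; in particular any region~I optimal code over $S_1'\times S_2'$ has minimum distance at least $s_2'\cdot(s_1'/s_2')=s_1'>d$, contradicting $d'\leq d$. So the comparison code must sit in region~II or region~III in the sense of Proposition~\ref{optimal}, and I would split the argument accordingly.

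For region~II of the comparison code, Proposition~\ref{syveren} gives $k'=s_1's_2'+s_2'-\delta'\,s_2'(s_2'+1)/2$ with $\delta'=s_1'-u'$ and $d'=\delta's_2'$. Because $k'$ is decreasing in $\delta'$, the tightest test is the largest admissible $\delta'$, i.e.\ $\delta's_2'=d$, and using $s_1s_2=s_1's_2'$ the cross terms cancel to yield
\begin{equation*}
k'-k=(s_2-s_2')\,(d/2-1),
\end{equation*}
which is non-negative since $s_2>s_2'$ and $d\geq s_2\geq 2$. Replacing $\delta'$ by its integer floor only decreases $\delta'$ further and hence increases $k'$, so the usual integrality slack is harmless.

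For region~III of the comparison code, $w_1'=w_2'=1$, and a direct count of the complementary monomials in the corner of $\{(i_1,i_2):0\leq i_1\leq s_1'-1,\ 0\leq i_2\leq s_2'-1\}$ gives $k'=s_1's_2'-(d'-1)d'/2$ with $d'=s_1'+s_2'-1-u'$. Since $k'$ is decreasing in $d'$, the worst case is the boundary value $d'=s_2'$ (the largest minimum distance attainable in region~III), where
\begin{equation*}
k'-k=(s_1-u)\,s_2(s_2+1)/2-s_2-s_2'(s_2'-1)/2\ \geq\ s_2(s_2-1)/2-s_2'(s_2'-1)/2\ \geq\ 0,
\end{equation*}
using $s_1-u\geq 1$ and monotonicity of $x\mapsto x(x-1)$ on $[1,\infty)$. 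The only real obstacle is the clean verification of these two algebraic inequalities; everything else is a direct lift of the region analysis already laid down in Propositions~\ref{syveren}, \ref{otteren} and~\ref{prob1}.
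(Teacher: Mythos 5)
Your proposal is correct and follows essentially the same route as the paper: the heart of both arguments is the region~II comparison, where your expression $(s_2-s_2')(d/2-1)$ is exactly the paper's bound $(s_2-s_2')\big(\tfrac{s_1s_2}{2}-\tfrac{s_2u}{2}-1\big)$, minimized at $u=s_1-1$ and non-negative since $s_2\geq 2$. You are in fact slightly more complete than the paper, which dismisses comparison codes of much smaller minimum distance with the phrase ``close to that of~(\ref{eqqtrekant})'', whereas you explicitly rule out region~I and verify the region~III case.
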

\begin{proof}
If the latter code is of minimum distance close to that
of~(\ref{eqqtrekant}) then it belongs to region II. As in the proof of
the preceding proposition we have $u^\prime \geq u
s_2/s_2^\prime$. The improvement in dimension can be calculated to be
at least $$(s_2-s_2^\prime)(\frac{s_1s_2}{2}-\frac{s_2u}{2}-1)$$ which
takes on its minimal value $s_2/2-1$ for $u=s_1-1$. Combining this
with the assumption $s_2>s_2^\prime \geq 1$ proves the proposition. 
\end{proof}
\begin{proposition}
Consider positive integers $s_1, s_2, s_1^\prime , s_2^\prime$ with
$s_1s_2=s_1^\prime s_2^\prime $, $s_1^\prime >s_1
>s_2>s_2^\prime$. Let $u$, $s_1-1 \leq u \leq (s_1-1)+(s_2-1)$ be an
integer and consider the optimal weighted Reed-Muller code
\begin{equation}
{\mbox{RM}}(S_1, S_2, u, w_1=1, w_2=1). \label{eqqtilde} 
\end{equation}
If for an integer $u^\prime$
${\mbox{RM}}(S_1^\prime,S_2^\prime,u^\prime,w_1^\prime=1, w_2^\prime)$
is an optimal weighted Reed-Muller code of the same or smaller minimum
distance as that of~(\ref{eqqtilde}) then the latter code is of
dimension at least that of the first one.
\end{proposition}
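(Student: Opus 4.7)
The plan is to compute $\dim_1$ explicitly and then to case-split on the region occupied by the second code, exploiting the strict chain $s_2'<s_2<s_1<s_1'$ throughout. Since $w_1=w_2=1$ and $s_1-1\le u\le s_1+s_2-2$, the bound~(\ref{EQ8}) of Proposition~\ref{wrmafstand} gives minimum distance $d=s_1+s_2-1-u$ with $1\le d\le s_2$. The basis monomials of the first code are the lattice points in $[0,s_1-1]\times[0,s_2-1]$ with $i_1+i_2\le u$; passing to the complementary triangle in the far corner (whose legs have length $d-1\le s_2-1$ and therefore fit inside the rectangle) yields the clean closed form
\[
\dim_1 \ = \ s_1s_2-\binom{d}{2}.
\]

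Next I split on where the second code sits. Region I is impossible, since there the minimum distance is $d'=s_2'(s_1'-u')\ge s_1'$, whereas $d'\le d\le s_2<s_1<s_1'$. If the second code lies in region III, then $w_1'=w_2'=1$ and the identical count gives $\dim_2=s_1's_2'-\binom{d'}{2}=s_1s_2-\binom{d'}{2}$; the hypothesis $d'\le d$ then immediately yields $\dim_2\ge\dim_1$. If the second code lies in region II, Proposition~\ref{syveren}'s dimension formula---whose proof is a pure combinatorial count of the monomials below the line $i_1+(s_1'-u')i_2=u'$ and therefore goes through without the perfect-square hypothesis---reduces, upon substituting $d'=s_2'(s_1'-u')$ and $s_1's_2'=s_1s_2$, to
\[
\dim_2 \ = \ s_1s_2 \ - \ \frac{d'(s_2'+1)}{2} \ + \ s_2'.
\]
This is decreasing in $d'$, so the tightest bound is at $d'=d$, where a short calculation factors the improvement as
\[
\dim_2-\dim_1 \ \ge \ \binom{d}{2}-\frac{d(s_2'+1)}{2}+s_2' \ = \ \frac{(d-2)(d-s_2')}{2}.
\]
Because region II forces $d'\ge s_2'$, we have $s_2'\le d'\le d$, so both factors are non-negative as soon as $d\ge 2$; for $d=1$ one is forced to $s_2'=d'=1$ and the bound is $0$ as well. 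In all cases $\dim_2\ge\dim_1$.

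There is no conceptually hard step; the only care-demanding point is confirming that the dimension formula of Proposition~\ref{syveren} transfers to the primed parameters, which requires revisiting its proof to note that the perfect-square assumption $s_1s_2=s^2$ enters only for the minimum-distance comparison with a $q$-ary Reed-Muller code and not for the combinatorial count of basis monomials. The sign check $(d-2)(d-s_2')\ge 0$ is then a one-line consequence of the pinch $s_2'\le d'\le d$ imposed by region II.
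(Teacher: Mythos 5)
Your proof is correct and follows essentially the same route as the paper's: dispose of region III by the identical triangle count, and in region II compare the explicit dimension formula of Proposition~\ref{syveren} (valid as a pure lattice-point count) against $s_1s_2-\binom{d}{2}$, obtaining an improvement of $\tfrac{1}{2}d^2+d(-1-\tfrac{s_2'}{2})+s_2'=\tfrac{(d-2)(d-s_2')}{2}$, which is non-negative because region II pins $s_2'\leq d'\leq d$. You merely make explicit the computations the paper leaves as "can be calculated to be at least," including the correct observation that the perfect-square hypothesis of Proposition~\ref{syveren} is not needed for the dimension formula itself.
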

\begin{proof}
The latter code either belongs to region II or III. For those in 
region III the result is pretty obvious so we consider only codes in
region II. Let $d$ be the minimum distance of the code
in~(\ref{eqqtilde}). We have $u^\prime
\geq (s_1s_2-d)/s_2^\prime$. The improvement in dimension can be
calculated to be at least
$$\frac{1}{2}d^2+d(-1-\frac{s_2^\prime}{2})+s_2^\prime.$$
We may assume $d >s_2^\prime$ as we are in region II and the result follows.
\end{proof}
The construction of weighted Reed-Muller codes is very concrete, but
for completeness we should mention that it is not the most optimal. Consider instead the codes $E({\mathbb{M}},{\mathcal{S}})$ with ${\mathcal{S}}=S_1 \times
\cdots \times S_m$ and 
\begin{equation}
{\mathbb{M}}=\{ X_1^{i_1} \cdots X_m^{i_m} \mid (s_1-i_1)\cdots
(s_m-i_m) \geq \delta\}. \label{defmjc}
\end{equation}
Among the codes with designed distance $\delta$
(Theorem~\ref{afstand}) these are the codes of highest possible
dimension. When $S_1=\cdots =S_m={\mathbf{F}}_q$ holds the
construction simply is that of Massey-Costello-Justesen codes
 (see~\cite{masseycostellojustesen} and \cite{kabatianski}). 
\section{Dual codes}
As is well-known, for the special case of ${\mathcal{S}}={\mathbf{F}}_q
\times \cdots \times {\mathbf{F}}_q$ the duals of $q$-ary Reed-Muller
codes, weighted Reed-Muller codes, and Massey-Costello-Justesen codes,
respectively, are $q$-ary Reed-Muller codes, weighted Reed-Muller
codes, and hyperbolic codes, respectively \cite{abs},
\cite{hyperbolic} (for the definition of
hyperbolic codes we refer to (\ref{eqhyp}) below). 
More examples of codes
$E({\mathbb{M}},{\mathcal{S}})$ where similar neat correspondences hold can be
found in~\cite{brasosul}. Turning to a general point ensemble
${\mathcal{S}}=S_1 \times \cdots \times S_m$, however, it does not in
general hold that the dual of a weighted Reed-Muller code
is again a weighted Reed-Muller code. 
Nor does it hold in general that the dual of a Massey-Costello-Justesen
code is a hyperbolic code. For a simple counter example which fits the
description of a weighted Reed-Muller code as well as the description
of a Massey-Costello-Justesen code consider the ordinary Reed-Solomon
code over ${\mathcal{S}}=\mathbf{F}_q^\ast$ and recall that
${\mbox{ev}}_{\mathcal{S}}(1)$ is not a parity check for this
particular code.\\
Fortunately, for the class of codes
\begin{eqnarray}
E^\perp({\mathbb{M}},\mathcal{S})&=\{ \vec{c} \in
{\mathbf{F}}_q^{n=|\mathcal{S}|}\mid \vec{c} \cdot
{\mbox{ev}}_{\mathcal{S}}(M)=0, {\mbox{ for all }} M \in {\mathbb{M}} \} \nonumber
\end{eqnarray}
we have a technique similar to that of Section~\ref{secto} to estimate
the minimum distance. This technique is known as the Feng-Rao bound
(\cite{FR1}, \cite{FR2}). We now
recall this bound following the description of Shibuya
and Sakaniwa in~\cite{shibuya}. Consider the
following definition of a linear code.
\begin{definition}
Let $B=\{ {\vec{b}}_1, \ldots ,  {\vec{b}}_n\}$ be a basis
for ${\mathbf{F}}_{q}^{n}$ and let $G\subseteq B$. We define  $C(B,G)={\mbox{Span}}_{\mathbf{F}_{q}}\{\vec{b} \mid
\vec{b} \in G\}$. The dual code  is denoted
$C^{\perp}(B,G)$.
\end{definition}
The Feng-Rao bound calls for the following set of spaces.
\begin{definition}
Let $L_{-1}=\emptyset$, $L_0=\{ \vec{0} \}$ and
$L_l={\mbox{Span}}_{\mathbf{F}_{q}}\{ \vec{b}_1, \ldots ,\vec{b}_{l}\}$ for $l=1, \ldots ,n$.
\end{definition}
We obviously have a chain of spaces $\{\vec{0} \}=L_0 \subsetneq L_1
\subsetneq \cdots \subsetneq L_{n-1} \subsetneq L_n={\mathbf{F}}_q^n$. Hence,
we can define a function as follows.
\begin{definition}\label{defwwb}
Define $\bar{\rho}: {\mathbf{F}}_{q}^{n} \rightarrow \{0, 1, \ldots ,n\}$ by
$\bar{\rho}(\vec{v})=l$ if $\vec{v} \in L_l \backslash L_{l-1}$. 
\end{definition}
\begin{definition}
Let $I=\{1,
2, \ldots , n\}$. An ordered pair $(i,j) \in I^2$ is said to be well-behaving
if $\bar{\rho}(\vec{b}_u \ast \vec{b}_v) < \bar{\rho}(\vec{b}_i \ast \vec{b}_j)$
for all $u$ and $v$ with $1 \leq u \leq i, 1 \leq v \leq j$ and $(u,v)
\neq (i, j)$. Here, $\ast$ is the componentwise product. 
\end{definition}
\begin{definition}
For $l=1, \ldots , n$ define
$$\bar{\mu}(l)=\# \{ (i,j) \in I^2 \mid (i,j){\mbox{ is well-behaving and }}
\bar{\rho}(\vec{b}_i \ast \vec{b}_j)=l\}.$$  
\end{definition}
The Feng-Rao bound now is (\cite[Prop.\ 1]{shibuya}):
\begin{theorem}
The minimum distance of $C^{\perp}(B,G)$ is at least
$$\min \{ \bar{\mu} (l) \mid \vec{b}_l \notin G\}.$$
\end{theorem}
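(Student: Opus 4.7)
The plan is to associate to each nonzero $\vec{c}\in C^\perp(B,G)$ a \emph{syndrome matrix} whose rank computes the Hamming weight of $\vec{c}$, and then to lower bound this rank by exhibiting a large triangular minor.

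\textbf{Step 1 (syndrome matrix).} For a nonzero $\vec{c}$, define $S(\vec{c})_{i,j}=(\vec{b}_i\ast\vec{b}_j)\cdot\vec{c}$. Letting $M$ denote the (invertible) $n\times n$ matrix whose $i$-th row is $\vec{b}_i$ and $D_{\vec{c}}=\mathrm{diag}(c_1,\ldots,c_n)$, a direct computation gives $S(\vec{c})=M\,D_{\vec{c}}\,M^{T}$, whence $\mathrm{rank}\,S(\vec{c})=\mathrm{rank}\,D_{\vec{c}}=w(\vec{c})$.

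\textbf{Step 2 (choice of $l^{\ast}$).} Set $\sigma_{l}=\vec{b}_{l}\cdot\vec{c}$. Because $\vec{c}\in C^\perp(B,G)$, we have $\sigma_{l}=0$ whenever $\vec{b}_{l}\in G$; because $\vec{c}\neq 0$ and $B$ is a basis, some $\sigma_{l}$ is nonzero. Let $l^{\ast}=\min\{\,l\ :\ \sigma_{l}\neq 0\,\}$. Then $\vec{b}_{l^{\ast}}\notin G$, and every $\vec{v}\in L_{l^{\ast}-1}$ satisfies $\vec{v}\cdot\vec{c}=0$.

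\textbf{Step 3 (triangular minor).} Enumerate the $r=\bar{\mu}(l^{\ast})$ well-behaving pairs with $\bar{\rho}(\vec{b}_i\ast\vec{b}_j)=l^{\ast}$ as $(i_1,j_1),\ldots,(i_r,j_r)$, sorted so that $i_1<\cdots<i_r$. An antichain argument shows the $i_a$'s are distinct and $j_1>\cdots>j_r$: if $(i_a,j_a)\leq(i_b,j_b)$ coordinatewise with $a\neq b$, the well-behaving property of the larger pair would force $\bar{\rho}$ of the smaller to be strictly less than $l^{\ast}$, contradicting our enumeration. Consider the $r\times r$ submatrix $S'$ of $S(\vec{c})$ with row indices $i_1,\ldots,i_r$ and column indices $j_1,\ldots,j_r$. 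Expanding $\vec{b}_{i_a}\ast\vec{b}_{j_a}=\beta_a\vec{b}_{l^{\ast}}+\vec{v}_a$ with $\beta_a\neq 0$ and $\vec{v}_a\in L_{l^{\ast}-1}$ yields $S'_{a,a}=\beta_a\sigma_{l^{\ast}}\neq 0$ by Step 2. For $a<b$, the pair $(i_a,j_b)$ sits strictly below $(i_b,j_b)$ in the product order, so the well-behaving property of $(i_b,j_b)$ gives $\bar{\rho}(\vec{b}_{i_a}\ast\vec{b}_{j_b})<l^{\ast}$, whence $\vec{b}_{i_a}\ast\vec{b}_{j_b}\in L_{l^{\ast}-1}$ and $S'_{a,b}=0$. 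Hence $S'$ is lower triangular with nonzero diagonal, so $\det S'\neq 0$ and $\mathrm{rank}\,S(\vec{c})\geq r=\bar{\mu}(l^{\ast})$.

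\textbf{Step 4 (conclusion).} Combining Steps 1 and 3 gives $w(\vec{c})\geq\bar{\mu}(l^{\ast})\geq\min\{\bar{\mu}(l)\ :\ \vec{b}_l\notin G\}$, and taking the minimum over nonzero codewords of $C^\perp(B,G)$ establishes the bound. The main technical point is the triangularity in Step 3: entries of $S'$ strictly below the diagonal need not vanish, so the argument genuinely requires the one-sided nature of the well-behaving inequality together with the minimal (rather than maximal) choice of $l^{\ast}$, which is what forces every "lower-order" term in the expansion of $\vec{b}_i\ast\vec{b}_j$ to be annihilated when paired with $\vec{c}$.
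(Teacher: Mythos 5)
Your proof is correct and complete. Note that the paper itself offers no proof of this statement: it is quoted verbatim from Shibuya and Sakaniwa (cited as Proposition 1 of that reference), so there is nothing in the text to compare against line by line. What you have written is the standard argument for the Feng--Rao/order bound: factor the syndrome matrix as $M D_{\vec{c}} M^{T}$ to identify its rank with $w(\vec{c})$, pick $l^{\ast}$ minimal with $\vec{b}_{l^{\ast}}\cdot\vec{c}\neq 0$ (which forces $\vec{b}_{l^{\ast}}\notin G$ and annihilates all of $L_{l^{\ast}-1}$), and use the antichain structure of the well-behaving pairs at level $l^{\ast}$ to extract a triangular $\bar{\mu}(l^{\ast})\times\bar{\mu}(l^{\ast})$ minor with nonzero diagonal. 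All the details check out: the antichain argument correctly yields $i_1<\cdots<i_r$ and $j_1>\cdots>j_r$, the diagonal entries are $\beta_a\sigma_{l^{\ast}}\neq 0$, and the one-sided vanishing for $a<b$ follows from the well-behaving property of $(i_b,j_b)$ applied to $(i_a,j_b)$. Your closing remark about why only one triangle vanishes, and why the minimal choice of $l^{\ast}$ is the right one, is exactly the right technical point to flag.
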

Turning to the codes $E(\mathbb{M},\mathcal{S})$ we enumerate the basis 
$$ \{{\mbox{ev}}_{\mathcal{S}}(X_1^{i_1} \cdots X_m^{i_m}) \mid 0 \leq
i_1 < s_1, \ldots , 0\leq i_m < s_m\}$$
according to a total degree lexicographic ordering on the  monomials
$X_1^{i_1}\cdots X_m^{i_m}$. For
${\vec{b}}_l={\mbox{ev}}_{\mathcal{S}}(X_1^{i_1}\cdots X_m^{i_m})$,
 $0 \leq i_1 < s_1, \ldots ,0 \leq i_m < s_m$, we clearly have
$\bar{\mu}(l) \geq (i_1+1)\cdots(i_m+1)$. Hence, by the Feng-Rao bound the
minimum distance of $E^\perp({\mathbb{M}},\mathcal{S})$ is at least
\begin{eqnarray}
\min \{(i_1+1)\cdots (i_m+1) \mid X_1^{i_1}\cdots X_m^{i_m} \notin
{\mathbb{M}}, 0 \leq i_1 < s_1, \ldots , 0\leq i_m < s_m\}. \label{eqfrao}
\end{eqnarray}
Consider the code
$$({\mbox{RM}}(S_1, \ldots , S_m, (s_1-1) \cdots
(s_m-1)-u-\epsilon,w_1,\ldots , w_m))^\perp$$
where $\epsilon$ is a very small positive number. The
bound~(\ref{eqfrao}) tells us that the minimum distance is at least
that of the weighted Reed-Muller code ${\mbox{RM}}(S_1, \ldots ,
S_m,$ $u,w_1, \ldots ,w_m)$. Observe that the codes are of the same
dimension. Similarly, the hyperbolic code 
$E^\perp({\mathbb{M}},\mathcal{S})$ 
\begin{eqnarray}
\mathbb{M}&=&\{X_1^{i_1}\cdots X_m^{i_m} \mid (i_1+1)\cdots (i_m+1) <
\delta\} \label{eqhyp}
\end{eqnarray}
has designed minimum distance equal to $\delta$ just as the
Massey-Costello-Justesen code in~(\ref{defmjc}). Again, the two codes
are of the same dimension. \\
The Feng-Rao bound comes with a decoding algorithm that corrects up to
half the designed minimum distance~\cite{FR2,handbook}. This algorithm of
course applies in particular to the above dual codes. \\
The remaining part of the paper is concerned with
decoding algorithms for the codes $E({\mathbb{M}},\mathcal{S})$
including the codes from Section~\ref{secwrm}.
\section{Subfield subcode decoding}\label{secsubdec}
As already noted by Kasami et al. in~\cite{kasami}, any ordinary $q$-ary
Reed-Muller code (in the terminology of the present paper this means a
$q$-ary Reed-Muller code from ${\mathcal{S}}={\mathbf{F}}_q \times
\cdots \times {\mathbf{F}}_q$) can be seen as a subfield subcode of a Reed-Solomon
code. The Reed-Solomon code will be over the field
${\mathbf{F}}_{q^m}$ and is constructed by evaluating polynomials of
degree at most $uq^{m-1}$ in the $q^m$ different elements of
${\mathbf{F}}_{q^m}$. The above observation guarantees that codes
$E({\mathbb{M}},\mathcal{S})$ in general can be seen as subcodes of subfield
subcodes of certain Reed-Solomon codes over ${\mathbf{F}}_{q^m}$, but
it is not straightforward which elements of ${\mathbf{F}}_{q^m}$ to
use. This problem, however, is easy to overcome if we use the approach
by Santhi~\cite{santhi}.\\ 
Let  $\{\vec{b}_1, \ldots , \vec{b}_n\}$ be a basis for
${\mathbb{F}}_{q^m}$ as a vectorspace over ${\mathbf{F}}_q$. Following
Santhi we now define a map $\varphi : {\mathbb{F}}_q^m\rightarrow
{\mathbb{F}}_{q^m}$ by
\begin{equation}
\varphi(a_1, \ldots , a_m) =a_1\vec{b}_1+\cdots +a_m\vec{b}_m \label{eqsnab1}
\end{equation}
and note that $\varphi (a_1, \ldots ,
a_m)^{q^v}=a_1\vec{b}_1^{q^v}+\cdots +a_m\vec{b}_m^{q^v}$. Writing
$X=\varphi(a_1, \ldots , a_m)$ we have
$$
\left[ \begin{array}{ccc}
\vec{b}_1&\cdots & \vec{b}_m\\
\vec{b}_1^q&\cdots & \vec{b}_m^q\\
\vdots&\ddots&\vdots\\
\vec{b}_1^{q^{m-1}}&\cdots & \vec{b}_m^{q^{m-1}}
\end{array}\right]
\left[ \begin{array}{c}
a_1\\
a_2\\
\vdots\\
a_m
\end{array}
\right]
=
\left[ \begin{array}{c}
X\\
X^q\\
\vdots \\
X^{q^{m-1}}
\end{array}
\right].
$$
By~\cite[Cor.\ 2.38]{niederreiter} the matrix on the left side is
invertible and therefore there exist polynomials $F_1, \ldots , F_m
\in {\mathbb{F}}_{q^m}[T]$ such that $a_i=F_i(X)$. The polynomials
$F_1, \ldots , F_m$ do not depend on the point $(a_1, \ldots , a_m)$
under consideration. From this observation one deduces that
$F(P)=F(F_1(\varphi(P)), \ldots  ,F_m(\varphi(P)))$ for all $P \in
{\mathbb{F}}_q^m$.
\begin{theorem}\label{tobefilledin}
Write ${\mathcal{S}}=\{P_1, \ldots , P_n\}$. The code $E({\mathbb{M}},{\mathcal{S}})$ is a
subcode of a subfield subcode of the Reed-Solomon code over
${\mathbf{F}}_{q^m}$ which is constructed by evaluating polynomials of
degree at most
\begin{equation}
t=\max \{\deg M \mid M \in {\mathbb{M}} \} \label{eqttt}
\end{equation}
in the elements $\varphi(P_1),\ldots , \varphi(P_n)$. Here $\varphi$
is the function in (\ref{eqsnab1}).
\end{theorem}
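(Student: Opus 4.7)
The plan is to encode each multivariate polynomial $F\in {\mathbf{F}}_q[X_1,\ldots ,X_m]$ as a univariate polynomial $\widetilde{F}\in {\mathbf{F}}_{q^m}[T]$ whose evaluation at $\varphi(P_1),\ldots ,\varphi(P_n)$ reproduces $(F(P_1),\ldots ,F(P_n))$, and then to control the degree of $\widetilde{F}$. The substitution is already prescribed by the discussion preceding the theorem: set $\widetilde{F}(T)=F(F_1(T),\ldots ,F_m(T))$, where the polynomials $F_i\in {\mathbf{F}}_{q^m}[T]$ arise from inverting the displayed matrix equation.

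First I would verify that $\varphi$ is a bijection from ${\mathbf{F}}_q^m$ onto ${\mathbf{F}}_{q^m}$: it is ${\mathbf{F}}_q$-linear, injective by linear independence of $\vec{b}_1,\ldots ,\vec{b}_m$, and its domain and codomain both have cardinality $q^m$. Hence $\varphi(P_1),\ldots ,\varphi(P_n)$ are pairwise distinct elements of ${\mathbf{F}}_{q^m}$, so evaluation at these points genuinely defines a Reed-Solomon code. Applying the identity $F(P)=F(F_1(\varphi(P)),\ldots ,F_m(\varphi(P)))$ already noted in the text, I conclude
$${\mbox{ev}}_{\mathcal{S}}(F)=\bigl(\widetilde{F}(\varphi(P_1)),\ldots ,\widetilde{F}(\varphi(P_n))\bigr).$$
Since each entry $F(P_i)$ lies in ${\mathbf{F}}_q$, the vector belongs to the subfield subcode of the ambient Reed-Solomon code over ${\mathbf{F}}_{q^m}$; taking linear spans over $F$ associated with the monomials of $\mathbb{M}$ places $E(\mathbb{M},{\mathcal{S}})$ inside that subfield subcode, which is exactly the subcode claim.

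The main obstacle is the degree bound. Each $F_i$ is obtained by inverting a Moore-type matrix whose rows are the Frobenius powers of $\vec{b}_1,\ldots ,\vec{b}_m$, so $F_i$ is a linearized polynomial in $T$ and therefore has degree at most $q^{m-1}$. Substituting the $F_i$'s into a monomial $X_1^{i_1}\cdots X_m^{i_m}\in \mathbb{M}$ produces a polynomial in $T$ whose degree is at most $q^{m-1}(i_1+\cdots +i_m)$, and by linearity this bound transfers to every element of ${\mbox{Span}}_{{\mathbf{F}}_q}\mathbb{M}$. Reading off the maximum over $\mathbb{M}$ yields the claimed Reed-Solomon degree bound $t$ from~(\ref{eqttt}). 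This bookkeeping is the step to be done carefully; everything else is an unpacking of the setup already laid down before the theorem.
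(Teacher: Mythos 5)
Your argument is correct and is essentially the paper's own: the paper states Theorem~\ref{tobefilledin} without a separate proof because it is a direct consequence of the Santhi-style setup laid out immediately before it, and your proposal simply unpacks that setup (bijectivity of $\varphi$, the identity $F(P)=F(F_1(\varphi(P)),\ldots ,F_m(\varphi(P)))$, and the degree count for the linearized polynomials $F_i$). One point to tidy up: your bookkeeping correctly produces the degree bound $q^{m-1}(i_1+\cdots +i_m)\leq tq^{m-1}$ for $\widetilde{F}$, yet your closing sentence calls this ``the claimed degree bound $t$''. What you derived is $tq^{m-1}$, not $t$, and that is the bound the paper actually relies on afterwards (the Reed--Solomon code in~(\ref{eqtrknt}) has dimension $tq^{m-1}+1$); the factor $q^{m-1}$ appears to have been dropped from the literal wording of the theorem, so state the bound as $tq^{m-1}$ explicitly rather than identifying it with~(\ref{eqttt}).
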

Following the Pellikaan-Wu approach~\cite{pw_ieee} we can now decode
$E({\mathbb{M}},{\mathcal{S}})$ by applying the Guruswami-Sudan list decoding
algorithm to the corresponding Reed-Solomon code \cite{GS} and by performing a
few additional steps. The complexity of the Guruswami-Sudan list
decoding algorithm is in the literature often claimed to be
${\mathcal{O}}(n^3)$.  For more precise statements of the decoding complexity which
takes the multiplicity into account we refer
to~\cite{beelenbrander}. The Guruswami-Sudan algorithm corrects up to
$n(1-\sqrt{R})$ errors of the Reed-Solomon code. It is therefore clear that the above approach can
decode up to
\begin{equation}
\lceil n (1-\sqrt{\frac{t q^{m-1}+1}{n}})\rceil \label{eqtrknt}
\end{equation} 
errors of $E({\mathbb{M}},{\mathcal{S}})$ (Here, $t$ is as in (\ref{eqttt})). This is indeed a fine result for many
codes $E({\mathbb{M}},{\mathcal{S}})$. However, it is also clear that for other
choices of $E({\mathbb{M}},{\mathcal{S}})$ (\ref{eqtrknt})
may be close to zero or even negative. For the particular case of an
optimal weighted Reed-Muller code ${\mbox{RM}}(S_1,S_2,u,w_1=1,w_2)$
(\ref{eqtrknt}) becomes 
\begin{equation}
\lceil s_1s_2\big(1-\sqrt{\frac{uq+1}{s_1s_2}}\big)\rceil.\label{rad1}
\end{equation}
(recall from Proposition~\ref{optimal} that  $w_1 \leq w_2$ always holds). 
If $s_1s_2$ is close to $q^2$ and $u$ is not too large then this bound
guarantees list decoding. If $s_2$ is much smaller than $s_1$ then the
bound may not even guarantee that the algorithm can correct  a single
error. This is the reason why we in the present paper consider also a
second decoding algorithm. Before getting to the second algorithm, we
apply the first one to the Joyner code. 
\subsection{The Joyner code}\label{secjoyner}
Toric codes were introduced by Hansen in \cite{hansen} and further
generalized by Joyner in \cite{joyner}, by Ruano in~\cite{ruano1,ruano2}
and by Little et al.\  in~\cite{little}. Among the most famous toric codes is
the $[49,11,28]$ code over ${\mathbf{F}}_8$ presented
in~\cite[Ex.\ 3.9]{joyner}. This code is known as the Joyner
code. Attempts have been made to decode it, but without much luck
so far. We now demonstrate how to decode it even beyond its minimum distance by applying the method of
this section in combination with a small trick. \\
The Joyner code originally was introduced in the language of
polytopes. Alternatively, one can define it \cite{ruano2} as a code
$E({\mathbb{M}},S)$
where
$${\mathcal{S}}={\mathbf{F}}_8^{\ast} \times {\mathbf{F}}_8^{\ast}=\{(x_1,y_1),
\ldots ,(x_{49},y_{49})\},$$
$${\mathbb{M}}=\{1\} \cup \{X^iY^j \mid 1 \leq i,j {\mbox{ \ and \ }}
i+j \leq 5\}.$$
Let the polynomial corresponding to a given code word $\vec{c}$ be
$$F(X,Y)=F_{0,0}+\sum_{\begin{array}{c}i,j \geq 1\\
i+j\leq 5  \end{array}}F_{i,j}X^iY^j.$$
Let $\vec{r}=\vec{c}+\vec{e}$ be the received word. Assume for a
moment that we know $F_{0,0}$. We then subtract $(F_{0,0}, \ldots
,F_{0,0})$ from $\vec{r}$ to get a word $\vec{r^\prime}$ that in the
error free positions corresponds to
$$\sum_{\begin{array}{c}i,j \geq 1\\
i+j\leq 5  \end{array}}F_{i,j}X^iY^j.$$
For $i=1, \ldots , 49$ we now divide the $i$th entry of $\vec{r^\prime}$ with $x_iy_i$ to
produce a word $\vec{r^{\prime \prime}}$. Observe, that this is doable
because $x_i,y_i \neq 0$ holds. The word $\vec{r^{\prime \prime}}$ in
the error free positions corresponds to
$$\sum_{\begin{array}{c}i,j \geq 1\\
i+j\leq 5  \end{array}}F_{i,j}X^{i-1}Y^{j-1}.$$
We have $\vec{r^{\prime \prime}}=\vec{c^{\prime \prime}}+\vec{e^{\prime
      \prime}}$ where $\vec{c^{\prime \prime}} \in
  E({\mathbb{M^{\prime \prime}}},{\mathcal{S}})$,
  ${\mathbb{M^{\prime \prime}}}=\{X^iY^j \mid i+j\leq 3\}$ and $\vec{e^{\prime
      \prime}}$ is non-zero in exactly the same positions as
    $\vec{e}$. The Reed-Muller code $E(\mathbb{M^{\prime \prime}},{\mathcal{S}})$
   is a subfield subcode of a $[49,25,25]$ Reed-Solomon code over
   ${\mathbf{F}}_{64}$. The exact form of the Reed-Solomon code is
   described by Theorem~\ref{tobefilledin}. Given a code word from the
   output of the Reed-Solomon list decoder we multiply for $i=1,
   \ldots , 49$ the $i$th entry with $x_iy_i$ and add $F_{0,0}$.\\
Of course we do not as assumed above know $F_{0,0}$ in
advance. Therefore we must try out all $8$ possible values of this
number. The error correction capability of the corresponding algorithm
is described in Table~\ref{tabny1}. We see that we can correct up to
$31$ errors even though the minimum distance of the Joyner code is
only $28$.  
   \begin{table}
    \caption{Error correction capability when the multiplicity used of
      the Reed-Solomon code 
      decoder is $m$}
    \label{tabny1}
    \begin{center}
     \begin{tabular}{rcccccc}
      \toprule
$m$&1&2&3&4&5&6\\
    \cmidrule(l{.25em}r{.25em}){2-7}
capability& 12&20&24&27&29&31\\
\bottomrule
    \end{tabular}
    \end{center}
   \end{table}
\section{An interpretation of the Guruswami-Sudan list decoding algorithm}
The second decoding algorithm of the present paper is a direct
interpretation of the Guruswami-Sudan list decoding algorithm. We
build on works by Pellikaan et al.\ \cite{pw_ieee}, and Augot et al.\
\cite{augot}, \cite{augotstepanov} who concentrated on $q$-ary
Reed-Muller codes and Reed-Solomon product codes. We consider general
code $E({\mathbb{M}},{\mathcal{S}})$ and improve on the above mentioned work by establishing
new information on how many zeros of prescribed multiplicity a
polynomial can have when given information about its leading monomial
with respect to the lexicographic ordering. In combination with a
preparation step this will allow us to correct more errors. The idea of a
preparation step comes from~\cite{geilmatsumoto}. The improved
information regarding the zeros is derived by strengthening results
reported by Dvir et al. in~\cite{dvir}. This is done in
Subsection~\ref{secbounding}. In Subsection~\ref{secalg} we present
the algorithm and elaborate on its decoding radius. 
\subsection{Bounding the number of zeros of multiplicity
  $r$}\label{secbounding}
The definition of multiplicity that we will use relies on the Hasse
derivative. Before recalling the definition of the
Hasse derivative let us fix some notation. Assume we have a vector of
variables $\vec{X}=(X_1, \ldots ,X_m)$ and a vector $\vec{k}=(k_1,
\ldots , k_m)\in {\mathbf{N}}_0^m$ then we will write
$\vec{X}^{\vec{k}}=X_1^{k_1} \cdots X_m^{k_m}$. 
In the following
${\mathbf{F}}$ is any field. 
\begin{definition}
Given $F(\vec{X})\in {\mathbf{F}}[\vec{X}]$ and
$\vec{k} \in {\mathbf{N}}_0^m$ the $\vec{k}$'th
Hasse derivative of $F$, denoted by $F^{(\vec{k})}(\vec{X})$ is the
coefficient of $\vec{Z}^{\vec{k}}$ in $F(\vec{X}+\vec{Z})$. In other words 
$$F(\vec{X}+\vec{Z})=\sum_{\vec{k}} F^{(\vec{k})}(\vec{X})\vec{Z}^{\vec{k}}.$$
\end{definition}
The concept of multiplicity for univariate polynomials is generalized
to multivariate polynomials in the following way.
\begin{definition}\label{defmult}
For $F(\vec{X}) \in {\mathbf{F}}[\vec{X}]\backslash \{ {0} \}$ and
$\vec{a}\in {\mathbf{F}}^m$ we define the multiplicity of $F$ at $\vec{a}$
denoted by ${\mbox{mult}}(F,\vec{a})$ as follows: Let $M$ be an
integer such that for every $\vec{k}=(k_1, \ldots ,
k_m) \in {\mathbf{N}}_0^m$ with $k_1+\cdots +k_m < M$, $F^{(\vec{k})}(\vec{a})=0$
  holds, but for some  $\vec{k}=(k_1, \ldots ,
k_m) \in {\mathbf{N}}_0^m$ with $k_1+\cdots +k_m = M$,
$F^{(\vec{k})}(\vec{a})\neq 0$ holds, then 
${\mbox{mult}}(F,\vec{a})=M$. If $F=0$ then we define ${\mbox{mult}}(F,\vec{a})=\infty$.
\end{definition}
The Schwartz-Zippel bound with multiplicity was reported already 
in~\cite{augot}, \cite{augotstepanov} but was only recently proved,
~\cite{dvir}. It goes as follows:
\begin{theorem}\label{SZ-lemma}
Let $F(\vec{X}) \in {\mathbf{F}}[\vec{X}]$ be a non-zero polynomial of
total degree $u$. Then for any finite set $S \subseteq {\mathbf{F}}$
\begin{eqnarray}
\sum_{\vec{a}\in S^m}{\mbox{mult}}(F,\vec{a}) \leq u | S |^{{m-1}}. \nonumber
\end{eqnarray}
\end{theorem}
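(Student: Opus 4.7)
My plan is a double induction: an outer induction on the number of variables $m$ and, for fixed $m\geq 2$, an inner induction on the total degree $u$. The whole argument hinges on one elementary restriction lemma about Hasse multiplicities: if $G(\vec{X}'):=F(\vec{X}',a_m)$ is not identically zero, then $\mathrm{mult}(F,(\vec{a}',a_m)) \leq \mathrm{mult}(G,\vec{a}')$ for every $\vec{a}'$. This is immediate from the definition, since expanding $G(\vec{X}'+\vec{Z}')=F(\vec{X}'+\vec{Z}',a_m)$ shows that the $\vec{k}'$-th Hasse derivative of $G$ at $\vec{a}'$ equals the $(k_1,\ldots,k_{m-1},0)$-th Hasse derivative of $F$ at $(\vec{a}',a_m)$; hence if all $F^{(\vec{k})}(\vec{a}',a_m)$ vanish for $|\vec{k}|<M$, the same holds for $G^{(\vec{k}')}(\vec{a}')$ and $\mathrm{mult}(G,\vec{a}')\geq M$.

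For the base case $m=1$ one invokes the classical fact that a nonzero univariate polynomial of degree $u$ has at most $u$ roots counted with multiplicity. For $m\geq 2$ and degree $u$ (with $u=0$ trivial) I distinguish two cases. If $F(\vec{X}',a_m)\equiv 0$ for some $a_m\in S$, then $(X_m-a_m)\mid F$; write $F=(X_m-a_m)F_1$ with $\deg F_1=u-1$. Additivity of multiplicity over products, combined with the fact that $\mathrm{mult}(X_m-a_m,\vec{a})$ equals $1$ precisely when the last coordinate of $\vec{a}$ is $a_m$ and $0$ otherwise, yields
$$\sum_{\vec{a}\in S^m}\mathrm{mult}(F,\vec{a})=|S|^{m-1}+\sum_{\vec{a}\in S^m}\mathrm{mult}(F_1,\vec{a}) \leq |S|^{m-1}+(u-1)|S|^{m-1}=u|S|^{m-1}$$
by the inner induction hypothesis on $u$. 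Otherwise $F(\vec{X}',a_m)\not\equiv 0$ for every $a_m\in S$; applying the restriction lemma and then the outer induction hypothesis to $F(\vec{X}',a_m)$, which is a polynomial in $m-1$ variables of total degree at most $u$, gives $\sum_{\vec{a}'\in S^{m-1}}\mathrm{mult}(F,(\vec{a}',a_m))\leq u|S|^{m-2}$ for each $a_m\in S$, and summing over the $|S|$ choices of $a_m$ again produces $u|S|^{m-1}$.

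The only real obstacle is coupling the two inductions: one cannot naively slice $F$ along $X_m=a_m$ and pass to $m-1$ variables, because the slice may vanish identically and the outer hypothesis becomes useless. The remedy is exactly to peel off each such bad hyperplane as a linear factor of $F$, trading the disappearance of one value of $a_m$ for a unit drop in $u$, and then to invoke the $u$-induction. Everything else is bookkeeping based on additivity of Hasse multiplicity under products and the short verification of the restriction lemma.
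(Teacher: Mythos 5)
Your proof is correct, but it follows a genuinely different route from the one the paper relies on. The paper does not prove Theorem~\ref{SZ-lemma} itself; it cites Dvir et al.~\cite{dvir}, and the argument there (which the paper then adapts to prove the sharper, leading-monomial version in Theorem~\ref{prop-sz-gen}) isolates the coefficient polynomial $F_{i_1,\ldots,i_{m-1}}(X_m)$ of the lex-leading monomial, sets $r(a_m)$ equal to its multiplicity at $a_m$, and passes to the Hasse derivative $F^{(0,\ldots,0,r(a_m))}$ before restricting to $X_m=a_m$ — this is exactly inequality~(\ref{eqtredie}). You avoid the derivative shift entirely: a double induction on $(m,u)$ with a case split in which either some slice $X_m=a_m$ annihilates $F$, so you peel off the linear factor $(X_m-a_m)$ and drop the total degree by one (using additivity of $\mathrm{mult}$ over products, which you should at least cite or justify via lowest homogeneous components — only the ``$\leq$'' direction is needed), or else every slice is a nonzero polynomial in $m-1$ variables of total degree at most $u$ and your order-zero restriction lemma applies. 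Both arguments are sound; yours is more elementary and self-contained. What the derivative trick buys, and what your argument cannot deliver, is control of the \emph{leading monomial} rather than the total degree: under restriction to $X_m=a_m$ the total degree can only drop, which is all you need, but the lex-leading exponent vector of the slice can move in a way that increases the weighted sum $i_1s_2\cdots s_m+\cdots$, so your case-2 step does not extend to Theorem~\ref{prop-sz-gen} or to the function $D$ of Definition~\ref{defD} that the paper's decoding analysis depends on. As a proof of the stated total-degree bound, however, your argument is complete and correct.
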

We have the following useful corollary:
\begin{corollary}\label{corsz}
Let $F(\vec{X}) \in {\mathbf{F}}[\vec{X}]$ be a
non-zero polynomial of total degree $u$ and let $S \subseteq
{\mathbf{F}}$ be finite.
The number
of zeros of $F$ of multiplicity at least $r$ from $S^m$ is at most
\begin{equation}
\frac{u}{r}|S |^{m-1}.\label{eqcorsz}
\end{equation}
\end{corollary}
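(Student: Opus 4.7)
The plan is to derive this corollary directly from Theorem~\ref{SZ-lemma}. The key observation is that the sum $\sum_{\vec{a}\in S^m}{\mbox{mult}}(F,\vec{a})$ on the left hand side of the Schwartz-Zippel bound can be bounded from below in terms of the quantity we wish to estimate.

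Let $N$ denote the number of zeros of $F$ in $S^m$ having multiplicity at least $r$. Since every summand ${\mbox{mult}}(F,\vec{a})$ is non-negative, we may restrict the sum to those $\vec{a}\in S^m$ where the multiplicity is at least $r$, giving
$$\sum_{\vec{a}\in S^m}{\mbox{mult}}(F,\vec{a})\ \geq\ rN.$$
Combining this with Theorem~\ref{SZ-lemma} yields $rN\leq u|S|^{m-1}$, and dividing through by $r$ gives the claimed bound~(\ref{eqcorsz}).

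There is essentially no obstacle here; the work has already been done in proving Theorem~\ref{SZ-lemma}. The only thing to be careful about is that the hypothesis of Theorem~\ref{SZ-lemma} requires $F$ to be non-zero, which is exactly the hypothesis of the corollary, so the reduction goes through without incident.
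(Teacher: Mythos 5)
Your proof is correct and is exactly the standard derivation the paper intends (the paper states the corollary without proof as an immediate consequence of Theorem~\ref{SZ-lemma}): bounding the multiplicity sum below by $rN$ and applying the theorem. Nothing further is needed.
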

For the $q$-ary
Reed-Muller codes 
$${\mbox{RM}}_q(u,m)={\mbox{RM}}({\mathbf{F}}_q, \ldots ,
{\mathbf{F}}_q,u,1,\ldots ,1),$$
 Pellikaan and Wu in~\cite{pw_ieee} presented two decoding algorithms, 
a subfield subcode decoding algorithm and a direct interpretation of
the Guruswami-Sudan algorithm. The analysis of the latter relies on
\cite[Lem.\ 2.4, Lem.\ 2.5]{PellikaanWu} which combines to the
  following result:
\begin{proposition}\label{pwbound}
Consider a polynomial $F(\vec{X}) \in {\mathbf{F}}[\vec{X}]$ of total
degree $u$, $u < rq$ and define $w=\lfloor u/q\rfloor$. The number of points in ${\mathbf{F}}_q^m$ where $F$ has at least
multiplicity $r$ is at most equal to
\begin{eqnarray}
\frac{{m+r-1 \choose m}q^m+(u-qw){m+r-w-2 \choose
    m-1}q^{m-1}-{m+r-w-1 \choose m}q^m}{{m+r-1 \choose r-1 }}.\label{eqboundpw}
\end{eqnarray}
\end{proposition}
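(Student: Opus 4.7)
The proposition is the combination of Lemmas~2.4 and 2.5 of~\cite{PellikaanWu}, so my plan is to reproduce their argument in the notation of the present paper. The core idea is a double-counting argument over the set of pairs $(\vec{a},\vec{k})$ with $\vec{a}\in\mathbf{F}_q^m$, $\vec{k}\in\mathbf{N}_0^m$, $|\vec{k}|<r$, and $F^{(\vec{k})}(\vec{a})=0$. Let $V$ denote the set of points of $\mathbf{F}_q^m$ where $F$ has multiplicity at least $r$. By Definition~\ref{defmult}, each point of $V$ forces all $\binom{m+r-1}{r-1}$ Hasse derivatives of order $<r$ to vanish, hence
\[
|V|\cdot\binom{m+r-1}{r-1}\;\leq\;\sum_{\vec{k}:\,|\vec{k}|<r} |Z(F^{(\vec{k})})|,
\]
where $Z(G)$ denotes the $\mathbf{F}_q^m$-zero set of $G$.

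The second step is to bound each $|Z(F^{(\vec{k})})|$ separately. Since $F^{(\vec{k})}$ has total degree at most $u-|\vec{k}|$, one slices $\mathbf{F}_q^m$ into lines parallel to a coordinate axis and argues by induction on $m$: on each such line, the restriction of $F^{(\vec{k})}$ is either identically zero (in which case that line is absorbed into the $(m-1)$-dimensional zero locus of a lower-degree polynomial in the remaining variables) or is a univariate polynomial with at most $u-|\vec{k}|$ roots. Provided $u-|\vec{k}|<q$, the induction yields the Schwartz-Zippel-type bound $|Z(F^{(\vec{k})})|\leq (u-|\vec{k}|)q^{m-1}$. Writing $u=wq+(u-qw)$ with $w=\lfloor u/q\rfloor$, the index $|\vec{k}|=w$ is precisely the threshold: for $|\vec{k}|>w$ one has $u-|\vec{k}|<q$ and the nontrivial bound applies, whereas for $|\vec{k}|\leq w$ one can only use the trivial bound $q^m$.

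Collecting the per-index estimates and grouping them by the value $j=|\vec{k}|$ (noting that there are $\binom{m+j-1}{m-1}$ indices of weight exactly $j$), one obtains a sum that, after applying the hockey-stick identity $\sum_{j=0}^{N}\binom{m+j-1}{m-1}=\binom{m+N}{m}$, collapses into the three pieces $\binom{m+r-1}{m}q^m+(u-qw)\binom{m+r-w-2}{m-1}q^{m-1}-\binom{m+r-w-1}{m}q^m$. The first piece is the trivial $q^m$ summed over all $\binom{m+r-1}{m}=\binom{m+r-1}{r-1}$ indices; the third piece is the saving gained on the slices $|\vec{k}|>w$ where one beats $q^m$; the middle piece is the boundary contribution from the slice $|\vec{k}|=w+1$, where the effective degree $u-(w+1)q$ has just crossed below $q$ and is weighted by $u-qw$. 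Dividing through by $\binom{m+r-1}{r-1}$ yields~(\ref{eqboundpw}).

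The main obstacle is the combinatorial bookkeeping in the last step: organizing the per-slice contributions so that they telescope into exactly the stated three-term numerator, and in particular verifying that the ``transition'' slice at $|\vec{k}|=w$ is weighted correctly by $u-qw$ (rather than by the full $q$) so as to match the Pellikaan--Wu formula. The hypothesis $u<rq$, equivalently $w\leq r-1$, is precisely what keeps the slices $|\vec{k}|>w$ nonempty and the correction terms meaningful; without it the bound degenerates to the trivial $|V|\leq q^m$.
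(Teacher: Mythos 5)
You should first note that the paper contains no proof of Proposition~\ref{pwbound}: it is quoted as the combination of Lemmas~2.4 and~2.5 of~\cite{PellikaanWu}, so there is nothing internal to compare against. Your reconstruction, however, does not establish the stated bound. The decisive check is $m=1$, where the right-hand side of~(\ref{eqboundpw}) simplifies to $u/r$ --- the divisibility bound coming from $\prod_{a\in V}(X-a)^r\mid F$. Your double count only gives $r|V|\le\sum_{k=0}^{r-1}|Z(F^{(k)})|$, and bounding each summand by its degree is provably insufficient: for $F=X^2(X-1)$ with $r=2$ the sum equals $4>3=u$, so no term-by-term estimate of the zero sets of the Hasse derivatives can produce the numerator $u$. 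The true bound exploits the fact that the \emph{same} points are common zeros of all the derivatives, which is exactly the information the double count discards.

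Two further concrete errors. First, the threshold claim ``for $|\vec{k}|>w$ one has $u-|\vec{k}|<q$'' is false: $u-|\vec{k}|<q$ iff $|\vec{k}|>u-q$, and $u-q$ is in general much larger than $w=\lfloor u/q\rfloor$ (take $u=10$, $q=4$, $|\vec{k}|=3$). Second, the claimed collapse does not occur. Writing $N(j)=\binom{m+j-1}{m-1}$ for the number of indices with $|\vec{k}|=j$ and using $\binom{m+r-1}{m}-\binom{m+r-w-1}{m}=\sum_{j=r-w}^{r-1}N(j)$, the numerator of~(\ref{eqboundpw}) equals
\[
\sum_{j=r-w}^{r-1}N(j)\,q^m\;+\;N(r-w-1)\,(u-qw)\,q^{m-1}\;+\;\sum_{j=0}^{r-w-2}N(j)\cdot 0,
\]
whereas your sum is $\sum_{j=0}^{w}N(j)q^m+\sum_{j=w+1}^{r-1}N(j)(u-j)q^{m-1}$; these disagree already for $m=1$. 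Worse, in the correct decomposition the trivial bound $q^m$ is attached to the indices of \emph{largest} order $r-w\le|\vec{k}|\le r-1$ and the indices with $|\vec{k}|\le r-w-2$ --- including $\vec{k}=\vec{0}$ --- contribute nothing; no assignment of upper bounds to the individual sets $Z(F^{(\vec{k})})$ can yield a zero contribution for $\vec{k}=\vec{0}$, since $Z(F)$ is generally nonempty. So the formula is not of the shape your argument can produce, and a genuinely different idea is needed (Pellikaan and Wu obtain it from a decomposition of $F$ along powers of the polynomials $X_i^q-X_i$ combined with the univariate multiplicity count, not from a naive Hasse-derivative double count).
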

Augot and Stepanov~\cite{augot} gave an improved estimate on the
decoding radius of the latter algorithm (the direct interpretation of
the Gurswami-Sudan algorithm) by using instead
Corollary~\ref{corsz}. We here present a direct proof that indeed,
Corollary~\ref{corsz} is stronger than Proposition~\ref{pwbound}. 
\begin{proposition}
For all $u \in  [ 0,rq-1]$ it hold that (\ref{eqcorsz}) is
smaller than or equal to (\ref{eqboundpw}).
\end{proposition}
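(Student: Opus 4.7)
The plan is to view both bounds as functions of $u\in[0,rq-1]$ and exploit piecewise linearity. Substituting $|S|=q$ in (\ref{eqcorsz}) yields $uq^{m-1}/r$, which is globally linear in $u$. The Pellikaan--Wu bound (\ref{eqboundpw}) is also affine on each interval $[wq,(w+1)q]$, since $w=\lfloor u/q\rfloor$ is constant there and only the term $(u-qw)\binom{m+r-w-2}{m-1}q^{m-1}$ depends on $u$. A short Pascal-identity calculation shows that the affine pieces of (\ref{eqboundpw}) agree at every breakpoint $u=wq$, so the inequality only needs to be checked at the $r+1$ values $u=wq$, $w=0,1,\ldots,r$.

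At such a point the middle term of (\ref{eqboundpw}) vanishes and the target collapses to
\[
\frac{w}{r}\leq 1-\frac{\binom{m+r-w-1}{m}}{\binom{m+r-1}{m}},\qquad\text{equivalently}\qquad \frac{\binom{m+r-w-1}{m}}{\binom{m+r-1}{m}}\leq\frac{r-w}{r}.
\]
The boundary cases $w=0$ and $w=r$ hold with equality. For $1\leq w\leq r-1$ I would clear denominators and cancel a common factor of $r-w$ to reduce the claim to
\[
r(r-1)\cdots(r-w+1)\leq(m+r-1)(m+r-2)\cdots(m+r-w),
\]
a comparison between two products of $w$ consecutive positive integers. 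Pairing the $i$-th factor $r-i$ on the left with $m+r-1-i$ on the right (for $i=0,\ldots,w-1$), every pair satisfies the pointwise inequality because their difference equals $m-1\geq 0$; multiplying these pointwise inequalities yields the desired bound.

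I do not foresee a real obstacle. The only point that warrants care is verifying that the two affine expressions for (\ref{eqboundpw}) coming from consecutive values of $w$ genuinely stitch together at $u=wq$, which is what legitimises restricting attention to breakpoints; that reduces to a single application of Pascal's identity $\binom{m+r-w-1}{m}=\binom{m+r-w-2}{m}+\binom{m+r-w-2}{m-1}$. Everything else is elementary.
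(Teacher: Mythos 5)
Your argument is correct, and it reaches the conclusion by a route that genuinely differs from the paper's. Both proofs start from the same observation: with $|S|=q$ the bound (\ref{eqcorsz}) is linear in $u$, while (\ref{eqboundpw}) is continuous and piecewise affine with breakpoints at the multiples of $q$ (the continuity being exactly your Pascal-identity check). From there the paper argues that the slopes of the successive pieces of (\ref{eqboundpw}) form a decreasing sequence, so that (\ref{eqboundpw}) is concave as a function of $u$; since the two expressions agree at the endpoints $u=0$ and $u=rq$, concavity forces (\ref{eqboundpw}) to lie above its chord, i.e.\ above (\ref{eqcorsz}), on the whole interval. You skip concavity entirely: a continuous piecewise affine function minus an affine one attains its minimum at a breakpoint, so it suffices to verify the inequality at $u=wq$ for $w=0,\dots,r$, which you reduce (using $\binom{m+r-1}{r-1}=\binom{m+r-1}{m}$) to the product inequality $r(r-1)\cdots(r-w+1)\le(m+r-1)\cdots(m+r-w)$, settled factor by factor since each pairwise difference is $m-1\ge 0$. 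The paper's route needs only two endpoint evaluations but leaves the monotonicity of the slopes (equivalently, that $\binom{m+r-w-2}{m-1}$ decreases in $w$) as an asserted, if easy, claim; yours trades that for $r+1$ breakpoint checks, each resolved by an explicit and self-contained binomial computation. Both are valid; yours is the more fully detailed of the two.
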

\begin{proof}
We consider the two expressions as
functions in $u$ on the interval $[0,rq]$. Our first observation is
that (\ref{eqboundpw}) is a continuously piecewise linear function,
each piece corresponding to a particular value of $w$. The
corresponding $r$ slopes constitute a decreasing sequence. Combining this
observation with the fact that (\ref{eqcorsz})  is linear in $u$ and with
the fact that the two expressions are the same at each of the end
points of the interval proves the result.
\end{proof}
As a preparation step to improve upon Theorem~\ref{SZ-lemma} and Corollary~\ref{corsz} we start
by generalizing them. 
We will need a couple of results
from~\cite[Sec.\ 2]{dvir}. The first corresponds to~\cite[Lem.\
5]{dvir}.
\begin{lemma}\label{lem5}
Consider $F(\vec{X}) \in {\mathbf{F}}[\vec{X}]$ and
$\vec{a} \in {\mathbf{F}}^m$. For any $\vec{k}=(k_1, \ldots ,k_m)
\in {\mathbf{N}}_0^m$ we have 
$${\mbox{mult}}(F^{(\vec{k})},\vec{a}) \geq mult(F,\vec{a})-(k_1+ \cdots +k_m).$$
\end{lemma}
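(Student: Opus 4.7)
The plan is to reduce the claim to a single algebraic identity relating iterated Hasse derivatives to a Hasse derivative of the sum of the indices, and then to read off the bound on the multiplicity directly from the definition.

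First I would establish the identity
\[
 (F^{(\vec{k})})^{(\vec{j})}(\vec{X}) \;=\; \binom{\vec{k}+\vec{j}}{\vec{k}}\, F^{(\vec{k}+\vec{j})}(\vec{X}),
\]
where $\binom{\vec{k}+\vec{j}}{\vec{k}} = \prod_{i=1}^{m} \binom{k_i+j_i}{k_i}$. This is obtained by introducing two fresh vectors of variables $\vec{Y}$ and $\vec{Z}$ and expanding $F(\vec{X}+\vec{Y}+\vec{Z})$ in two ways. On the one hand,
\[
 F(\vec{X}+\vec{Y}+\vec{Z}) \;=\; \sum_{\vec{j}} F^{(\vec{j})}(\vec{X}+\vec{Y})\,\vec{Z}^{\vec{j}} \;=\; \sum_{\vec{j},\vec{k}} (F^{(\vec{j})})^{(\vec{k})}(\vec{X})\,\vec{Y}^{\vec{k}}\vec{Z}^{\vec{j}}.
\]
On the other hand, expanding $F(\vec{X}+(\vec{Y}+\vec{Z}))$ and then using the multinomial expansion of $(\vec{Y}+\vec{Z})^{\vec{m}}$ gives
\[
 F(\vec{X}+\vec{Y}+\vec{Z}) \;=\; \sum_{\vec{m}} F^{(\vec{m})}(\vec{X}) \sum_{\vec{k}+\vec{j}=\vec{m}} \binom{\vec{m}}{\vec{k}} \vec{Y}^{\vec{k}}\vec{Z}^{\vec{j}}.
\]
Comparing the coefficients of $\vec{Y}^{\vec{k}}\vec{Z}^{\vec{j}}$ yields the identity.

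Once this identity is in hand, the lemma follows quickly. Set $M=\mathrm{mult}(F,\vec{a})$, so $F^{(\vec{n})}(\vec{a})=0$ for every $\vec{n}\in\mathbf{N}_0^m$ with $n_1+\cdots+n_m<M$. For any $\vec{j}$ with $j_1+\cdots+j_m < M-(k_1+\cdots+k_m)$ the index $\vec{k}+\vec{j}$ has total degree strictly less than $M$, hence
\[
 (F^{(\vec{k})})^{(\vec{j})}(\vec{a}) \;=\; \binom{\vec{k}+\vec{j}}{\vec{k}}\, F^{(\vec{k}+\vec{j})}(\vec{a}) \;=\; 0.
\]
By Definition~\ref{defmult} this means $\mathrm{mult}(F^{(\vec{k})},\vec{a})\geq M-(k_1+\cdots+k_m)$, which is the assertion. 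The case $F=0$ is vacuous since then $F^{(\vec{k})}=0$ and both multiplicities are $\infty$.

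There is no serious obstacle here; the only subtlety is bookkeeping the multi-index binomial coefficients correctly when comparing the two expansions of $F(\vec{X}+\vec{Y}+\vec{Z})$. One has to take a little care that the identity holds over an arbitrary field $\mathbf{F}$, including positive characteristic, but this is automatic because the Hasse derivative is defined by coefficient extraction rather than by repeated ordinary differentiation, so no factorials are inverted.
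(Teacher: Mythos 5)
Your proof is correct. Note that the paper itself gives no proof of this lemma: it is quoted verbatim from \cite[Lem.\ 5]{dvir}, so there is no in-paper argument to compare against. Your argument is essentially the one used in that reference: first establish the composition rule $(F^{(\vec{k})})^{(\vec{j})}=\binom{\vec{k}+\vec{j}}{\vec{k}}F^{(\vec{k}+\vec{j})}$ by expanding $F(\vec{X}+\vec{Y}+\vec{Z})$ in two ways (this is Proposition~4 of \cite{dvir}), and then observe that every Hasse derivative of $F^{(\vec{k})}$ of order below $\mathrm{mult}(F,\vec{a})-(k_1+\cdots+k_m)$ is a scalar multiple of a vanishing derivative of $F$. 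The two small things worth tidying are cosmetic: in your first double expansion the coefficient of $\vec{Y}^{\vec{k}}\vec{Z}^{\vec{j}}$ is $(F^{(\vec{j})})^{(\vec{k})}(\vec{X})$ rather than $(F^{(\vec{k})})^{(\vec{j})}(\vec{X})$, which is harmless since $\binom{\vec{k}+\vec{j}}{\vec{k}}=\binom{\vec{k}+\vec{j}}{\vec{j}}$, and your remark that no factorials are inverted (so the identity survives in positive characteristic) is exactly the right point to flag.
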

The next result that we recall corresponds to the last part of
\cite[Proposition 6]{dvir}. 
\begin{proposition}\label{propsammensat}
 Given $F(X_1, \ldots , X_m) \in {\mathbf{F}}[X_1, \ldots
 ,X_m]$ and
$$Q(Y_1, \ldots , Y_l)=(Q_1(\vec{Y}), \ldots  ,Q_m(\vec{Y}))
\in {\mathbf{F}}[Y_1, \ldots , Y_l]^m$$ let $F \circ Q$ be the
polynomial $F(Q_1(\vec{Y}), \ldots , Q_m(\vec{Y}))$.  For any
$\vec{a}\in {\mathbf{F}}^l$ we have $${\mbox{mult}}(F \circ Q,\vec{a})
\geq {\mbox{mult}}(F,Q(\vec{a})).$$
\end{proposition}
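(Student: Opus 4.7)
The plan is to translate the multiplicity condition at $Q(\vec{a})$ into a statement about the vanishing order of $F(\vec{X}+Q(\vec{a}))$ at the origin, and then pull this back through the substitution $\vec{X} \mapsto Q(\vec{Y}+\vec{a}) - Q(\vec{a})$ to obtain the desired vanishing order of $(F\circ Q)(\vec{Y}+\vec{a})$ at the origin in $\vec{Y}$.

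First I would dispatch the trivial case $F=0$ (where both sides are $\infty$, or $F\circ Q=0$) and from now on assume $F\ne 0$. Set $M=\mathrm{mult}(F,Q(\vec{a}))$ and $\vec{b}=Q(\vec{a})$. By Definition~\ref{defmult} (rephrased via the defining identity $F(\vec{X}+\vec{Z})=\sum_{\vec{k}} F^{(\vec{k})}(\vec{X})\vec{Z}^{\vec{k}}$ at $\vec{X}=\vec{b}$), the condition $\mathrm{mult}(F,\vec{b})=M$ is equivalent to saying that the polynomial $F(\vec{X}+\vec{b}) \in \mathbf{F}[\vec{X}]$ has no monomials of total $\vec{X}$-degree strictly less than $M$; that is,
\begin{equation*}
F(\vec{X}+\vec{b}) = \sum_{|\vec{k}|\ge M} c_{\vec{k}}\,\vec{X}^{\vec{k}}
\end{equation*}
for suitable coefficients $c_{\vec{k}}\in\mathbf{F}$. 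To reach the same type of statement for $F\circ Q$ at $\vec{a}$, I would compute
\begin{equation*}
(F\circ Q)(\vec{Y}+\vec{a}) = F\bigl(Q(\vec{Y}+\vec{a})\bigr) = F\bigl(\vec{b}+R(\vec{Y})\bigr),
\end{equation*}
where $R_i(\vec{Y}) := Q_i(\vec{Y}+\vec{a})-Q_i(\vec{a})$. The key observation is that $R_i(\vec{0})=0$, so each $R_i$ is a polynomial in $\vec{Y}$ whose constant term vanishes; equivalently, every monomial appearing in $R_i$ has total $\vec{Y}$-degree $\ge 1$.

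Now substitute $\vec{X}=R(\vec{Y})$ in the displayed expression for $F(\vec{X}+\vec{b})$:
\begin{equation*}
(F\circ Q)(\vec{Y}+\vec{a}) = \sum_{|\vec{k}|\ge M} c_{\vec{k}} \prod_{i=1}^{m} R_i(\vec{Y})^{k_i}.
\end{equation*}
Because every monomial in $R_i$ has $\vec{Y}$-degree $\ge 1$, every monomial in $R_i(\vec{Y})^{k_i}$ has $\vec{Y}$-degree $\ge k_i$, and hence every monomial in $\prod_i R_i(\vec{Y})^{k_i}$ has $\vec{Y}$-degree at least $k_1+\cdots+k_m=|\vec{k}|\ge M$. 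Thus $(F\circ Q)(\vec{Y}+\vec{a})$ contains no monomial of total $\vec{Y}$-degree less than $M$, which by the same characterization of multiplicity as above yields $\mathrm{mult}(F\circ Q,\vec{a})\ge M = \mathrm{mult}(F,Q(\vec{a}))$.

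I do not anticipate a real obstacle: the whole argument is a clean bookkeeping of minimal total degrees under translation and substitution. The only pitfall to keep in mind is to check the translation-invariance characterization of multiplicity once at the beginning (so that the statement about vanishing of all Hasse derivatives of order $<M$ is converted into the cleaner statement that no low-degree monomials appear after shifting to the origin), since without that reformulation one would be tempted into a messier induction on $|\vec{k}|$ using Lemma~\ref{lem5}.
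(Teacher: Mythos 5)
Your proof is correct. Note that the paper itself gives no proof of Proposition~\ref{propsammensat} --- it is quoted directly from \cite{dvir} --- and your argument (recasting multiplicity at a point as the minimal total degree of the Taylor expansion at that point, then observing that each $R_i(\vec{Y})=Q_i(\vec{Y}+\vec{a})-Q_i(\vec{a})$ has vanishing constant term so that degrees can only go up under substitution) is essentially the standard proof found in that reference.
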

We get the following Corollary, which is closely related
to~\cite[Corollary 7]{dvir}. 
\begin{corollary}\label{cor8}
Let $F(X_1, \ldots , X_m) \in {\mathbf{F}}[X_1, \ldots , X_m]$ and
$
\vec{b}_1, \ldots  ,\vec{b}_{m-1},\vec{c} \in {\mathbf{F}}^m$ be
given. Write $F^\ast(T_1, \ldots  ,T_{m-1}) =F(T_1\vec{b}_1 + \cdots +
T_{m-1}\vec{b}_{m-1}+\vec{c})$.
For any $(t_1, \ldots , t_{m-1}) \in {\mathbf{F}}^{m-1}$ we
have
\begin{multline*}
{\mbox{mult}}(F^\ast(T_1, \ldots , T_{m-1}),(t_1, \ldots , t_{m-1})) \\
\geq{\mbox{mult}}(F(X_1, \ldots ,X_m),
t_1\vec{b}_1+\cdots +t_{m-1}\vec{b}_{m-1}+\vec{c}).
\end{multline*}
\end{corollary}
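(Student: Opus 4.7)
The plan is to recognize that the corollary is an immediate specialization of Proposition~\ref{propsammensat}, in which the outer substitution is an affine linear map from $\mathbf{F}^{m-1}$ to $\mathbf{F}^m$. So no new derivative computation is required; the whole argument is a bookkeeping reduction to the composition lemma.

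Concretely, I would define $Q(Y_1, \ldots, Y_{m-1}) = (Q_1(\vec{Y}), \ldots, Q_m(\vec{Y})) \in \mathbf{F}[Y_1, \ldots, Y_{m-1}]^m$ coordinatewise by
$$Q(\vec{Y}) = Y_1 \vec{b}_1 + \cdots + Y_{m-1} \vec{b}_{m-1} + \vec{c},$$
so that each $Q_i(\vec{Y})$ is the degree-one polynomial whose coefficient of $Y_j$ is the $i$-th entry of $\vec{b}_j$ and whose constant term is the $i$-th entry of $\vec{c}$. By the very definition of $F^*$ given in the statement we then have
$$F^*(T_1, \ldots, T_{m-1}) = F\bigl(Q_1(\vec{T}), \ldots, Q_m(\vec{T})\bigr) = (F \circ Q)(\vec{T}).$$

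Setting $\vec{a} = (t_1, \ldots, t_{m-1}) \in \mathbf{F}^{m-1}$ and applying Proposition~\ref{propsammensat} with $l = m-1$ yields
$$\mathrm{mult}(F^*, \vec{a}) = \mathrm{mult}(F \circ Q, \vec{a}) \geq \mathrm{mult}(F, Q(\vec{a})),$$
and a direct evaluation gives $Q(\vec{a}) = t_1 \vec{b}_1 + \cdots + t_{m-1} \vec{b}_{m-1} + \vec{c}$, which is precisely the point at which the right-hand side of the claimed inequality is evaluated.

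There is no real obstacle here; the only conceptual point worth underlining is that the inequality need not be an equality, since the affine map $Q$ can be degenerate (for instance if $\vec{b}_1, \ldots, \vec{b}_{m-1}$ fail to be linearly independent), in which case $F^*$ may vanish to strictly higher order at $\vec{a}$ than $F$ does at $Q(\vec{a})$. This is exactly the flexibility that makes Corollary~\ref{cor8} useful in Subsection~\ref{secbounding}: one can slice a multivariate polynomial along chosen affine subspaces while only losing, never gaining, information about multiplicities.
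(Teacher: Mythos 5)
Your proof is correct and is exactly the argument the paper intends: the corollary is stated as an immediate consequence of Proposition~\ref{propsammensat}, obtained by taking $Q$ to be the affine map $\vec{Y}\mapsto Y_1\vec{b}_1+\cdots+Y_{m-1}\vec{b}_{m-1}+\vec{c}$ with $l=m-1$, so that $F^\ast=F\circ Q$ and the inequality follows by evaluating $Q$ at $(t_1,\ldots,t_{m-1})$. The paper gives no further detail, so your write-up matches (and slightly elaborates on) its reasoning.
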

Let $\prec$ be the lexicographic ordering on the set of monomials in
variables $X_1, \ldots , X_m$ such that $X_m\prec \cdots \prec X_1$
holds. We now write
$$F(X_1, \ldots , X_m)=\sum_{j_1, \ldots  ,j_{m-1}} 
X_1^{j_1}\cdots X_{m-1}^{j_{m-1}}F_{j_1, \ldots
  j_{m-1}} (X_m).$$
Let $X_1^{i_1} \cdots X_m^{i_m}$ be the leading monomial of $F$ with
respect to $\prec$. Then
due to the definition of $\prec$, $F_{i_1, \ldots , i_{m-1}}(X_m)$ is a (univariate) polynomial
of degree $i_m$.
For $a_m \in {\mathbf{F}}$  define  
$$
r(a_m)={\mbox{mult}}(F_{i_1, \ldots , i_{m-1}}(X_m),a_m).$$
Clearly, 
\begin{equation}
\sum_{a_m \in S_m}r(a_m) \leq i_m. \label{eqfoerste}
\end{equation}
We have 
$$
F^{(0,\ldots , 0,r(a_m))}(X_1, \ldots , X_m)=\sum_{j_1, \ldots ,
  j_{m-1}}
X_1^{j_1} \cdots X_{m-1}^{j_{m-1}}F_{j_1, \ldots
  ,j_{m-1}}^{(r(a_m))}(X_m)$$
and due to the definition of $\prec$ and to the definition of
$r(a_m)$ we have
\begin{equation}
{\mbox{lm}}_{\prec}(F^{(0,\ldots , 0,r(a_m))}(X_1, \ldots ,
X_{m-1},a_m))=X_1^{i_1}\cdots X_{m-1}^{i_{m-1}}. \label{eqnaeste}
\end{equation}
Applying first Lemma~\ref{lem5} with $\vec{k}=(0,\ldots ,0,r(a_m))$ and 
afterwards   
Corollary~\ref{cor8} with 
$\vec{b}_1=(1,0,\ldots ,0), \ldots ,
\vec{b}_{m-1}=(0, \ldots , 0,1,0)$, $\vec{c}=(0, \ldots , 0,a_m)$ and $t_1=a_1, \ldots ,
t_{m-1}=a_{m-1}$ we get the following result which is
closely related to a result in~\cite[Proof of Lemma 8]{dvir}:
\begin{align}
&{\mbox{mult}}\big(F(X_1, \ldots  ,X_m),(a_1, \ldots , a_m)\big)\nonumber \\
&\leq (0+ \cdots +0+r(a_m))+{\mbox{mult}}\big(F^{(0,\ldots
  ,0,r(a_m))}(X_1, \ldots ,X_m),(a_1, \ldots , a_m)\big)\nonumber\\
&\leq r(a_m) +{\mbox{mult}}\big(F^{(0,\ldots  ,0,r(a_m))}(X_1, \ldots
,X_{m-1},a_m),(a_1, \ldots  ,a_{m-1})\big). \label{eqtredie}
\end{align} 
We are now ready to generalize Theorem~\ref{SZ-lemma}. Let in the
remaining part of this subsection $S_1, \ldots , S_m$ be finite
subsets of arbitrary field ${\mathbf{F}}$. Also we will relax from the
assumption that $s_1 \geq \cdots \geq s_m$.
\begin{theorem}\label{prop-sz-gen}
Let $F(\vec{X}) \in {\mathbf{F}}[\vec{X}]$ be a non-zero polynomial and
let ${\mbox{lm}}(F)=X_1^{i_1} \cdots X_m^{i_m}$ be its leading
monomial with respect to a lexicographic ordering. Then for
any finite sets $S_1, \ldots ,S_m \subseteq {\mathbf{F}}$
\begin{eqnarray}
\sum_{\vec{a}\in S_1 \times \cdots \times S_m}{\mbox{mult}}(F,\vec{a})
\leq i_1s_2\cdots s_m+s_1i_2s_3 \cdots s_m+\cdots +s_1\cdots s_{m-1}i_m.\nonumber
\end{eqnarray}
\end{theorem}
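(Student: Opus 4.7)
The plan is to prove Theorem \ref{prop-sz-gen} by induction on $m$, since essentially all the one-step machinery has been set up in the displays (\ref{eqfoerste}), (\ref{eqnaeste}), and (\ref{eqtredie}) preceding the statement.

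For the base case $m=1$, the leading monomial is $X_1^{i_1}$, so $F$ is a nonzero univariate polynomial of degree $i_1$ and the claim $\sum_{a_1\in S_1}\mathrm{mult}(F,a_1)\le i_1$ is the classical bound obtained by repeatedly factoring out $(X_1-a_1)^{\mathrm{mult}(F,a_1)}$.

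For the inductive step, assume the theorem for $m-1$ variables and consider $F(X_1,\ldots,X_m)$ with $\mathrm{lm}(F)=X_1^{i_1}\cdots X_m^{i_m}$. Summing (\ref{eqtredie}) over $\vec{a}\in S_1\times\cdots\times S_m$ and splitting the resulting sum, I would write
\begin{align*}
\sum_{\vec{a}}\mathrm{mult}(F,\vec{a})
&\le \sum_{a_m\in S_m}\Bigl(s_1\cdots s_{m-1}\,r(a_m)\\
&\quad{}+\sum_{(a_1,\ldots,a_{m-1})}\mathrm{mult}\bigl(F^{(0,\ldots,0,r(a_m))}(X_1,\ldots,X_{m-1},a_m),(a_1,\ldots,a_{m-1})\bigr)\Bigr).
\end{align*}
By (\ref{eqnaeste}) the polynomial $F^{(0,\ldots,0,r(a_m))}(X_1,\ldots,X_{m-1},a_m)$ is nonzero (it has $X_1^{i_1}\cdots X_{m-1}^{i_{m-1}}$ as its lex leading monomial), so the induction hypothesis applied to it with the sets $S_1,\ldots,S_{m-1}$ bounds the inner multiplicity sum by $i_1 s_2\cdots s_{m-1}+s_1 i_2 s_3\cdots s_{m-1}+\cdots+s_1\cdots s_{m-2}i_{m-1}$.

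Substituting this bound and using (\ref{eqfoerste}) to control $\sum_{a_m\in S_m}r(a_m)\le i_m$ gives
\[
\sum_{\vec{a}}\mathrm{mult}(F,\vec{a})\le s_1\cdots s_{m-1}i_m+s_m\bigl(i_1 s_2\cdots s_{m-1}+\cdots+s_1\cdots s_{m-2}i_{m-1}\bigr),
\]
which is precisely the claimed bound after collecting terms. There is no real obstacle: the only thing to double-check is that (\ref{eqnaeste}), (\ref{eqfoerste}), and (\ref{eqtredie}) were derived without using the normalization $s_1\ge\cdots\ge s_m$ (which the theorem explicitly drops), and a quick inspection confirms they treat $X_m$ as the lex-smallest variable only, so the induction is insensitive to the ordering of the $s_i$.
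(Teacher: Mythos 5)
Your proposal is correct and follows essentially the same route as the paper's own proof: an induction on $m$ whose inductive step sums the inequality (\ref{eqtredie}) over $S_1\times\cdots\times S_m$, applies the induction hypothesis to $F^{(0,\ldots,0,r(a_m))}(X_1,\ldots,X_{m-1},a_m)$ via (\ref{eqnaeste}), and controls $\sum_{a_m}r(a_m)$ by (\ref{eqfoerste}). Your write-up merely makes explicit the splitting of the sum and the nonvanishing of the specialized derivative that the paper leaves implicit.
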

\begin{proof}
We prove the theorem for the monomial
ordering $\prec$. Dealing with general 
lexicographic orderings is simply a question of relabeling the
variables. Clearly the theorem holds for
$m=1$. For $m>1$ we consider~(\ref{eqtredie}). Assuming the
theorem holds when the number of variables is smaller than $m$ we
get by applying~(\ref{eqfoerste}) and~(\ref{eqnaeste}) the following estimate
\begin{align}
&\sum_{\vec{a} \in S_1 \times \cdots \times
  S_m}{\mbox{mult}}(F,\vec{a}) \nonumber \\
&\leq i_ms_1 \cdots s_{m-1}+s_m(i_1s_2
\cdots s_{m-1}+\cdots +i_{m-1}s_1\cdots s_{m-2})\nonumber \\
&=i_1s_2\cdots s_m+i_2s_1 s_3\cdots s_m+\cdots i_ms_1\cdots
s_{m-1}\nonumber
\end{align}
as required.
\end{proof}
We have the following immediate generalization of Corollary~\ref{corsz}.
\begin{corollary}\label{cor-sz-gen}
Let $F(\vec{X}) \in {\mathbf{F}}[\vec{X}]$ be a non-zero polynomial and
let ${\mbox{lm}}(F)=X_1^{i_1} \cdots X_m^{i_m}$ be its leading
monomial with respect to a lexicographic ordering. Assume $S_1, \ldots
,S_m \subseteq {\mathbf{F}}$ are finite sets. 
Then over
$S_1 \times \cdots \times S_m $ the number
of zeros of multiplicity at least $r$ is less than or equal to 
\begin{eqnarray}
\big( i_1s_2\cdots s_m+s_1i_2s_3\cdots s_m+\cdots +s_1\cdots
s_{m-1}i_m \big) /r.\label{szbound}
\end{eqnarray}
\end{corollary}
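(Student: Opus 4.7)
The plan is to derive this directly from Theorem~\ref{prop-sz-gen}, exactly as Corollary~\ref{corsz} was derived from Theorem~\ref{SZ-lemma}. Let $N$ denote the number of points $\vec{a}\in S_1\times\cdots\times S_m$ at which $F$ has multiplicity at least $r$. Since the multiplicity function is non-negative, and each of these $N$ distinguished points contributes at least $r$ to the sum of multiplicities, I would write
$$Nr \leq \sum_{\vec{a}\in S_1\times\cdots\times S_m}\mathrm{mult}(F,\vec{a}).$$

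Next I would apply Theorem~\ref{prop-sz-gen} to the right-hand side. Since the leading monomial of $F$ with respect to the chosen lexicographic ordering is $X_1^{i_1}\cdots X_m^{i_m}$ by hypothesis, the theorem gives
$$\sum_{\vec{a}\in S_1\times\cdots\times S_m}\mathrm{mult}(F,\vec{a}) \leq i_1s_2\cdots s_m+s_1i_2s_3\cdots s_m+\cdots +s_1\cdots s_{m-1}i_m.$$
Dividing both sides of the chain by the positive integer $r$ yields the stated bound (\ref{szbound}) on $N$.

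There is essentially no obstacle here — the result is indeed ``immediate'' once Theorem~\ref{prop-sz-gen} is in hand, and the only subtlety worth noting is that no assumption $s_1\geq\cdots\geq s_m$ is invoked, which is consistent with the relaxation announced just before Theorem~\ref{prop-sz-gen}. The bound is vacuous when $r=0$, so one tacitly assumes $r\geq 1$, which is the intended regime.
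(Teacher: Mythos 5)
Your argument is correct and is precisely the intended one: the paper states this corollary as an ``immediate'' consequence of Theorem~\ref{prop-sz-gen} without writing out a proof, and the step you supply — that $N$ points of multiplicity at least $r$ contribute at least $Nr$ to the sum bounded by the theorem — is the standard deduction, the same one implicitly used to pass from Theorem~\ref{SZ-lemma} to Corollary~\ref{corsz}. Nothing is missing.
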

The analysis leading to Theorem~\ref{SZ-lemma} suggests the following
function to more accurately estimate the number of zeros of multiplicity at most $r$
of a polynomial with leading monomial $X_1^{i_1}\cdots X_m^{i_m}$:
\begin{definition}\label{defD}
Let $r \in {\mathbf{N}}, i_1, \ldots , i_m \in {\mathbf{N}}_0$. Define 
$$D(i_1,r,s_1)=\min \big\{\big\lfloor \frac{i_1}{r} \big\rfloor,s_1\big\}$$
and for $m \geq 2$
\begin{multline*}
D(i_1, \ldots , i_m,r,s_1, \ldots ,s_m)=
\\
\begin{split}
\max_{(u_1, \ldots  ,u_r)\in A(i_m,r,s_m) }&\bigg\{ (s_m-u_1-\cdots -u_r)D(i_1,\ldots ,i_{m-1},r,s_1,
\ldots ,s_{m-1})\\
&\quad+u_1D(i_1, \ldots , i_{m-1},r-1,s_1, \ldots ,s_{m-1})+\cdots
\\
&\quad +u_{r-1}D(i_1, \ldots ,i_{m-1},1,s_1, \ldots , s_{m-1})+u_rs_1\cdots
s_{m-1} \bigg\}
\end{split}
\end{multline*}
where 
\begin{multline}
A(i_m,r,s_m)= \nonumber \\
\{ (u_1, \ldots , u_r) \in {\mathbf{N}}_0^r \mid u_1+ \cdots
+u_r \leq s_m {\mbox{ \ and \ }} u_1+2u_2+\cdots +ru_r \leq i_m\}.\nonumber
\end{multline}
\end{definition}
\begin{theorem}\label{prorec}
For a polynomial $F(\vec{X})\in {\mathbf{F}}[\vec{X}]$ let $X_1^{i_1}\cdots X_m^{i_m}$ be its leading monomial with
respect to $\prec$ (this is the lexicographic ordering with $X_m\prec \cdots \prec X_1$). Then $F$ has at most $D(i_1, \ldots , i_m,r,s_1,
\ldots ,s_m)$ zeros of multiplicity at least $r$ in $S_1\times \cdots
\times S_m$. The corresponding recursive algorithm produces a number
that is at most equal to the number found in
Corollary~\ref{cor-sz-gen} and is at most equal to $s_1 \cdots s_m$.
\end{theorem}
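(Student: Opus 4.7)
The plan is to prove all three assertions simultaneously by induction on $m$. For $m=1$ the recursion terminates at $D(i_1,r,s_1)=\min\{\lfloor i_1/r\rfloor,s_1\}$: a univariate polynomial of degree $i_1$ has at most $\lfloor i_1/r\rfloor$ roots of multiplicity at least $r$ and $S_1$ has only $s_1$ elements, so this is a valid upper bound, while the inequalities $\lfloor i_1/r\rfloor\leq i_1/r$ and $\lfloor i_1/r\rfloor\leq s_1$ cover the remaining two assertions in the base case.

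For the inductive step of the first assertion I would follow the template used in the proof of Theorem~\ref{prop-sz-gen}. For each $a_m\in S_m$ set $r(a_m)=\mbox{mult}(F_{i_1,\ldots,i_{m-1}}(X_m),a_m)$ and define $u_j=\#\{a_m\in S_m:r(a_m)=j\}$ for $j=1,\ldots,r-1$ together with $u_r=\#\{a_m\in S_m:r(a_m)\geq r\}$. Clearly $\sum_{j=1}^r u_j\leq s_m$ and (\ref{eqfoerste}) gives $\sum_{j=1}^r j u_j\leq i_m$, so $(u_1,\ldots,u_r)\in A(i_m,r,s_m)$. When $r(a_m)=j$ with $0<j<r$, inequality (\ref{eqtredie}) forces any zero $(a_1,\ldots,a_m)$ of $F$ of multiplicity at least $r$ to give a zero $(a_1,\ldots,a_{m-1})$ of multiplicity at least $r-j$ of $F^{(0,\ldots,0,j)}(X_1,\ldots,X_{m-1},a_m)$, whose leading monomial with respect to $\prec$ is $X_1^{i_1}\cdots X_{m-1}^{i_{m-1}}$ by (\ref{eqnaeste}); the induction hypothesis therefore bounds the number of such points by $D(i_1,\ldots,i_{m-1},r-j,s_1,\ldots,s_{m-1})$. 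The case $r(a_m)=0$ (occurring for $s_m-\sum_{j=1}^r u_j$ values of $a_m$) is handled identically without Hasse differentiation, the leading monomial of $F(X_1,\ldots,X_{m-1},a_m)$ being still $X_1^{i_1}\cdots X_{m-1}^{i_{m-1}}$, and the case $r(a_m)\geq r$ is handled by the trivial bound $s_1\cdots s_{m-1}$. Summing over $a_m$ yields exactly the expression inside the max defining $D(i_1,\ldots,i_m,r,s_1,\ldots,s_m)$ for this particular $(u_1,\ldots,u_r)$, and since only $(u_1,\ldots,u_r)\in A(i_m,r,s_m)$ is known a priori, taking the maximum over $A(i_m,r,s_m)$ gives a uniform upper bound.

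For the comparison with Corollary~\ref{cor-sz-gen}, abbreviate $T_{m-1}=i_1 s_2\cdots s_{m-1}+\cdots+s_1\cdots s_{m-2}i_{m-1}$ and $D^{(r')}=D(i_1,\ldots,i_{m-1},r',s_1,\ldots,s_{m-1})$. By induction $D^{(r')}\leq\min(T_{m-1}/r',\,s_1\cdots s_{m-1})$ for every $r'\geq 1$. After applying $D^{(r)}\leq T_{m-1}/r$ in the $(s_m-\sum u_j)$ term and cancelling $s_m T_{m-1}/r$, the target inequality $D\leq(s_m T_{m-1}+s_1\cdots s_{m-1}i_m)/r$ reduces to
\[
\sum_{j=1}^{r-1}u_j\bigl(rD^{(r-j)}-T_{m-1}\bigr)+u_r\bigl(rs_1\cdots s_{m-1}-T_{m-1}\bigr)\leq s_1\cdots s_{m-1}\,i_m.
\]
The hard step is the pointwise bound $rD^{(r-j)}-T_{m-1}\leq j\,s_1\cdots s_{m-1}$, which I would handle by a two-case split: if $T_{m-1}\leq(r-j)s_1\cdots s_{m-1}$ apply $D^{(r-j)}\leq T_{m-1}/(r-j)$ to obtain $rD^{(r-j)}-T_{m-1}\leq T_{m-1}\,j/(r-j)\leq j\,s_1\cdots s_{m-1}$; otherwise apply $D^{(r-j)}\leq s_1\cdots s_{m-1}$ to obtain $rD^{(r-j)}-T_{m-1}\leq rs_1\cdots s_{m-1}-T_{m-1}\leq j\,s_1\cdots s_{m-1}$. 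Since $T_{m-1}\geq 0$ the $u_r$ term is bounded trivially by $r u_r s_1\cdots s_{m-1}$, and summing together with $\sum_{j=1}^r j u_j\leq i_m$ completes the inequality. The third claim $D\leq s_1\cdots s_m$ is an immediate byproduct: substituting the inductive bound $s_1\cdots s_{m-1}$ for every $D^{(r')}$ collapses the bracketed expression to $s_m\cdot s_1\cdots s_{m-1}$, independently of the $u_j$. The only genuinely delicate point in the whole argument is the two-case analysis in the pointwise bound, since neither of the two inductive bounds on $D^{(r-j)}$ is individually strong enough to close the inequality.
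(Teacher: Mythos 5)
Your proof is correct and, for the main assertion, follows the same route as the paper: induction on $m$, stratifying the points $a_m \in S_m$ by the value of $r(a_m)$, and combining (\ref{eqfoerste}), (\ref{eqnaeste}) and (\ref{eqtredie}) with the induction hypothesis so that the count lands exactly on the expression maximized in Definition~\ref{defD}; the membership $(u_1,\ldots,u_r)\in A(i_m,r,s_m)$ and the trivial bound $s_1\cdots s_{m-1}$ for the stratum $r(a_m)\geq r$ appear in the paper in the same way. Where you genuinely differ is in the comparison with Corollary~\ref{cor-sz-gen}: the paper dispatches it in one sentence, observing that the corollary is obtained from the same inequality (\ref{eqtredie}) with coarser bookkeeping, so the optimized recursion cannot exceed it; you instead prove the inequality directly by induction, reducing it to the pointwise estimate $rD^{(r-j)}-T_{m-1}\leq j\,s_1\cdots s_{m-1}$ (in your notation $T_{m-1}$ for the $(m-1)$-variable Schwartz--Zippel quantity and $D^{(r')}$ for the lower-dimensional recursion values) and verifying that estimate by the two-case split according to whether $T_{m-1}\leq (r-j)s_1\cdots s_{m-1}$. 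I checked both cases and the final summation against $\sum_j ju_j\leq i_m$; they are sound. Your version is more self-contained and makes explicit that both inductive bounds on $D^{(r-j)}$ --- the Schwartz--Zippel one and the trivial one $s_1\cdots s_{m-1}$ --- are needed, neither alone sufficing, which the paper's meta-argument leaves implicit. Your treatment of the base case and of the third claim $D\leq s_1\cdots s_m$ coincides with the paper's.
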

\begin{proof}
The proof of the first part of the proposition is an induction proof. The result clearly holds for
$m=1$. Given $m>1$ assume it holds for $m-1$. 
For $d=1, \ldots ,
r-1$ let $u_d$ be the number of $a_m$'s with $r(a_m)=d$ and let $u_r$
be the number of $a_m$'s with $r(a_m) \geq r$. The number of $a_m$'s
with $r(a_m)=0$ is $s_m-u_1-\cdots -u_r$. The boundary conditions that
$u_1+\cdots +u_r \leq s_m$ and $u_1+2u_2+\cdots +ru_r \leq i_m$ are
obvious. For every $a_m$ with $r(a_m)=d$, $d=0, \ldots , r-1$ for
$(a_1, \ldots , a_m)$ to be a zero of multiplicity at least $r$ the
last expression in~(\ref{eqtredie}) must be at least $r-d$. For
$a_m$ with $r(a_m)\geq r$ all choices of $a_1, \ldots , a_{m-1}$ are
legal. This proves the first part of the proposition. As both
Corollary~\ref{cor-sz-gen} and the above proof rely on~(\ref{eqtredie}), Theorem~\ref{prorec} cannot produce a
number greater than what is found in Corollary~\ref{cor-sz-gen}. The
condition $u_1+\cdots +u_r \leq s_m$ and the definition of
$D(i_1,r,s_1)$ imply the last result.
\end{proof}
It only makes sense to apply the function $D(i_1, \ldots ,i_m,r,s_1,
\ldots , s_m)$ to monomials $ X_1^{i_1}\cdots X_m^{i_m}$ in 
$$\Delta(r,s_1,\ldots ,s_m)=\{ X_1^{i_1}\cdots X_m^{i_m} \mid  \lfloor
i_1/s_1\rfloor+\cdots +\lfloor i_m/s_m\rfloor < r\}.
$$
\begin{proposition}\label{rembig}
Assume $X_1^{i_1}\cdots X_m^{i_m} \notin \Delta(r,s_1, \ldots ,
s_m)$. Then there exists a
polynomial with leading monomial  $X_1^{i_1}\cdots X_m^{i_m}$ such that all elements of $S_1\times \cdots \times S_m$ are zeros of
multiplicity at least $r$.
\end{proposition}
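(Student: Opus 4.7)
The plan is to construct the desired polynomial explicitly as a product of univariate factors, one for each variable. For $j = 1, \ldots, m$ set $k_j = \lfloor i_j/s_j \rfloor$ and $i_j' = i_j - k_j s_j$, so that $0 \leq i_j' < s_j$, and note that the hypothesis $X_1^{i_1}\cdots X_m^{i_m} \notin \Delta(r,s_1,\ldots,s_m)$ gives exactly $k_1 + \cdots + k_m \geq r$. Writing $P_j(X_j) = \prod_{a \in S_j}(X_j - a)$, I would take
$$F(\vec{X}) = \prod_{j=1}^m X_j^{i_j'}\, P_j(X_j)^{k_j}.$$

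Next I would check that the leading monomial of $F$ is $X_1^{i_1}\cdots X_m^{i_m}$. Each factor $X_j^{i_j'}\,P_j(X_j)^{k_j}$ is a univariate polynomial in $X_j$ with leading term $X_j^{i_j' + k_j s_j} = X_j^{i_j}$, and since the factors involve pairwise disjoint variables, the leading monomial of the product is the product of the leading monomials with respect to any monomial ordering (in particular the lexicographic one).

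The remaining task is to verify the multiplicity claim, which I expect to be the only place requiring a small argument. Fix an arbitrary $\vec{a} = (a_1,\ldots,a_m) \in S_1 \times \cdots \times S_m$. Since $P_j(a_j) = 0$ with $a_j$ a simple root, $P_j(X_j)^{k_j}$ has a zero of multiplicity $k_j$ at $a_j$ as a univariate polynomial; the same integer is its multiplicity at $\vec{a}$ when viewed as a multivariate polynomial, because $P_j(X_j + Z_j)$ only involves $Z_j$. Writing $F_j(X_j) = X_j^{i_j'}\,P_j(X_j)^{k_j}$ we get
$$F(\vec{X} + \vec{Z}) \;=\; \prod_{j=1}^m F_j(X_j + Z_j) \;=\; \sum_{\vec{\alpha}} \Bigl(\prod_{j=1}^m F_j^{(\alpha_j)}(X_j)\Bigr)\vec{Z}^{\vec{\alpha}}.$$
Evaluating at $\vec{X} = \vec{a}$, the coefficient of $\vec{Z}^{\vec{\alpha}}$ equals $\prod_j F_j^{(\alpha_j)}(a_j)$, which vanishes whenever any $\alpha_j < k_j$. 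Hence the smallest $|\vec{\alpha}|$ for which a non-zero coefficient may occur is $k_1 + \cdots + k_m$, and so $\operatorname{mult}(F, \vec{a}) \geq k_1 + \cdots + k_m \geq r$, completing the proof.

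The construction is direct and the only subtle step is the additivity of multiplicity for a product of univariate factors in disjoint variables, handled cleanly by the Hasse-derivative identity above.
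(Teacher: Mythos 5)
Your proof is correct and complete. Note that the paper itself states Proposition~\ref{rembig} without any proof, so there is nothing to compare against; your construction $F(\vec{X})=\prod_{j=1}^m X_j^{i_j'}\prod_{a\in S_j}(X_j-a)^{k_j}$ with $k_j=\lfloor i_j/s_j\rfloor$ is surely the one the authors had in mind, being the multiplicity-$r$ analogue of the polynomial used in the proof of Proposition~\ref{afstand}. All the steps check out: $\mathrm{lm}(F)$ is the product of the univariate leading monomials for any monomial ordering; the factorization $F^{(\vec{\alpha})}(\vec{X})=\prod_j F_j^{(\alpha_j)}(X_j)$ follows from expanding $F(\vec{X}+\vec{Z})=\prod_j F_j(X_j+Z_j)$; and since $a_j$ is a root of $F_j$ of multiplicity at least $k_j$, every Hasse derivative $F_j^{(\alpha_j)}(a_j)$ with $\alpha_j<k_j$ vanishes, forcing $|\vec{\alpha}|\ge k_1+\cdots+k_m\ge r$ for any nonvanishing term, exactly as required by Definition~\ref{defmult}. (The multiplicity of $F_j$ at $a_j=0$ may exceed $k_j$ when $i_j'>0$, but that only strengthens the conclusion.)
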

\begin{example}
In a
number of experiments listed in~\cite{hmpage} we calculated the value $D(i_1, \ldots ,
i_m,r,q, \ldots ,q)$ for various choices of $m$, $q$ and $r$ and for all values of
$(i_1, \ldots , i_m)$ such that $X_1^{i_m} \cdots X_m^{i_m} \in
\Delta(r,q, \ldots , q)$.
Here we list the mean improvement in comparison
with the situation where 
Corollary~\ref{corsz} is applied. More formally, we list in
Table~\ref{tabendnuenny1} for various fixed $q, r, m$ the mean value of
\begin{equation}
\frac{ \min \{(i_1+\cdots + i_m)q^{m-1}/r,q^m\}-D(i_1, \ldots ,
i_m,r,q, \ldots , q)}{\min \{(i_1+\cdots +
i_m)q^{m-1}/r,q^m\}}.\label{eqangle}
\end{equation}
\begin{table}[!h]
\centering
\caption{The mean value of (\ref{eqangle}); truncated.}
\newcommand{\SP}{~~}
\begin{tabular}{@{}c@{}c@{~~~~}r@{.}l@{\SP}r@{.}l@{\SP}r@{.}l@{\SP}r@{.}l@{\SP}r@{.}l@{\SP}r@{.}l@{\SP}r@{.}l@{\SP}r@{.}l@{\SP}r@{.}l@{\SP}r@{.}l@{}}
\toprule
$m$&& \multicolumn{8}{c}{2} & \multicolumn{8}{c}{3} & \multicolumn{4}{c}{4} \\
\cmidrule(r){3-10} \cmidrule(r){11-18} \cmidrule{19-22}
$r$&&\multicolumn{2}{l}{2}&\multicolumn{2}{l}{3}&\multicolumn{2}{l}{4}&\multicolumn{2}{l}{5}&\multicolumn{2}{l}{2}&\multicolumn{2}{l}{3}&\multicolumn{2}{l}{4}&\multicolumn{2}{l}{5}&\multicolumn{2}{l}{2}&\multicolumn{2}{l}{3}\\
\addlinespace
\multirow{7}{*}{$q$}
&\multicolumn{1}{c}{2}&0&363&0&273&0&337&0&291&0&301&0&300&0&342&0&307&0&248&0&260\\
&\multicolumn{1}{c}{3}&0&217&0&286&0&228&0&236&0&194&0&224&0&213&0&214&0&158&0&177\\
&\multicolumn{1}{c}{4}&0&191&0&197&0&232&0&195&0&158&0&169&0&180&0&172&0&125&0&135\\
&\multicolumn{1}{c}{5}&0&155&0&167&0&174&0&197&0&139&0&145&0&148&0&153&0&110&0&116\\
&\multicolumn{1}{c}{7}&0&128&0&137&0&138&0&138&0&119&0&122&0&121&0&119&0&093&0&098\\
&\multicolumn{1}{c}{8}&0&126&0&127&0&134&0&126&0&114&0&115&0&113&0&111&0&089&0&093\\
\bottomrule
\end{tabular}
\label{tabendnuenny1}
\end{table}
Despite the significant mean improvement, according to our experiments in~\cite{hmpage} for most fixed degrees $u$ there are examples of
exponents $(i_1, \ldots , i_m)$, $i_1+\cdots +i_m=u$ such that $D(i_1,
\ldots ,i_m,r,s_1, \ldots , s_m)=\lfloor (i_1+\cdots +i_m)q^{m-1}/r\rfloor$.
\end{example}
Sometimes the values $D(i_1, \ldots ,i_m,r, s_1, \ldots , s_m)$ may be
time consuming to
calculate. Therefore it is relevant to have some closed formula
estimates of these numbers. We next present such estimates for the
case of two variables. Note, that the following proposition covers all
monomials in $\Delta(r,s_1,s_2)$.
\begin{proposition}\label{protwovar}
For $k=1, \ldots , r-1$,  $D(i_1,i_2,r,s_1,s_2)$ is upper bounded by\\
$\begin{array}{cl}
{\mbox{(C.1)}}&  {\displaystyle{s_2\frac{i_1}{r}+\frac{i_2}{r}\frac{i_1}{r-k}}}\\
&{\mbox{if \  }}(r-k)\frac{r}{r+1}s_1 \leq i_1 < (r-k)s_1
{\mbox{ \ and \ }} 0\leq i_2 <ks_2,\\
{\mbox{(C.2)}}&
  {\displaystyle{s_2\frac{i_1}{r}+((k+1)s_2-i_2)(\frac{i_1}{r-k}-\frac{i_1}{r})+(i_2-ks_2)(s_1-\frac{i_1}{r})}}\\
& {\mbox{if \ }}(r-k)\frac{r}{r+1}s_1 \leq i_1 < (r-k)s_1 {\mbox{ \
    and \ }} ks_2\leq i_2 <(k+1)s_2,\\
{\mbox{(C.3)}}&
{\displaystyle{s_2\frac{i_1}{r}+\frac{i_2}{k+1}(s_1-\frac{i_1}{r})}}\\
&{\mbox{if \ }} (r-k-1)s_1 \leq i_1 < (r-k)\frac{r}{r+1}s_1 {\mbox{ \
    and \ }} 0 \leq i_2 < (k+1)s_2.
\end{array}
$\\
Finally,\\
$\begin{array}{cl}
{\mbox{(C.4)}}& {\displaystyle{D(i_1,i_2,r,s_1,s_2)=s_2\lfloor \frac{i_1}{r} \rfloor
  +i_2(s_1-\lfloor \frac{i_1}{r} \rfloor )}}\\
& {\mbox{if \ }} s_1(r-1) \leq i_1 < s_1r {\mbox{ \ and \ }} 0 \leq i_2 < s_2.
\end{array}
$\\
The above numbers are at most equal to $\min\{(i_1s_2+s_1i_2)/r, s_1s_2 \}$.
\end{proposition}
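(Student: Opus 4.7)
The plan is to read $D(i_1,i_2,r,s_1,s_2)$ as the value of the linear programme defining it and use a greedy/rate-of-return argument together with LP duality. Concretely, set
\[
c_0=D(i_1,r,s_1),\quad c_j=D(i_1,r-j,s_1)\ (1\le j\le r-1),\quad c_r=s_1,
\]
and rewrite the defining expression as $s_2c_0+\sum_{j=1}^r u_j(c_j-c_0)$ subject to $(u_1,\ldots,u_r)\in A(i_2,r,s_2)$. Thus every slot $j\in\{1,\ldots,r\}$ has return $c_j-c_0$ and uses $j$ units of the budget $i_2$, so the rate of return per unit of $i_2$ is $(c_j-c_0)/j$.

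Next I would pin down the $c_j$'s in each case. Since $D(i_1,j,s_1)=\min\{\lfloor i_1/j\rfloor,s_1\}$, the hypothesis $i_1<(r-k)s_1$ of (C.1) and (C.2) forces $c_j=\lfloor i_1/(r-j)\rfloor\le i_1/(r-j)$ for $0\le j\le k$, whereas the inequality $i_1\ge(r-k-1)s_1$ (verified for all three cases, since $(r-k)\frac{r}{r+1}s_1>(r-k-1)s_1$) forces $c_j=s_1$ for $j\ge k+1$. Elementary algebra shows that among the slots with $c_j<s_1$ the rate $i_1/(r(r-j))$ is increasing in $j$, while among those with $c_j=s_1$ the rate $(s_1-c_0)/j$ is decreasing in $j$; the crossover inequality $i_1/(r(r-k))\ge(s_1-i_1/r)/(k+1)$ reduces to $i_1\ge (r-k)\frac{r}{r+1}s_1$. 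This is exactly the split between (C.1)/(C.2) on the one hand and (C.3) on the other. Accordingly the greedy LP optimum uses only slot~$k$ in (C.1), both slots $k$ and $k+1$ in (C.2) (when $i_2\ge ks_2$ the constraint $\sum u_j\le s_2$ must also become tight), only slot $k+1$ in (C.3), and only slot $1$ in (C.4); reading off $u_k=i_2/k$, $u_k=(k+1)s_2-i_2$ and $u_{k+1}=i_2-ks_2$, $u_{k+1}=i_2/(k+1)$, and $u_1=i_2$ respectively yields the four stated expressions. In (C.4) no relaxation is needed because the optimum is attained at an integer point, which is why equality is claimed there. Optimality in the relaxed LP is confirmed by a dual certificate $y_1=c_0$, $y_2=$ the chosen rate; the only constraints that are not visibly satisfied by construction are $y_1+(k+1)y_2\ge s_1$ in (C.1) and $y_1\ge c_0$ in (C.2), both of which simplify precisely to the hypotheses on~$i_1$.

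For the closing inequality, each of (C.1)--(C.4) is affine in $i_2$, as is $(i_1s_2+s_1i_2)/r$, while $s_1s_2$ is constant; so it suffices to verify the inequality at the two endpoints of the $i_2$-range of each case. All four endpoint checks reduce to one of the two already assumed bounds on $i_1$: the condition $i_1\le(r-k)s_1$ gives the Schwartz--Zippel half ($\le(i_1s_2+s_1i_2)/r$), and $i_1\ge(r-k-1)s_1$ gives the trivial half ($\le s_1s_2$). A sample computation at the common endpoint $i_2=ks_2$ between (C.1) and (C.2) produces $s_2i_1/(r-k)$ on both sides, which illustrates the continuity of the piecewise bound at the case boundaries.

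The main obstacle is the careful bookkeeping for the dual-feasibility check: one must ensure that all of the $r+1$ dual inequalities $y_1+jy_2\ge c_j$, not merely the two that are tight at the proposed primal optimum, follow from the hypotheses together with the monotonicity of the $c_j$'s. Separating the unsaturated regime (C.1)/(C.3) from the saturated regime (C.2), and verifying that the crossover rule at the boundary $i_1=(r-k)\frac{r}{r+1}s_1$ really is a global optimum rather than a local one, is the step where an oversight could silently give a wrong constant; once that is done the remaining computations are routine.
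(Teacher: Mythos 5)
Your proposal is correct and follows essentially the same route as the paper's proof: relax the defining integer program over $A(i_2,r,s_2)$ to a rational LP, identify the coefficients $D(i_1,r-j,s_1)$ as $i_1/(r-j)$ for $j\le k$ and as $s_1$ for $j\ge k+1$, and compare rates of return per unit of $i_2$, with the crossover between slots $k$ and $k+1$ reducing exactly to the threshold $i_1=(r-k)\frac{r}{r+1}s_1$. The explicit dual certificate and the affine-in-$i_2$ endpoint check for the closing inequality are added rigor that the paper omits (it simply asserts the maximizers and refers back to the proof of Theorem~\ref{prorec} for the last claim), but they do not change the argument.
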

\begin{proof}First we consider the values of $i_1, i_2,
r,s_1,s_2$ corresponding to one of the cases (C.1), (C.2), (C.3). Let
$k$ be the largest number (as in Proposition~\ref{protwovar})
  such that $i_1 < (r-k)s_1$. Indeed $k \in \{1, \ldots ,
r-1\}$. We have
\begin{multline}
\label{eqsnabelen}
D(i_1,i_2,r,s_1,s_2) \leq\\
\max_{(u_1, \ldots  ,u_r)\in B(i_2,r,s_2)} \bigg\{
s_2\frac{i_1}{r}+u_1(\frac{i_1}{r-1}-\frac{i_1}{r})+\cdots
+u_k(\frac{i_1}{r-k}-\frac{i_1}{r})\\
+u_{k+1}(s_1-\frac{i_1}{r})+ \cdots +u_r(s_1-\frac{i_1}{r})\bigg\}
\end{multline}
where 
\begin{multline*}
B(i_2,r,s_2)=\{(u_1, \ldots , u_r) \in {\mathbf{Q}}^r \mid 0 \leq u_1,
\ldots ,u_r,\\
u_1+\cdots +u_r \leq s_2, u_1+2u_2+
\cdots +ru_r \leq i_2\}.
\end{multline*}
We observe, that 
$$k(\frac{i_1}{r-l}-\frac{i_1}{r})\leq l
(\frac{i_1}{r-k}-\frac{i_1}{r})$$ 
holds for $l \leq k$. Furthermore, we have the biimplication
\begin{eqnarray}
(r-k)\frac{r}{r+1} s_1 
\leq i_1
\Leftrightarrow (k+1)(\frac{i_1}{r-k}-\frac{i_1}{r}) \geq
k(s_1-\frac{i_1}{r}). \nonumber
\end{eqnarray}
Therefore, if the conditions in (C.1) are satisfied then~(\ref{eqsnabelen}) takes on its maximum when
$u_k=\frac{i_2}{k}$ and the remaining $u_i$'s equal $0$. If the
conditions in (C.2) are satisfied then (\ref{eqsnabelen}) takes on its
maximum at $u_k=(k+1)s_2-i_2$, $u_{k+1}=(i_2-ks_2)$ and the remaining
$u_i$'s equal $0$. If the conditions in (C.3) are satisfied
then~(\ref{eqsnabelen}) takes on its maximal value at
$u_{k+1}=\frac{i_2}{k+1}$ and the remaining $u_i$'s equal $0$.\\
Finally, if $s_1(r-1) \leq i_1 < s_1r$ and $0 \leq i_2 \leq s_2$ then
$D(i_1,i_2,r,s_1,s_2)$ is the maximal value of
$$s_2\lfloor \frac{i_1}{r} \rfloor +u_1(s_1-\lfloor \frac{i_1}{r}
\rfloor)+ \cdots +u_r(s_1-\lfloor \frac{i_1}{r}
\rfloor)$$
over $B(i_2,r,s_2)$. The maximum is attained for $u_1=i_2$ and all other
$u_i$'s equal $0$. The proof of the last result follows the proof of
the last part of Theorem~\ref{prorec}.
\end{proof}
\begin{remark}
Experiments show (see~\cite{hmpage}) that 
 the numbers produced by Proposition~\ref{protwovar} are often much
smaller than $\min \{ (i_1s_2 +s_1i_2)/r,s_1s_2\}$. However, there are
cases where they are identical. This happens for example when
$i_1=s_1(r-1)$ and $r$ divides $s_1$ and $s_2$. In the
proof of (C.1), (C.2), (C.3) we allowed $u_1, \ldots , u_r$ to be
rational numbers rather than integers. Therefore we cannot expect the upper bounds in
Proposition~\ref{protwovar} to equal the true value of
$D(i_1,i_2,r,s_1,s_2)$ in general. Our experiments show that the
bounds in (C.1), (C.2), (C.3) are sometimes close to
$D(i_1,i_2,r,s_1,s_2)$ but not always. Hence the best information is
found by actually applying the function $D(i_1,i_2,r,s_1,s_2)$ directly.
\end{remark}
\subsection{The decoding algorithm}\label{secalg}
\noindent The main ingredient of the decoding algorithm is to find an
interpolation polynomial 
$$Q(X_1, \ldots , X_m,Z)=Q_0(X_1, \ldots , X_m)+Q_1(X_1,\ldots ,
X_m)Z+\cdots +Q_t(X_1, \ldots ,X_m)Z^t$$
such that $Q(X_1, \ldots , X_m,F(X_1, \ldots  ,X_m))$ cannot have more
than $n-E$ different zeros of multiplicity at least $r$ whenever
${\mbox{Supp}}(F)\subseteq {\mathbb{M}}$. The integer $E$ above is the
number of errors to be corrected by our list decoding algorithm. In
\cite{pw_ieee}, \cite{augot}, \cite{augotstepanov} this requirement is
described in terms of bounds on the total degree of the polynomials $Q_i$. As
we will use improved information that depends not on total degree but on the leading monomial with respect to a lexicographic
ordering the situation becomes more complicated. To
fulfill the above requirement we will define appropriate sets of monomials
$B(i,E,r)$, $i=1, \ldots , t$ and then require $Q_i(X_1, \ldots
,X_m)$ to be chosen such that ${\mbox{Supp}}(Q_i)\subseteq
B(i,E,r)$. Rather than using the results from the previous section on
all possible choices of $F(X_1, \ldots  ,X_m)$ with ${\mbox{Supp}}(F)
\subseteq {\mathbb{M}}$ we need only consider the worst cases where the
leading monomial of $F$ is contained in the following set:
\begin{definition}
$$\overline{{\mathbb{M}}}=\{ M \in {\mathbb{M}} \mid {\mbox{ \ if \ }} N\in
{\mathbb{M}} {\mbox{ \ and \ }} M|N {\mbox{ \ then \ }} M=N\}.$$
\end{definition}
Hence, $\overline{{\mathbb{M}}}$ is so to speak the border of
${\mathbb{M}}$. 
\begin{definition}
Given positive integers $i,E,r$ with $E<n$ let
$$B(i,E,r)=\{K \in \Delta(r,s_1, \ldots ,s_m ) \mid D_r(K M^{i}) < n-E {\mbox{ \ for
    all \ }} M \in \overline{{\mathbb{M}}} \}.$$
Here $D_r(X_1^{i_1}, \ldots , X_m^{i_m})$ can either be $D(i_1, \ldots
, i_m,r,s_1, \ldots ,s_m)$ or in the case of two variables it can be
the numbers from Proposition~\ref{protwovar}. Another option would be
to let $D_r(X_1^{i_1}, \ldots , X_m^{i_m})$ be the number
in~(\ref{szbound}). 
\end{definition}
The decoding algorithm calls for positive integers $t,E,r$ such that
\begin{equation}
\sum_{i=1}^t |B(i,E,r)| >n N(m,r), \label{eqsnabel}
\end{equation}
where $N(m,r)={{m+r}\choose{m+1}}$ is the number of linear equations to be satisfied
for a point in ${\mathbf{F}}_q^{m+1}$ to be a zero of $Q(X_1, \ldots ,
X_m,Z)$ of multiplicity at least $r$. 
As we will see condition~(\ref{eqsnabel}) ensures that we
can correct $E$ errors. We say that $(t,E,r)$ satisfies the
initial condition if given the pair $(E,r)$, $t$ is the smallest integer such
that~(\ref{eqsnabel}) is satisfied. Whenever this is the case we
define $B^\prime(t,E,r)$ to be any subset of $B(t,E,r)$ such that 
$$\sum_{i=1}^{t-1}|B(i,E,r)| + |B^\prime(t,E,r)|=n N(m,r)+1.$$
Replacing $B(t,E,r)$ with $B^\prime(t,E,r)$ will lower the run-time
of the algorithm.
\begin{Algorithm}\label{thealgorithm}
\noindent {\it{Input:}}\\
Received word $\vec{r}=(r_1, \ldots , r_n) \in {\mathbf{F}}_q^n$.\\
Set of integers $(t,E,r)$ that satisfies the initial condition.\\
Corresponding sets $B(1,E,r)\, \ldots ,B(t-1,E,r),B^\prime(t,E,r)$.\\

\noindent {\it{Step 1}}\\
Find non-zero polynomial
$$Q(X_1, \ldots , X_m Z)=Q_0(X_1, \ldots , X_m)+Q_1(X_1, \ldots
,X_m)Z+\cdots +Q_t(X_1, \ldots ,X_m)Z^t$$
such that
\begin{enumerate}
\item ${\mbox{Supp}}(Q_i) \subseteq B(i,E,r)$ for $i=1, \ldots ,t-1$
  and ${\mbox{Supp}}(Q_t) \subseteq B^\prime(t,E,r)$,
\item $(P_i,r_i)$ is a zero of $Q(X_1, \ldots , X_m,Z)$ of
  multiplicity at least $r$ for $i=1, \ldots ,n$.
\end{enumerate}

\noindent {\it{Step 2}}\\
Find all $F(X_1, \ldots ,X_m) \in {\mathbf{F}}_q[X_1, \ldots , X_m]$
such that 
\begin{equation}
(Z-F(X_1, \ldots ,X_m)) |Q(X_1, \ldots , X_m,Z).\label{eqstar}
\end{equation}
\noindent {\it{Output:}}\\
A list containing $(F(P_1), \ldots , F(P_n))$ for all $F$ satisfying~(\ref{eqstar}).
\end{Algorithm}

\begin{theorem}
The output of Algorithm~\ref{thealgorithm} contains all words in
$E({\mathbb{M}},{\mathcal{S}})$ within distance $E$ from the received word
$\vec{r}$. Once the preparation step has been performed the algorithm runs in time ${\mathcal{O}}(\bar{n}^3)$ where
$\bar{n}=n{{m+r}\choose{m+1}}$. For given multiplicity $r$ the maximal
number of correctable errors $E$ and the corresponding sets $B(1,
E,r), \ldots ,B(t-1,E,r)$, $B^\prime(t,E,r)$ can be found in time
${\mathcal{O}}(n \log(n) r^m s' |\overline{\mathbb{M}}|/\sigma)$ 
assuming that the values of the function $D_r$ are known. Here
$\sigma = \max\{\deg M \mid M \in
\overline{\mathbb{M}}\}$ and $s' = \max\{s_1, \dotsc, s_m\}$.
\end{theorem}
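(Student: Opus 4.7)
The plan is to verify three claims in turn: (i) correctness of Algorithm~\ref{thealgorithm}, (ii) the $O(\bar{n}^3)$ decoding bound, and (iii) the stated preparation bound.

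\textbf{Correctness.} First I would show that Step~1 always returns a non-zero $Q$. Requiring $(P_i,r_i)$ to be a zero of multiplicity at least $r$ of $Q(X_1,\dots,X_m,Z)$ imposes exactly $N(m,r)=\binom{m+r}{m+1}$ homogeneous linear constraints on the coefficients of $Q$, and hence $nN(m,r)$ constraints in total. By the definition of $B^\prime(t,E,r)$ the number of free coefficients of $Q$ equals $nN(m,r)+1$, so a non-zero solution exists. Now fix any $F$ with $\mathrm{Supp}(F)\subseteq \mathbb{M}$ whose evaluation differs from $\vec{r}$ in at most $E$ positions; I have to show $(Z-F)\mid Q$. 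Consider
$h(\vec X)=Q(\vec X,F(\vec X))=\sum_i Q_i(\vec X)\,F(\vec X)^i.$
On the (at least) $n-E$ error-free positions one has $r_i=F(P_i)$, so $(P_i,F(P_i))$ is a zero of $Q$ of multiplicity at least $r$, and Proposition~\ref{propsammensat} promotes each $P_i$ to a zero of $h$ of the same multiplicity. To upper bound the zeros of $h$ I use that in the lexicographic order $\prec$ one has $\mathrm{lm}(Q_iF^i)=\mathrm{lm}(Q_i)\cdot\mathrm{lm}(F)^i$; because $\mathrm{lm}(F)\in \mathbb{M}$ and $\mathbb{M}$ is finite, $\mathrm{lm}(F)$ divides some $M\in\overline{\mathbb{M}}$, so $\mathrm{lm}(h)$ divides $\mathrm{lm}(Q_i)\,M^i$ for the $i$ attaining the lex maximum. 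A short induction on $m$ (starting from the monotonicity of $\lfloor i_1/r\rfloor$) shows that $D_r$ is non-decreasing under divisibility, so
$D_r(\mathrm{lm}(h))\le D_r(\mathrm{lm}(Q_i)M^i)<n-E$
by the very definition of $B(i,E,r)\ni \mathrm{lm}(Q_i)$. Theorem~\ref{prorec} (or Proposition~\ref{protwovar}, or Corollary~\ref{cor-sz-gen}, depending on which incarnation of $D_r$ is in use) then forces $h=0$, whence $(Z-F)\mid Q$ and Step~2 outputs $\mathrm{ev}_{\mathcal{S}}(F)$.

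\textbf{Decoding complexity.} Step~1 is a homogeneous linear system with $\bar n=nN(m,r)$ unknowns, solvable in $O(\bar n^3)$. Step~2, extracting the linear $Z$-factors of $Q(\vec X,Z)$, is a multivariate root-finding task handled by a Roth--Ruckenstein style recursion on $X_1,\dots,X_m$ within the same asymptotic budget, so Step~1 dominates.

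\textbf{Preparation complexity.} I would locate the largest admissible $E$ by binary search over $\{0,\dots,n-1\}$, yielding the $\log n$ factor. For each candidate $E$, $B(i,E,r)$ is built iteratively over $i=1,2,\dots$ by enumerating monomials $K$ in the relevant slice of $\Delta(r,s_1,\dots,s_m)$ and, for each, evaluating the $|\overline{\mathbb{M}}|$ tests $D_r(KM^i)<n-E$; one terminates as soon as $\sum|B(i,E,r)|$ exceeds $nN(m,r)$. A crude size count of the slice together with the early-stopping bound gives $O(n\,r^m s'/\sigma)$ pairs $(K,i)$ in total, and multiplying by $|\overline{\mathbb{M}}|$ evaluations per pair and by the binary-search overhead yields the claimed $O(n\log(n)\,r^m s'|\overline{\mathbb{M}}|/\sigma)$.

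The main obstacle is the zero-count upper bound on $h$ in the correctness step: one has to handle possible cancellation among the leading terms of the $Q_iF^i$ and confirm that $\mathrm{lm}(h)$ still divides $\mathrm{lm}(Q_i)M^i$ for some $i$, and one must verify the divisibility-monotonicity of $D_r$. Everything else is bookkeeping: linear algebra for Step~1, a standard univariate-in-$Z$ root-finding subroutine for Step~2, and enumeration of a known-size set of monomials in the preparation step.
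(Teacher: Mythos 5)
Your overall route is the same as the paper's: count the interpolation constraints ($\bar{n}$ homogeneous equations in $\bar{n}+1$ unknowns) to get a non-zero $Q$ in ${\mathcal{O}}(\bar{n}^3)$, observe that a close codeword forces $n-E$ zeros of multiplicity $r$ on $h=Q(\vec{X},F(\vec{X}))$, invoke the zero bound $D_r$ to conclude $h=0$, and finish by factoring out $Z-F$. Two cosmetic differences: the paper extracts the linear $Z$-factors with Wu's root-finding algorithm rather than a Roth--Ruckenstein recursion (either is acceptable at the stated complexity), and the paper offers no proof at all of the preparation-step complexity, so your binary-search-plus-enumeration sketch goes beyond what the authors write; the monotonicity of the feasibility condition in $E$ that legitimises the binary search does hold, since every $B(i,E,r)$ shrinks as $E$ grows.

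The one genuine gap is exactly the point you flag as "the main obstacle," and your proposed resolution does not work as stated. You claim that $\mathrm{lm}(h)$ divides $\mathrm{lm}(Q_i)M^i$ for the index attaining the lexicographic maximum. Without cancellation this is fine: $\mathrm{lm}(h)=\mathrm{lm}(Q_j)\mathrm{lm}(F)^j$ and $\mathrm{lm}(F)$ divides some $M\in\overline{\mathbb{M}}$, so divisibility-monotonicity of $D_r$ (which does hold, by induction on the recursion in Definition~\ref{defD}) closes the argument. But if the top terms of the $Q_iF^i$ cancel, $\mathrm{lm}(h)$ is merely $\prec$-smaller than $\max_i\mathrm{lm}(Q_i)\mathrm{lm}(F)^i$, and for a lexicographic ordering "$\prec$-smaller" does not imply "divides" (e.g.\ $X_1X_2\prec X_1^2$ yet $X_1X_2\nmid X_1^2$), nor is $D_r$ monotone along $\prec$ --- a drop in the $X_1$-exponent compensated by a large $X_2$-exponent can increase $D_r$ dramatically. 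What survives cancellation is only that $\mathrm{lm}(h)\in\mathrm{Supp}(Q_kF^k)$ for some $k$, i.e.\ $\mathrm{lm}(h)=K\cdot N_1\cdots N_k$ with $K\in B(k,E,r)$ and $N_1,\dots,N_k\in{\mathbb{M}}$; each $N_j$ divides some border element, but possibly \emph{different} ones, whereas the definition of $B(k,E,r)$ only tests $D_r(KM^k)$ for a \emph{single} $M\in\overline{\mathbb{M}}$. So an additional argument is needed that $D_r(K\,N_1\cdots N_k)\leq\max_{M\in\overline{\mathbb{M}}}D_r(KM^k)$. To be fair, the paper's own proof hides precisely this step behind the phrase "by the definition of $B(i,E,r)$," so you have correctly located the weak joint; but your proof as written asserts a false divisibility statement rather than supplying the missing lemma.
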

\begin{proof} The interpolation problem corresponds to $\bar{n}$
  homogeneous linear equations in $\bar{n}+1$ unknowns. 
Hence, indeed a suitable $Q$ can be found in time
${\mathcal{O}}(\bar{n}^3)$. Now assume ${\mbox{Supp}}(F) \subseteq {\mathbb{M}}$  and that
${\mbox{dist}}_H({\mbox{ev}}_{\mathcal{S}}(F),\vec{r})\leq E$. Then $P_j$ is a zero of $Q(X_1, \ldots , X_m,F(X_1,
\ldots , X_m))$ of multiplicity at least $r$ for at least $n-E$ choices of $j$. By the
definition of $B(i,E,r)$ this can, however, only be the case if $Q(X_1, \ldots ,
X_m,F(X_1, \ldots , X_m))=0$. Therefore, $Z-F(X_1, \ldots ,X_m)$ is a
factor in $Q(X_1, \ldots ,X_m,Z)$. Finding linear factors of
polynomials in $({\mathbf{F}}_q[X_1, \ldots , X_m])[Z]$ can be done in
time ${\mathcal{O}}(\bar{n}^3)$ by applying Wu's algorithm
in~\cite{wu} (see \cite[p.\ 20]{PellikaanWu}).\end{proof}
Algorithm~\ref{thealgorithm} works for general codes
$E({\mathbb{M}},\mathcal{S})$ and for any of the three possible
choices of $D_r(X_1^{i_1} \cdots X_m^{i_m})$ as described prior to the
algorithm. In such a general setting it is impossible to say anything
reasonable regarding the decoding radius. The algorithm apparently
works best for not too large code dimensions. With this in mind we
restrict the analysis to optimal weighted Reed-Muller codes
${\mbox{RM}}(S_1, S_2,u,w_1,w_2)$ in region I. That is, we assume
$w_1=1$, $w_2=s_1/s_2$ and $u \leq s_1-s_1/s_2$. As the function
$D(i_1,i_2,r,s_1,s_2)$ is highly irregular and Proposition~\ref{protwovar}
contains four  quite different cases it seems impossible to perform the
analysis for other choices than $D(i_1,i_2)=(i_1s_2+i_2s_1)/r$ which
corresponds to the weakest version of the decoding algorithm.
\begin{proposition}\label{proradius}
Consider an optimal weighted Reed-Muller code
${\mbox{RM}}(S_1,S_2,u,w_1=1,w_2=s_1/s_2)$ with $s_2 |s_1$ and $u\leq s_1-s_1/s_2$ a
positive integer. When equipped with $D(i_1,i_2)=(i_1s_2+i_2s_1)/r$
the decoding radius of Algorithm~\ref{thealgorithm} is at least 
\begin{equation}
s_1s_2(1-\sqrt[3]{u/s_1}).\label{rad2}
\end{equation}
\end{proposition}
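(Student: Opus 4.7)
The key simplification in this case is that the function $D_r(X_1^{i_1}X_2^{i_2}) = (i_1 s_2 + i_2 s_1)/r$ is \emph{additive} on products of monomials, so $D_r(K \cdot M^i) = D_r(K) + i\,D_r(M)$. Moreover $D_r(X_1^{i_1}X_2^{i_2}) = (s_2/r)\bigl(i_1 + (s_1/s_2) i_2\bigr)$ depends only on the weighted degree, and since the defining bound on $\mathbb{M}$ is $i_1 + (s_1/s_2)i_2 \leq u$, one gets $\max_{M \in \overline{\mathbb{M}}} D_r(M) = u s_2/r$; equality is attained for $M = X_1^u$, which lies in $\overline{\mathbb{M}}$ because $u \leq s_1 - s_1/s_2 \leq s_1-1$. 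Consequently $B(i,E,r)$ consists of those $X_1^{k_1}X_2^{k_2} \in \Delta(r,s_1,s_2)$ satisfying $k_1 s_2 + k_2 s_1 < r(n-E) - iu s_2$.

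Set $\alpha = (n-E)/(s_1 s_2)$ and $\beta_i = r\alpha - iu/s_1$, so that the defining condition reads $k_1/s_1 + k_2/s_2 < \beta_i$. Since $\alpha < 1$ in our regime, $\beta_i < r$, which forces $\lfloor k_1/s_1\rfloor + \lfloor k_2/s_2\rfloor \leq k_1/s_1 + k_2/s_2 < r$; hence the membership condition for $\Delta(r,s_1,s_2)$ is automatic, and $|B(i,E,r)|$ is simply the number of lattice points in the triangle with vertices $(0,0)$, $(s_1\beta_i,0)$, $(0,s_2\beta_i)$. This count equals $(s_1 s_2/2)\beta_i^2$ up to a term linear in $\beta_i$.

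Letting $t$ be the largest $i$ with $\beta_i > 0$, that is $t \approx r\alpha s_1/u$, I would approximate the sum by an integral:
\begin{equation*}
\sum_{i=1}^t |B(i,E,r)| \;\approx\; \frac{s_1 s_2}{2}\int_0^t (r\alpha - xu/s_1)^2\,dx \;=\; \frac{s_1^2 s_2\, r^3 \alpha^3}{6u}.
\end{equation*}
Since $nN(2,r) = s_1 s_2 \binom{r+2}{3}\sim s_1 s_2 r^3/6$, the initial condition $\sum_{i=1}^t |B(i,E,r)| > nN(2,r)$ reduces, after division by $s_1 s_2 r^3/6$, to $s_1\alpha^3/u > 1$, i.e.\ $\alpha > \sqrt[3]{u/s_1}$. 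Choosing $\alpha$ strictly greater than $\sqrt[3]{u/s_1}$ therefore gives $E = s_1 s_2 (1-\alpha)$ correctable errors, matching the announced radius.

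The main obstacle is turning this asymptotic analysis into a rigorous finite-$r$ statement: the lattice-point count deviates from the triangle area by $O((s_1 + s_2)\beta_i)$, the Riemann-sum approximation of $\sum \beta_i^2$ carries an $O((r\alpha)^2)$ error, and $\binom{r+2}{3}$ differs from $r^3/6$ by $O(r^2)$. The standard way out is to let $r \to \infty$: whenever $\alpha > \sqrt[3]{u/s_1}$ the leading-order inequality $s_1\alpha^3 > u$ is strict, so the resulting $\Theta(r^3)$ gap eventually dominates all $O(s_1 s_2 r^2)$ errors, and for any such $\alpha$ the algorithm corrects at least $\lfloor s_1 s_2(1-\alpha)\rfloor$ errors for all sufficiently large $r$. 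Taking the supremum over admissible $\alpha$ gives the claimed bound $s_1 s_2(1-\sqrt[3]{u/s_1})$.
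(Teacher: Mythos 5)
Your proposal is correct and follows essentially the same route as the paper's proof: both reduce $B(i,E,r)$ to a weighted-degree condition (using that $D_r$ depends only on $i_1s_2+i_2s_1$ and that the maximum over $\overline{\mathbb{M}}$ is $us_2/r$), lower-bound the monomial count by the volume $\tfrac{s_1^2s_2(r\alpha)^3}{6u}$ of the resulting stack of triangles, compare with $nN(2,r)\sim s_1s_2r^3/6$ to obtain $\alpha^3>u/s_1$, and let $r\to\infty$. The only difference is bookkeeping: the paper parametrizes by $v=r\alpha s_1$ and uses exact strict inequalities in place of your explicit $O(\cdot)$ error terms, which is why it needs the limit $r\to\infty$ only to remove the factor $(1+1/r)(1+2/r)$ at the very end.
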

\begin{proof}
Let $v$ be divisible by $u$. The number of variables in the
interpolation polynomial when $t=\deg_Z Q$ is chosen to be $v/u$ is
lower bounded by
\begin{eqnarray}
&&\sum_{j=0}^{v/u-1}\big[
(v+1-ju)+\frac{1}{2}(v+1-ju)((v-ju)s_2/s_1-1)\big]\nonumber \\
&>&\frac{1}{2} \frac{s_2}{s_1}\sum_{i=1}^{v/u}(ui)^2 \geq
\frac{s_2}{s_1}\frac{v^3}{6u}. \nonumber
\end{eqnarray}
The number of equations is $s_1s_2r(r+1)(r+2)/6$ and therefore
$$v \geq \sqrt[3]{ur(r+1)(r+2)s_1^2}
$$
is a sufficient condition for the existence of an interpolation
polynomial. Assume
\begin{eqnarray}
&&E < s_1s_2\big(1-\sqrt[3]{u(1+1/r)(1+2/r)/s_1}\big)\nonumber \\
&\Downarrow \nonumber \\
&&E < s_1s_2-\frac{1}{r} s_2\sqrt[3]{ur(r+1)(r+2)s_1^2}.\nonumber 
\end{eqnarray}
Substituting $v=\sqrt[3]{ur(r+1)(r+2)s_1^2}$ we get $r(s_1s_2-E)
>vs_2$ which ensures that $Q(X_1,X_2(F(X_1,X_2))=0$ for any codeword
$\vec{c}={\mbox{ev}}_{\mathcal{S}}(F)$ within distance $E$ from
$\vec{r}$. Letting $r$ go to infinity finishes the proof.
\end{proof}
Comparing the decoding radii~(\ref{rad1}) and (\ref{rad2}) we conclude that
when $s_2$ is close to $q$ then the subfield subcode decoder is
superior. On the other hand when $s_2$ is much smaller than $q$ then
the decoding algorithm of the present section performs best. 
\begin{example}
In this example we investigate the performance of
Algorithm~\ref{thealgorithm} when applied to optimal weighted
Reed-Muller codes and Massey-Costello-Justesen codes coming from the
point ensembles 
${\mathcal{S}}=S_1\times S_2$ with $s_1=64$, $s_2=8$, and $s_1=256$,
$s_2=16$, respectively. Our findings are presented in
Table~\ref{tabcode1} and Table~\ref{tabcode2}, respectively. The
decoding capability is calculated for different choices of
$D_r(X_1^{i_1}X_2^{i_2})$ and different multiplicities $r$. The symbol $S$, $C$,
and $D$, respectively, corresponds to $D_r(X_1^{i_1}X_2^{i_2})$ being chosen as the
Schwartz-Zippel bound~(\ref{szbound}), the closed formulas of
Proposition~\ref{protwovar}, and the function $D(i_2,i_1,r,s_2,s_1)$,
respectively. The letter $W$ stands for optimal weighted Reed-Muller
code and $I$ means the Massey-Costello-Justesen code of the same
minimum distance. Further $u$ is the third argument in the notion
${\mbox{RM}}(S_1,S_2,u,w_1=1,w_2=s_1/s_2)$ and $d$ is the minimum
distance. $Sub$ stands for the estimated decoding radius~(\ref{eqtrknt}) of the
algorithm in Section~\ref{secsubdec} and $Dim$ is the dimension of the
code. For large values of $r$ the calculations regarding
$D(i_1,i_2,r,s_1,s_2)$ become quite heavy and have therefore not been
made. We can see from the tables that for the considered codes
Algorithm~\ref{thealgorithm} outperforms the subfield subcode approach
from Section~\ref{secsubdec}. In some cases it decodes much more than half the
minimum distance. It is apparent that the function
$D(i_1,i_2,r,s_1,s_2)$ as well as the closed formula
expressions of Proposition~\ref{protwovar} help bringing up the error correction
capability in comparison with the situation where the
Schwartz-Zippel bound~(\ref{szbound}) is used. It is clear that the small
gain in dimension by considering Massey-Costello-Justesen
codes rather than optimal weighted Reed-Muller codes comes with a
heavy price as Algorithm~\ref{thealgorithm} corrects much fewer
errors. By inspection the estimation of decoding radius from Proposition~\ref{proradius} seems to be quite
close to what is found by our computer experiments.
\begin{table}
\centering
    \newcommand{\cmidrulesA}{
        \cmidrule(lr){1-1}
        \cmidrule(lr){2-2}
        \cmidrule(lr){3-4}
        \cmidrule(lr){5-6}
        \cmidrule(lr){7-8}
        \cmidrule(lr){9-10}
        \cmidrule(lr){11-12}
        \cmidrule(lr){13-14}
    }
    \caption{Table of error correction capability for optimal weighted Reed-Muller codes and
    Massey-Costello-Justesen codes when $s_1=64$ and
    $s_2=8$.}
\label{tabcode1}
    \begin{tabular}{r r c c c c c c c c c c c c}
    \toprule
    \multicolumn{2}{r}{$u$/$d$}
    & 3 & 488 & 4 & 480 & 7 & 456 & 15 & 392 & 16 & 384 & 20 & 352 \\
    \cmidrule(lr){3-4}
    \cmidrule(lr){5-6}
    \cmidrule(lr){7-8}
    \cmidrule(lr){9-10}
    \cmidrule(lr){11-12}
    \cmidrule(lr){13-14}
    $r$ & Bound & W & I & W & I & W & I & W & I & W & I & W & I
    \\
    \cmidrulesA
    \multirow{3}{*}{2}
    & S
    & \multicolumn{2}{c}{267}
    & \multicolumn{2}{c}{243}
    & \multicolumn{2}{c}{191}
    & 103 & \phantom{0}95 & \phantom{0}95 & \phantom{0}87 & \phantom{0}67 & \phantom{0}59
    \\
    & C
    & \multicolumn{2}{c}{286}
    & \multicolumn{2}{c}{266}
    & \multicolumn{2}{c}{219}
    & 131 & 128 & 122 & 119 & \phantom{0}97 & \phantom{0}94
    \\
    & D
    & \multicolumn{2}{c}{298}
    & \multicolumn{2}{c}{277}
    & \multicolumn{2}{c}{228}
    & 135 & 131 & 121 & 119 & \phantom{0}99 & \phantom{0}95
    \\
    \cmidrulesA
    \multirow{3}{*}{3}
    & S
    & \multicolumn{2}{c}{287}
    & \multicolumn{2}{c}{263}
    & \multicolumn{2}{c}{213}
    & 130 & 122 & 122 & 117 & \phantom{0}95 & \phantom{0}90
    \\
    & C
    & \multicolumn{2}{c}{301}
    & \multicolumn{2}{c}{279}
    & \multicolumn{2}{c}{234}
    & 149 & 145 & 138 & 135 & 113 & 109
    \\
    & D
    & \multicolumn{2}{c}{319}
    & \multicolumn{2}{c}{298}
    & \multicolumn{2}{c}{255}
    & 177 & 175 & 161 & 160 & 139 & 135
    \\
    \cmidrulesA
    \multirow{3}{*}{4}
    & S
    & \multicolumn{2}{c}{295}
    & \multicolumn{2}{c}{273}
    & \multicolumn{2}{c}{225}
    & 145 & 139 & 139 & 131 & 111 & 105
    \\
    & C
    & \multicolumn{2}{c}{307}
    & \multicolumn{2}{c}{286}
    & \multicolumn{2}{c}{242}
    & 159 & 155 & 147 & 145 & 123 & 118
    \\
    & D
    & \multicolumn{2}{c}{328}
    & \multicolumn{2}{c}{311}
    & \multicolumn{2}{c}{269}
    & 196 & 195 & 181 & 181 & 160 & 159
    \\
    \cmidrulesA
    \multirow{2}{*}{9}
    & S
    & \multicolumn{2}{c}{312}
    & \multicolumn{2}{c}{292}
    & \multicolumn{2}{c}{247}
    & 173 & 166 & 166 & 159 & 140 & 134
    \\
    & C
    & \multicolumn{2}{c}{318}
    & \multicolumn{2}{c}{299}
    & \multicolumn{2}{c}{255}
    & 178 & 173 & 169 & 166 & 144 & 139
    \\
    \cmidrulesA
    \multirow{2}{*}{20}
    & S
    & \multicolumn{2}{c}{320}
    & \multicolumn{2}{c}{301}
    & \multicolumn{2}{c}{258}
    & 185 & 178 & 178 & 171 & 153 & 147
    \\
    & C
    & \multicolumn{2}{c}{323}
    & \multicolumn{2}{c}{304}
    & \multicolumn{2}{c}{262}
    & 188 & 182 & 180 & 175 & 155 & 149
    \\
    \midrule
    & Sub
    & \multicolumn{2}{c}{198}
    & \multicolumn{2}{c}{149}
    & \multicolumn{2}{c}{33}
    & \multicolumn{2}{c}{0}
    & \multicolumn{2}{c}{0}
    & \multicolumn{2}{c}{0}
    \\
    \cmidrule(lr){2-2}
    \cmidrule(lr){3-4}
    \cmidrule(lr){5-6}
    \cmidrule(lr){7-8}
    \cmidrule(lr){9-10}
    \cmidrule(lr){11-12}
    \cmidrule(lr){13-14}
    & $\floor{\frac{d-1}{2}}$
    & \multicolumn{2}{c}{243}
    & \multicolumn{2}{c}{239}
    & \multicolumn{2}{c}{227}
    & \multicolumn{2}{c}{195}
    & \multicolumn{2}{c}{191}
    & \multicolumn{2}{c}{175}
    \\
    \cmidrule(lr){2-2}
    \cmidrule(lr){3-4}
    \cmidrule(lr){5-6}
    \cmidrule(lr){7-8}
    \cmidrule(lr){9-10}
    \cmidrule(lr){11-12}
    \cmidrule(lr){13-14}
    & Dim
    & \multicolumn{2}{c}{4}
    & \multicolumn{2}{c}{5}
    & \multicolumn{2}{c}{8}
    & 24 & 25 & 27 & 28 & 39 & 41
    \\
    \bottomrule
    \end{tabular}
\end{table}

\newcommand{\cmidrulesB}{
    \cmidrule(lr){1-1}
    \cmidrule(lr){2-2}
    \cmidrule(l{0.5em}r{0.5em}){3-4}
    \cmidrule(l{0.0em}r{0.5em}){5-6}
    \cmidrule(l{0.0em}r{0.5em}){7-8}
    \cmidrule(l{0.0em}r{0.5em}){9-10}
    \cmidrule(l{0.0em}r{0.5em}){11-12}
    \cmidrule(l{0.0em}r{0.5em}){13-14}
}
\begin{table}
    \centering
    \caption{Table of error correction capability for optimal weighted Reed-Muller codes and
    Massey-Costello-Justesen codes when $s_1=256$ and
    $s_2=16$.}
    \label{tabcode2}
    \newcommand{\SP}{~~}
    \begin{tabular}{r r c@{\SP}c@{\SP}c@{\SP}c@{\SP}c@{\SP}c@{\SP}c@{\SP}c@{\SP}c@{\SP}c@{\SP}c@{\SP}c}
    \toprule
    \multicolumn{2}{r}{$u$/$d$}
    & 5 & 4016 & 8 & 3968 & 15 & 3856 & 31 & 3600 & 36 & 3620 & 55 & 3216 \\
    \cmidrule(l{0.5em}r{0.5em}){3-4}
    \cmidrule(l{0.0em}r{0.5em}){5-6}
    \cmidrule(l{0.0em}r{0.5em}){7-8}
    \cmidrule(l{0.0em}r{0.5em}){9-10}
    \cmidrule(l{0.0em}r{0.5em}){11-12}
    \cmidrule(l{0.0em}r{0.5em}){13-14}
    $r$ & Bound & W & I & W & I & W & I & W & I & W & I & W & I
    \\
    \cmidrulesB
    \multirow{3}{*}{2}
    & S
    & \multicolumn{2}{c}{2591}
    & \multicolumn{2}{c}{2335}
    & \multicolumn{2}{c}{1927}
    & 1359 & 1335 & 1231 & 1207 & \phantom{0}839 & \phantom{0}791
    \\
    & C
    & \multicolumn{2}{c}{2680}
    & \multicolumn{2}{c}{2456}
    & \multicolumn{2}{c}{2112}
    & 1565 & 1557 & 1392 & 1391 & 1022 & 1003
    \\
    & D
    & \multicolumn{2}{c}{2729}
    & \multicolumn{2}{c}{2504}
    & \multicolumn{2}{c}{2153}
    & 1589 & 1583 & 1411 & 1408 & 1035 & 1015
    \\
    \cmidrulesB
    \multirow{3}{*}{3}
    & S
    & \multicolumn{2}{c}{2714}
    & \multicolumn{2}{c}{2479}
    & \multicolumn{2}{c}{2106}
    & 1578 & 1551 & 1455 & 1434 & 1082 & 1034
    \\
    & C
    & \multicolumn{2}{c}{2790}
    & \multicolumn{2}{c}{2579}
    & \multicolumn{2}{c}{2240}
    & 1695 & 1684 & 1552 & 1547 & 1190 & 1167
    \\
    & D
    & \multicolumn{2}{c}{2861}
    & \multicolumn{2}{c}{2651}
    & \multicolumn{2}{c}{2326}
    & 1859 & 1855 & 1707 & 1706 & 1359 & 1351
    \\
    \cmidrulesB
    \multirow{2}{*}{4}
    & S
    & \multicolumn{2}{c}{2779}
    & \multicolumn{2}{c}{2555}
    & \multicolumn{2}{c}{2195}
    & 1691 & 1667 & 1575 & 1551 & 1211 & 1163
    \\
    & C
    & \multicolumn{2}{c}{2843}
    & \multicolumn{2}{c}{2635}
    & \multicolumn{2}{c}{2305}
    & 1782 & 1767 & 1638 & 1632 & 1284 & 1260
    \\
    \cmidrulesB
    \multirow{2}{*}{9}
    & S
    & \multicolumn{2}{c}{2894}
    & \multicolumn{2}{c}{2689}
    & \multicolumn{2}{c}{2362}
    & 1895 & 1871 & 1784 & 1763 & 1443 & 1367
    \\
    & C
    & \multicolumn{2}{c}{2928}
    & \multicolumn{2}{c}{2730}
    & \multicolumn{2}{c}{2415}
    & 1935 & 1919 & 1811 & 1804 & 1469 & 1442
    \\
    \cmidrulesB
    \multirow{2}{*}{20}
    & S
    & \multicolumn{2}{c}{2947}
    & \multicolumn{2}{c}{2751}
    & \multicolumn{2}{c}{2439}
    & 1988 & 1966 & 1882 & 1862 & 1551 & 1506
    \\
    & C
    & \multicolumn{2}{c}{2964}
    & \multicolumn{2}{c}{2772}
    & \multicolumn{2}{c}{2464}
    & 2007 & 1989 & 1894 & 1884 & 1562 & 1529
    \\
    \midrule
    & Sub
    & \multicolumn{2}{c}{1806}
    & \multicolumn{2}{c}{1199}
    & \multicolumn{2}{c}{130}
    & \multicolumn{2}{c}{0}
    & \multicolumn{2}{c}{0}
    & \multicolumn{2}{c}{0}
    \\
    \cmidrule(lr){2-2}
    \cmidrule(l{0.5em}r{0.5em}){3-4}
    \cmidrule(l{0.0em}r{0.5em}){5-6}
    \cmidrule(l{0.0em}r{0.5em}){7-8}
    \cmidrule(l{0.0em}r{0.5em}){9-10}
    \cmidrule(l{0.0em}r{0.5em}){11-12}
    \cmidrule(l{0.0em}r{0.5em}){13-14}
    & $\floor{\frac{d-1}{2}}$
    & \multicolumn{2}{c}{2007}
    & \multicolumn{2}{c}{1983}
    & \multicolumn{2}{c}{1927}
    & \multicolumn{2}{c}{1799}
    & \multicolumn{2}{c}{1759}
    & \multicolumn{2}{c}{1607}
    \\
    \cmidrule(lr){2-2}
    \cmidrule(l{0.5em}r{0.5em}){3-4}
    \cmidrule(l{0.0em}r{0.5em}){5-6}
    \cmidrule(l{0.0em}r{0.5em}){7-8}
    \cmidrule(l{0.0em}r{0.5em}){9-10}
    \cmidrule(l{0.0em}r{0.5em}){11-12}
    \cmidrule(l{0.0em}r{0.5em}){13-14}
    & Dim
    & \multicolumn{2}{c}{4}
    & \multicolumn{2}{c}{5}
    & \multicolumn{2}{c}{8}
    & 24 & 25 & 27 & 28 & 39 & 41
    \\
    \bottomrule
    \end{tabular}
\end{table}
\end{example}
\section{Conclusion remarks}\label{}
In this paper we have shown that weighted Reed-Muller codes are much
better than their reputation when defined over general point ensembles
${\mathcal{S}}=S_1\times \cdots \times S_m$. We treated in detail the
case $m=2$ and gave some results for $m>2$. It is a subject of future
studies to also  establish detailed information for the case $m>2$. We
derived two decoding algorithms that work well for different classes
of weighted Reed-Muller codes and affine variety codes
$E({\mathbb{M}},\mathcal{S})$ in general. For not too high dimensions 
these algorithms perform list decoding. For higher dimensions it is a
subject of future research to design list decoding algorithms. Using
the first algorithm in combination with some extra operations we
decoded the $[49,11,28]$ Joyner code beyond its minimum distance. It
is apparent that such an approach would work for other toric codes
coming from polytopes of the same shape.\\
This work was supported in part by Danish Natural Research Council
grant 272-07-0266. The authors gratefully acknowledge support from the
Danish National Research Foundation and the National Natural Science
Foundation of China (Grant No. xxx) for the Danish-Chinese Center for
Applications of Algebraic Geometry in Coding Theory and Cryptography. The authors would like to
thank Diego Ruano, Peter Beelen Tom H{\o}holdt, and Teo Mora for pleasant discussions. Also thanks
to L.\ Grubbe Nielsen for linguistic assistance.

\end{document}